\newcommand{\mbf}{\bm}
\newcommand{\mbb}{\mathbb}
\newcommand{\bs}{\bm}
\newcommand{\bkt}[1]{\left( #1\right)}
\newcommand{\inner}[1]{\left\langle #1\right\rangle}
\newcommand{\brac}[1]{\left\{ #1\right\}}
\newcommand{\abs}[1]{\left| #1\right|}
\newcommand{\Tf}{\mathcal{T}(\hat{f}\,)}
\newcommand{\Tfn}{\mathcal{T}(\hat{f}_n\,)}
\newcommand{\Tg}{\mathcal{T}(\hat{g}\,)}
\newcommand{\Txf}{\mathcal{T}_x(\hat{f}\,)}
\newcommand{\Tyf}{\mathcal{T}_y(\hat{f}\,)}
\newcommand{\ER}{\bm E_{\text{row}}}
\newcommand{\EL}{\bm E_{\text{col}}}
\newcommand{\GR}{\bm G}
\newtheorem{thm}{Theorem}
\newtheorem{lem}[thm]{Lemma}
\newtheorem{prop}[thm]{Proposition}
\newtheorem{defn}[thm]{Definition}
\newtheorem*{thm*}{Theorem}
\newtheorem*{rem*}{Remark}
\pgfplotsset{compat=newest} 
\pgfplotsset{plot coordinates/math parser=false} 
\pgfplotsset{ 
  legend style =
  {font=\small\sffamily},
  label style = {font=\footnotesize\sffamily},
	tick label style = {font=\footnotesize}
}
\newlength\figureheight 
\newlength\figurewidth 
\title{Convex recovery of continuous domain piecewise constant images from non-uniform Fourier samples}
\author{Greg Ongie,~\IEEEmembership{Member,~IEEE}, Sampurna Biswas,~\IEEEmembership{Student Member,~IEEE}, Mathews~Jacob*,~\IEEEmembership{Senior Member,~IEEE}
\thanks{
G. Ongie is with the Department of EECS, University of Michigan, Ann Arbor, MI 48108 USA. S. Biswas and M. Jacob are with the Department of Electrical and Computer Engineering, University of Iowa, Iowa City, IA, 52245 USA (e-mail: gongie@umich.edu; sampurna-baswas@uiowa.edu; mjacob@uiowa.edu)}
\thanks{This work is supported by grants NIH 1R01EB019961-01A1 and ONR N00014-13-1-0202.}}
\begin{document}
%
\maketitle
%

\begin{abstract}
We consider the recovery of a continuous domain piecewise constant image from its non-uniform Fourier samples using a convex matrix completion algorithm. We assume the discontinuities/edges of the image are localized to the zero level-set of a bandlimited function. This assumption induces linear dependencies between the Fourier coefficients of the image, which results in a two-fold block Toeplitz matrix constructed from the Fourier coefficients being low-rank. The proposed algorithm reformulates the recovery of the unknown Fourier coefficients as a structured low-rank matrix completion problem, where the nuclear norm of the matrix is minimized subject to structure and data constraints. We show that exact recovery is possible with high probability when the edge set of the image satisfies an incoherency property. We also show that the incoherency property is dependent on the geometry of the edge set curve, implying higher sampling burden for smaller curves. This paper generalizes recent work on the super-resolution recovery of isolated Diracs or signals with finite rate of innovation to the recovery of piecewise constant images. 

\end{abstract}
\begin{IEEEkeywords}
Off-the-Grid Image Recovery, Structured Low-Rank Matrix Completion, Finite Rate of Innovation.
\end{IEEEkeywords}

\section{Introduction}
\label{sec:intro}
The direct recovery of continuous domain signals by convex optimization is emerging as a powerful alternative to traditional discrete domain compressed sensing \cite{bhaskar2011atomic,candes2014towards,chen2014robust}. The ability of these continuous domain ``off-the-grid'' schemes to minimize discretization errors makes them attractive in practical applications, where only the low-pass measurements of the signal are available. The history of such continuous domain signal recovery algorithms dates back to Prony \cite{stoica1997introduction}, where the recovery of a linear combination of exponentials from uniform samples is considered. Prony-like algorithms recover the signal by estimating an annihilating polynomial whose zeros correspond to the frequencies of the exponentials. Work by Liang et al.~\cite{haacke1989super,haacke1989constrained} and the finite rate of innovation (FRI) framework \cite{vetterli2002sampling}
 extended Prony-like methods to recover more general signals that reduce to a sparse linear combination of Dirac delta functions under an appropriate transformation (e.g., differential operators, convolution). Recently, several authors have further extended FRI methods to recover such signals from their non-uniform Fourier samples \cite{chen2014robust,ALOHAarxiv,ALOHAisbi,haldar2014low,sampta2015} by exploiting the low-rank structure of an enhanced matrix built from Fourier data (e.g., a Hankel matrix in 1-D). Recovery guarantees exists for certain classes of these signals when the singularities are isolated and well-separated \cite{candes2014towards,chen2014robust,ye2016compressive}. 

The signal models discussed above have limited flexibility in exploiting the extensive additional structure present in multidimensional imaging problems. In particular, the edges in multidimensional images are connected and can be modeled as smooth curves or surfaces. While discrete image representations to capture this structure have been the subject extensive research \cite{starck2002curvelet,do2005contourlet}, similar continuous domain representations have attracted less attention. We recently introduced a novel framework recover piecewise polynomial images, whose edges are localized to smooth curves, from their uniform \cite{isbi2015,siam} and non-uniform \cite{sampta2015} Fourier samples; our framework generalizes a recent extension of FRI models to curves \cite{pan2013sampling}. We assume that the partial derivatives of the signal vanish outside the zero level-set of a bandlimited function, which is only true for piecewise smooth signals. This relation translates to a linear system of convolution equations involving the uniform Fourier samples of the partial derivatives, which can be compactly represented as the multiplication of a specific structured matrix with the Fourier coefficients of the bandlimited function. We have introduced theoretical guarantees for the recovery of such images from uniform samples \cite{siam,isbi2015}. Our earlier work has shown that the structured matrix built from the Fourier coefficients of piecewise constant images is low-rank \cite{sampta2015,siam}, which we used to recover the image from its non-uniform Fourier samples with good performance in practical applications. We have also introduced an computationally efficient algorithm termed as GIRAF, which works on the original signal samples rather than the structured high-dimensional matrix \cite{isbi2016,girafarxiv}; the computational complexity of this algorithm is comparable to discrete total variation regularization, which makes this scheme readily applicable to large-scale imaging problems, such as undersampled dynamic magnetic resonance image reconstruction \cite{balachandrasekaran16adm}. 

The main focus of the present paper is to introduce theoretical guarantees on the recovery of continuous domain piecewise constant images from \emph{non-uniform} Fourier samples via a convex structured low-rank matrix completion algorithm. Our main result shows number of non-uniform samples to recovery the image is proportional to the complexity of the edge set, as measured by the bandwidth of the edge set function, and an incoherence measure related to the edge set geometry. We additionally show that the recovery is robust to noise and model-mismatch.

The proof of the main result builds off of \cite{chen2014robust}, which proved similar recovery guarantees for the recovery of multi-dimensional isolated Diracs from non-uniform Fourier samples by minimizing the nuclear norm of an ``enhanced'' multi-level Hankel matrix. This work showed that the number of samples necessary for recovery depends the number of Diracs and on an incoherence measure of the signal, that can be defined solely in terms of the relative locations of the Diracs. 
However, the theory in \cite{chen2014robust} relies heavily on an explicit factorization of the enhanced matrix (e.g., Vandermonde factorization of a Hankel matrix in the 1-D case), which is only available when the number of singularities are isolated and finite. Since the singularities in the proposed class of piecewise constant images (i.e., the image edges) are not isolated nor finite, the recovery guarantees in \cite{chen2014robust} cannot be directly extended to our setting.
Instead, to achieve our result, we give a new characterization of the row and column spaces of the structured matrix arising in our setting.
We show this new characterization allows us to derive an incoherence measure based solely on geometric properties of the edge set. In particular, we derive an upper bound for the incoherence measure that is related to the size of edge set curve. The results show that high sampling burden is associated with the estimation of images with smaller piecewise constant regions, which is consistent with intuition. 

We note that the signal models in \cite{chen2014robust,bhaskar2011atomic,candes2014towards} do not include the class of piecewise constant images considered in this work. In particular, all of the above models assume the discontinuities to be finite in number and well separated, unlike in our setting. Recently, \cite{ye2016compressive} adapted the results in \cite{chen2014robust} to introduce recovery guarantees for Fourier interpolation of a variety of finite-rate-of-innovation signal models \cite{vetterli2002sampling}, including piecewise constant functions. However, these results are limited to the 1-D setting and share the assumption than the discontinuities/innovations of the signal are finite and isolated. Furthermore, the structured matrix lifting considered in this work is different than those considered in \cite{chen2014robust} and \cite{ye2016compressive}. Specifically, the structured matrix lifting in this work consists of two vertically concatenated multi-level Toeplitz matrices (i.e., block Toeplitz with Toeplitz blocks), whose entries are built from the weighted Fourier coefficients of the images. This is substantially different from the structured matrix liftings considered in \cite{chen2014robust} (unweighted, one block, single block multilevel Hankel) and \cite{ye2016compressive} (weighted, one block, single-level Hankel). Finally, we note that a preliminary version of the results presented in this has been published previously in the conference paper \cite{icip2016} without proofs. The present work includes considerably more details and proofs, and major improvements to the main theorem.

\subsection{Notation}
Bold lower-case letters $\mbf x$ are used to indicate vector quantities, bold upper-case $\mbf X$ to denote matrices, and calligraphic script $\mathcal{X}$ for general linear operators. We typically reserve lower-case greek letters $\mu, \gamma$, \emph{etc.}\ for trigonometric polynomials \eqref{eq:trigpoly} and upper-case greek letters $\Lambda, \Omega,$ \emph{etc}.\ for their coefficient index sets, i.e.\ finite subsets of the integer lattice $\mathbb{Z}^2$, with cardinality denoted by $|\Lambda|$. We write $\Lambda + \Omega$ for the dilation of the index set $\Omega$ by $\Lambda$, i.e.\ the Minkowski sum $\{\mbf k + \mbf \ell : \mbf k\in\Lambda,~\bs\ell \in \Omega\}$, and write $2\Lambda$ to mean $\Lambda+\Lambda$, $3\Lambda = 2\Lambda +\Lambda$, \emph{etc}. We also denote the contraction of $\Omega$ by $\Lambda$ by $\Omega\,{:}\,\Lambda = \{\mbf \ell \in \Omega\,{:}\, \mbf \ell - \mbf k \in \Omega \text{ for all } \mbf k \in \Lambda\}$.
\section{Background}
\begin{figure}
\centering
\includegraphics[width=0.45\textwidth]{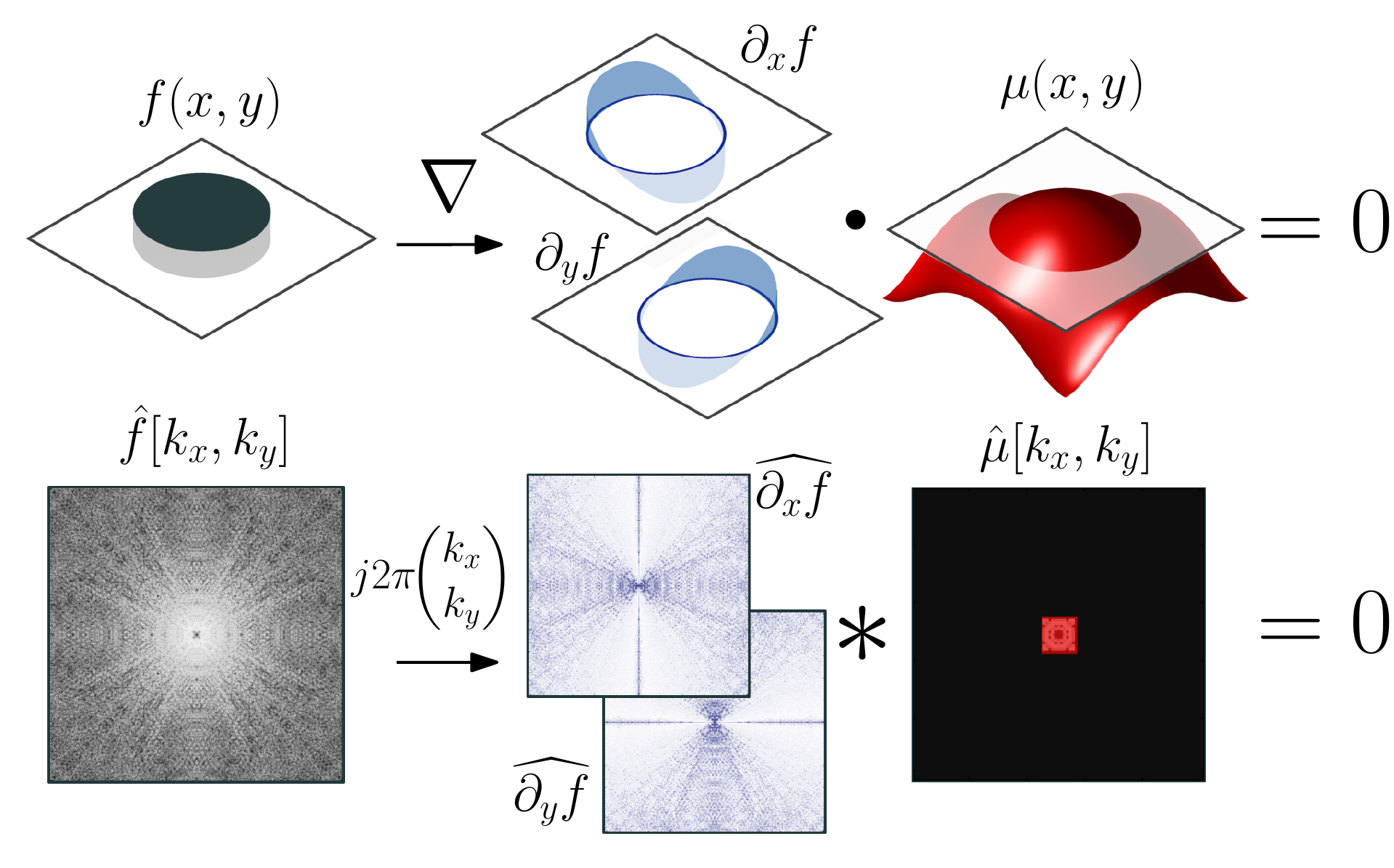}
\caption{Annihilation of a piecewise constant function as a multiplication in spatial domain (top) and as a convolution in Fourier domain (bottom). The partial derivatives of a piecewise constant function are supported on the edge set. If there is a bandlimited function $\mu$ that is zero along the edge set, then the spatial domain product of $\mu$ with the gradient $\nabla f = (\partial_x f,\partial_y f)$ is identically zero. In Fourier domain, this is equivalent to the annihilation of the arrays $j2\pi k_xf[k_x,k_y]$ and $j2\pi k_y f[k_x,k_y]$ by 2-D convolution with a finite filter determined by the Fourier coefficients $\hat \mu$.}
\label{illus}
\end{figure}
\subsection{2-D Piecewise Constant Images with Bandlimited Edges}
In this work we consider a continuous domain \emph{piecewise constant} model for images, 
\begin{equation}
\label{eq:pwc}
f(\bm r) = \sum_{i=1}^N a_i ~1_{U_i}(\bm r),~~\text{ for all }~~\bm r=(x,y) \in{[0,1]}^2,
\end{equation}
where $a_i \in \mathbb{C}$, $1_{U}$ denotes the characteristic function of the set $U$, and each $U_i \subset [0,1]^2$ is a simply connected regions with piecewise smooth boundaries $\partial U_i$. We study the recovery of such an image from a sampling of its Fourier coefficients $\hat f$ specified by
\begin{equation}
\hat f[\bm k] = \int_{[0,1]^2} f(\bm r) e^{-j 2 \pi \bm k\cdot \bm r}; ~\bm k \in \Omega \subset \mathbb{Z}^2.
\end{equation}
Following \cite{siam}, we further assume that the edge set of the piecewise constant image, specified by $E:=\cup_i \partial U_i$, coincides with the zero set of a 2-D bandlimited function:
\begin{equation}
E = \{\bm r \in [0,1]^2: \mu(\bm r) = 0\},~\text{with}~
\mu(\bm r) = \sum_{\bm k\in{ \Lambda}} c[\bm k]\, e^{j2\pi\bm k \cdot \bm r},
\label{eq:trigpoly}
\end{equation}
where the coefficients $c[\bm k]\in\mathbb{C}$, and ${\Lambda}$ is a finite subset of $\mathbb{Z}^2$. We call any function $\mu$ in the form \eqref{eq:trigpoly} a \emph{trigonometric polynomial}, and we say $\mu$ is bandlimited to $\Lambda$, i.e., the Fourier coefficients $\hat \mu$ are supported within $\Lambda$. For short, we will write $\{\mu= 0\}$ for the zero set of $\mu$ considered as a subset of $[0,1]^2$.

Define the \emph{degree} of a trigonometric polynomial $\mu$, denoted by $deg(\mu) = (K,L)$ to be the linear dimensions of the smallest rectangle containing the support set $\{\bm k : \hat\mu[\bm k]\neq 0\}$. In \cite{siam} we proved that for every curve $E$ given by the zero set of a trigonometric polynomial, there exists a unique minimal degree trigonometric polynomial\footnote{More precisely, $\mu_0$ is unique up to multiplication by a phase factor $e^{j2\pi \bm k \cdot \bm r}$ for some $\bm k \in \mathbb{Z}^2$.} $\mu_0$ such that $E = \{\mu_0=0\}$ and if $\mu$ is any other trigonometric polynomial with $\{\mu_0=0\} \subset \{\mu = 0\}$, then $deg(\mu_0) \leq deg(\mu)$ entrywise. By extension, we define the \emph{degree} of a curve $E$ to be equal to the degree of of its minimial degree polynomial $\mu_0$. We also say the curve $E$ is \emph{bandlimited} to $\Lambda_0\subset\mathbb{Z}^2$, where $\Lambda_0$ is the minimal rectangular index set containing the support of $\hat \mu$. Intuitively, the degree/bandwidth of a curve gives a quantitative measure of its complexity. For example, in \cite{siam} we show the number of connected components of a curve is bounded in terms of its degree.
\subsection{Recovery from uniform Fourier samples}
We have shown in \cite{siam} that when $\mu$ is any trigonometric polynomial that vanishes on the edge set of the piecewise constant image $f$, the gradient $\boldsymbol\nabla f = \left(\partial_x f, \partial_y f\right)$ satisfies the property 
\begin{equation}
\mu \boldsymbol\nabla f = 0,
\label{eq:spacedom2}
\end{equation}
where equality in \eqref{eq:spacedom2} is understood in the sense of distributions (see, e.g., \cite{strichartz03}). 
The spatial domain annihilation relation \eqref{eq:spacedom2} translates directly to the following convolution annihilation relation in Fourier domain:
\begin{equation}
\sum_{\bm k \in {\Lambda}} \widehat{\boldsymbol\nabla f}[\bm\ell - \bm k]\;  \widehat{\mu}[\bm k]= \bm 0, 
   ~~\forall ~\bm \ell\in \mathbb{Z}^2.
   \label{eq:annsys}
\end{equation}
Here $\widehat{\boldsymbol\nabla f}[\bm k] = j2\pi(k_x \widehat{f}[\bm k],k_y \widehat{f}[\bm k])$ for $\bm k=(k_x,k_y)$. Note the equations in \eqref{eq:annsys} are linear with respect to the coefficients $\widehat\mu$.

Suppose we have access to samples of the Fourier coefficients $\hat f$ on a finite rectangular grid $\Gamma \subset \mathbb{Z}^2$, and suppose $\mu$ is bandlimited to $\Lambda_1 \subset \Gamma$. Then we can build the system of equations in \eqref{eq:annsys} for all $\bm \ell$ belonging to the index set $\Lambda_2\subset \Gamma$, where $\Lambda_2$ is the set of all integer shifts of $\Lambda_1$ contained in $\Gamma$. In this case \eqref{eq:annsys} can be compactly represented in matrix form as
\begin{equation}
\Tf \bm h=
\begin{bmatrix}
\Txf\\
\Tyf
\end{bmatrix} \bm h = \bm 0 ,
\label{eq:annmatrix}
\end{equation}
where $\Txf,\Tyf \in \mathbb C^{|\Lambda_2| \times |\Lambda_1|}$ are matrices corresponding to the discrete 2-D convolution with the arrays $k_x \widehat f[k_x,k_y]$ and $k_x \widehat f[k_x,k_y]$ for $(k_x,k_y) \in \Gamma$, respectively (after omitting the inconsequential factor $j2\pi$). Here we use $\bm h$ to denote the vectorized version of the filter $(\hat \mu[\bm k]: \bm k \in \Lambda_1)$, where the index set $\Lambda_1$ is called the \emph{filter support}. The matrices $\Txf$ and $\Tyf$ have a block Toeplitz with Toeplitz blocks structure. See Figure \ref{fig:lifting} for an illustration of the construction of $\Tf$. 

\begin{figure}[ht!]
\centering
\includegraphics[width=\columnwidth]{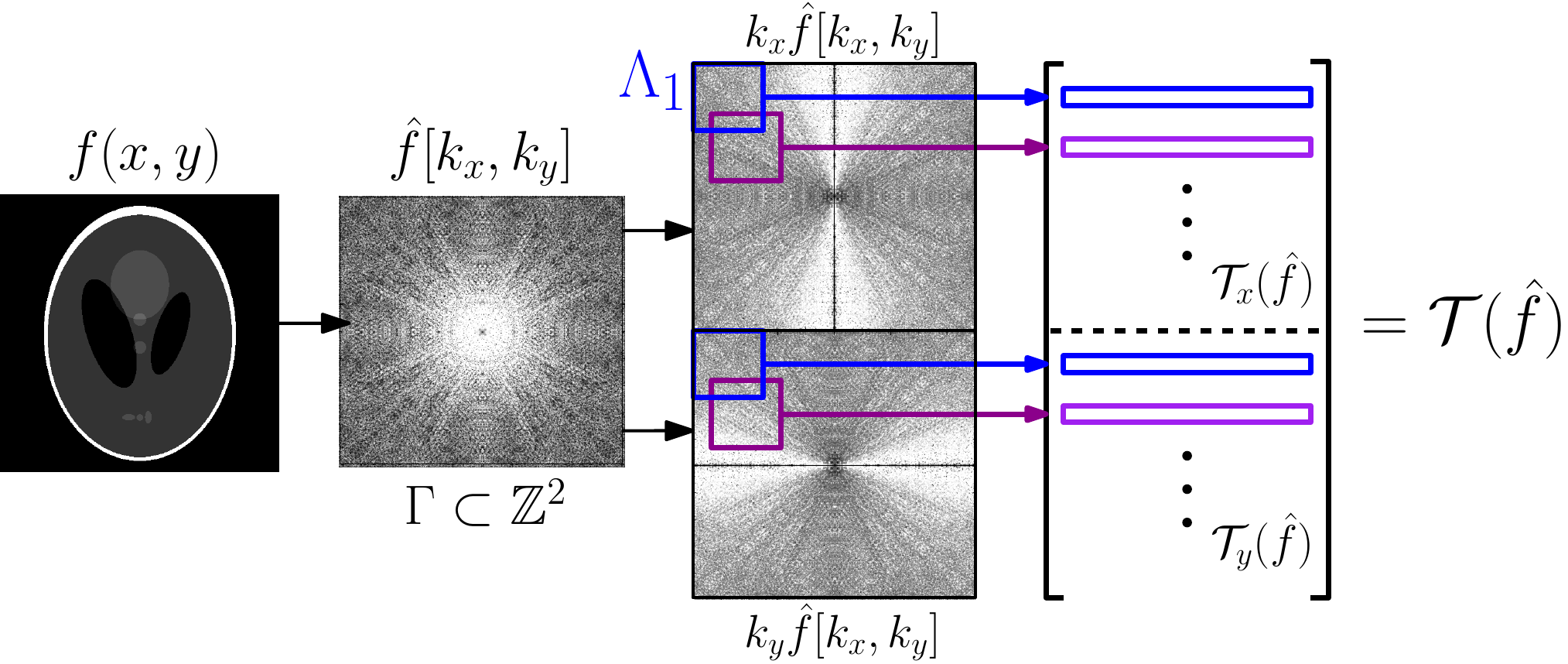}
\caption{Construction of the structured matrix lifting $\Tf$ considered in this work. From a rectangular array of the Fourier coefficients $\hat f [k_x,k_y]$ of a continuous domain image $f(x,y)$, the weighted arrays $k_x\hat f [k_x,k_y]$ and $k_y\hat f [k_x,k_y]$ are constructed. The matrices $\Txf$ and $\Tyf$ are then obtained by extracting all vectorized patches from the weighted arrays, and loading these into the rows of $\Txf$ and $\Tyf$. The resulting matrices $\Txf$ and $\Tyf$ have a block Toeplitz with Toeplitz blocks structure. Finally $\Tf$ is formed by vertically concatenating the blocks $\Txf$ and $\Tyf$.}
\label{fig:lifting}
\end{figure}

Equation \eqref{eq:annmatrix} shows that $\Tf$ is rank deficient, since it has the non-trivial vector $\bm h$ in its nullspace.
In addition, when the filter support $\Lambda_1$ defining $\Tf$ is sufficiently big, we can also show $\Tf$ is low-rank. This is because if $\mu_0$ is the minimal degree polynomial for the edge set, then \emph{any} multiple of $\mu = \gamma\cdot \mu_0$ bandlimited to $\Lambda_1$ will satisfy the annihilation equation \eqref{eq:spacedom2}. In Fourier domain, this means the vector 
\begin{equation}\label{eq:nullspace}
\bm h = ((\hat\mu_0 \ast \hat \gamma)[\bm k] : \bm k\in\Lambda_1) 
\end{equation}
is in the nullspace of $\Tf$. Hence if the filter support $\Lambda_1$ is larger than support $\Lambda_0$ of $\mu_0$, $\Tf$ has a large nullspace and is low-rank. The following result from \cite{siam} gives an exact characterization of the rank of $\Tf$, which will be important for this work:


\begin{thm}\label{prop:rank}\cite{siam} Suppose $f$ is a piecewise constant image \eqref{eq:pwc} whose edge set $E=\{\mu_0 = 0\}$ is the zero set of a trigonometric polynomial $\mu_0$ bandlimited to $\Lambda_0$. Let $\Tf$ be built with filter size $\Lambda_1\supseteq \Lambda_0$, then
\begin{equation}\label{eq:rank}
\text{rank}~\Tf \leq |\Lambda_1|-|\Lambda_1\colon\Lambda_0|
\end{equation}
where $|\Lambda_1|$ is the number of indices in $\Lambda_1$ and $|\Lambda_1\colon\Lambda_0|$ is the number of integer shifts of $\Lambda_0$ contained in $\Lambda_1$. Moreover, equality holds in \eqref{eq:rank} if $\Gamma \supseteq 2\Lambda_1 + \Lambda_0$ and if the edge set does not contain any singular points. In this case, the nullspace of $\Tf$ consists of all vectors in the form \eqref{eq:nullspace}.
\end{thm}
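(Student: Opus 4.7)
The plan is to prove the upper bound and the equality statement separately, exploiting the multiplication--convolution duality behind \eqref{eq:annmatrix}.

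For the upper bound, I will exhibit an explicit subspace of $\ker\Tf$ of dimension $|\Lambda_1\colon\Lambda_0|$. For any trigonometric polynomial $\gamma$ whose Fourier support lies in the contraction $\Lambda_1\colon\Lambda_0$, the product $\mu=\gamma\mu_0$ is bandlimited to $\Lambda_1$ and satisfies $\mu\,\nabla f = \gamma(\mu_0\,\nabla f)=0$ as a distribution; hence $\hat\mu*\widehat{\nabla f}$ vanishes at every lattice point, in particular on $\Lambda_2$, so $\bm h := (\hat\mu_0*\hat\gamma)|_{\Lambda_1}\in\ker\Tf$. The map $\hat\gamma\mapsto(\hat\mu_0*\hat\gamma)|_{\Lambda_1}$ is linear and injective (finite-support sequences form an integral domain under convolution), so its image has dimension $|\Lambda_1\colon\Lambda_0|$, giving $\dim\ker\Tf\ge |\Lambda_1\colon\Lambda_0|$ and the desired bound on $\text{rank}\,\Tf$.

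For the equality case under $\Gamma\supseteq 2\Lambda_1+\Lambda_0$ and nonsingular $E$, I will show the reverse containment: every $\bm h\in\ker\Tf$ has the form \eqref{eq:nullspace}. Let $\mu=\sum_{\bm k\in\Lambda_1}\bm h[\bm k]\,e^{j2\pi\bm k\cdot \bm r}$ and set $w:=\mu\,\nabla f$. Since $\mu_0\,\nabla f=0$, multiplying by $\mu$ gives $\mu_0 w=0$, so $\hat\mu_0*\hat w=0$ on all of $\mathbb{Z}^2$. The assumption on $\Gamma$ translates (up to inconsequential translation) to $\Lambda_2\supseteq\Lambda_1+\Lambda_0$, and the hypothesis $\bm h\in\ker\Tf$ gives $\hat w|_{\Lambda_2}=0$. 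I then propagate outward: pick a nonzero extremal corner coefficient of $\hat\mu_0$, which exists in each corner of $\Lambda_0$ by the minimality of its bounding rectangle, and solve the recurrence index-by-index to extend vanishing of $\hat w$ into each of the four surrounding quadrants. Iterating yields $\hat w\equiv 0$, hence $\mu\,\nabla f=0$ as a distribution. The no-singular-point assumption then ensures $\nabla f$ is a vector-valued surface measure along the smooth $1$-manifold $E$ with nonvanishing normal weight on each arc, so $\mu$ must vanish pointwise on $E$. I finally invoke the minimal-polynomial characterization of \cite{siam}: any trigonometric polynomial vanishing on $E=\{\mu_0=0\}$ is a multiple of $\mu_0$, and the bandwidth constraint forces $\mu=\gamma\mu_0$ with $\hat\gamma$ supported in $\Lambda_1\colon\Lambda_0$.

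The main obstacle is the recurrence-propagation step: extending $\hat w\equiv 0$ from the seed block $\Lambda_1+\Lambda_0$ to all of $\mathbb{Z}^2$ requires verifying that the minimal rectangle $\Lambda_0$ admits nonzero coefficients at corner positions enabling propagation in every direction, and carefully bookkeeping which indices become determined at each step. A secondary delicate point is the passage from the distributional identity $\mu\,\nabla f=0$ to pointwise vanishing of $\mu$ on $E$; the no-singular-point hypothesis is essential here to rule out situations in which $\nabla f$ concentrates on a lower-dimensional subset of $E$ where $\mu$ could remain nonzero.
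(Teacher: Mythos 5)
This theorem is not proved in the present paper: it is quoted from \cite{siam}, and the only in-text justification given here is the easy inclusion showing that every filter of the form \eqref{eq:nullspace} lies in the nullspace. Your first half reproduces exactly that argument and is correct: injectivity of $\hat\gamma\mapsto\hat\mu_0\ast\hat\gamma$ holds because Laurent polynomials form an integral domain, so $\dim\ker\Tf\geq|\Lambda_1\colon\Lambda_0|$ and the rank bound follows.

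The equality direction, however, contains a genuine gap, and it is more serious than the bookkeeping issue you flagged. First, minimality of the bounding rectangle $\Lambda_0$ does \emph{not} guarantee nonzero corner coefficients of $\hat\mu_0$; it only guarantees a nonzero entry somewhere in each extreme row and column (e.g., a filter supported on a diamond inside a $3\times 3$ rectangle has all four corners equal to zero), and your one-unknown-at-a-time recurrence needs a nonzero corner to isolate a single new index of $\hat w$. Second, and fatally, the two facts you feed into the propagation---$\hat\mu_0\ast\hat w=0$ on all of $\mathbb{Z}^2$ together with $\hat w|_{\Lambda_2}=0$---cannot imply $\hat w\equiv 0$ by any bookkeeping. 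The identity $\mu_0 w=0$ holds automatically for \emph{every} measure $w$ supported on the curve $E$, simply because $\mu_0$ vanishes there; since the space of measures on a curve is infinite-dimensional while $\hat w|_{\Lambda_2}=0$ imposes only finitely many linear constraints, there exist infinitely many nonzero $w$ satisfying both hypotheses. This is precisely where the two-dimensional problem departs from 1-D Prony (where $E$ is a finite point set, the space of measures on it is finite-dimensional, and the recurrence argument does close), and it is the reason the paper stresses that Vandermonde-factorization arguments do not transfer. A correct proof must exploit the additional structure that the density of $w=\mu\,\nabla f$ along $E$ is $\mu$ times the jump amplitudes and the unit normal---a trigonometric polynomial of controlled degree restricted to the curve---for instance by testing $\mu\,\nabla f$ against functions of the form $\mu\,\phi$ and invoking a B\'ezout/BKK-type zero count as in Lemma \ref{lem:bkk}; this is the route taken in \cite{siam}. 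Your concluding steps (using nonsingularity of $E$ to pass from $\mu\,\nabla f=0$ to $\mu|_E=0$, then divisibility by the minimal polynomial) are fine once that middle step is repaired.
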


Note that $R:=|\Lambda_1|-|\Lambda_1\colon\Lambda_0|$ is a measure of the bandwidth of $\mu_0$ and hence is indicative of the complexity of the edge set curve $E = \{\mu_0=0\}$. In the remainder of this work we assume the conditions in Theorem \ref{prop:rank} that guarantee the equality $\text{rank}~\Tf = R$ holds, in particular $\Gamma \supseteq 2\Lambda_1 + \Lambda_0$.

If we take $\Lambda_1 = \Lambda_0$, the above result shows Fourier samples of $\hat f$ in $\Gamma \supseteq 3\Lambda_0$ is sufficient for the recovery of the minimal degree polynomial $\mu_0$, since in this case $\hat{\mu}_0$ can be identified as the unique non-trivial nullspace vector of $\Tf$. The following theorem states that once $\mu_0$ is available, $f$ is the unique solution to the annihilation equations \eqref{eq:spacedom2} and \eqref{eq:annsys}:
\begin{thm}\cite{siam}.
\label{thm:uniqueamplitudes}
Suppose $f$ is a piecewise constant image \eqref{eq:pwc} whose edge set $E=\{\mu_0 = 0\}$ is the zero set of a trigonometric polynomial $\mu_0$ bandlimited to $\Lambda_0$. Suppose the Fourier sampling set $\Gamma \supseteq \Lambda_0$. If $g \in L^1([0,1]^2)$ satisfies
\begin{equation}
\mu_0\, \nabla g = 0~\text{subject to}~\hat{g}[\mbf k] = \hat{f}[\mbf k]~\text{for all}~\mbf k \in {\Gamma},
\label{eq:infiniterecovery}
\end{equation}
then $g = f$ almost everywhere.
\end{thm}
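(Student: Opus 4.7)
The proof hinges on studying the difference $h := g - f$ and showing that $h = 0$ almost everywhere. Because the edge set of $f$ is contained in $E = \{\mu_0 = 0\}$, the distribution $\boldsymbol\nabla f$ is supported on $E$, and so $\mu_0\,\boldsymbol\nabla f = 0$ holds automatically (a smooth function vanishing on the support of a distribution annihilates it under multiplication). Combined with the hypothesis $\mu_0\,\boldsymbol\nabla g = 0$, one obtains
\[
\mu_0\,\boldsymbol\nabla h = 0 \quad \text{and} \quad \hat h[\mbf k] = 0 \text{ for all } \mbf k \in \Gamma \supseteq \Lambda_0.
\]
On the open set $U := [0,1]^2 \setminus E$ the function $\mu_0$ is smooth and nowhere zero, so distributional division yields $\boldsymbol\nabla h = 0$ on $U$. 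Hence $h$ is almost-everywhere constant on each connected component $V_j$ of $U$, and we may write $h = \sum_j c_j\, 1_{V_j}$.

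The remaining task is to show that $\hat h$ vanishes identically, which will force $h = 0$ in $L^1$ and therefore $g = f$ a.e. The distributional identity $\mu_0\,\boldsymbol\nabla h = 0$ translates via Fourier transform into the pair of convolutive annihilation relations
\[
\sum_{\mbf k' \in \Lambda_0} \hat\mu_0[\mbf k']\,(\ell_\alpha - k'_\alpha)\,\hat h[\bs\ell - \mbf k'] = 0, \quad \forall\,\bs\ell \in \mbb Z^2,~ \alpha \in \{x,y\}.
\]
My plan is to treat these as two-sided recurrence relations that propagate the prescribed zeros $\hat h|_\Gamma = 0$ outward to every frequency in $\mbb Z^2$. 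The minimality of $\Lambda_0$ as the smallest bounding rectangle of $\text{supp}\,\hat\mu_0$ guarantees that each of its four sides contains at least one nonzero coefficient of $\hat\mu_0$, which provides the leverage needed to solve for $\hat h[\bs\ell]$ at frequencies just outside $\Lambda_0$ in terms of values already known to vanish. Iterating this propagation in each of the four coordinate directions then forces $\hat h \equiv 0$.

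The principal obstacle is executing the propagation rigorously in the 2D setting. In 1D, the minimality of the bounding interval immediately supplies nonvanishing leading and trailing coefficients, yielding a clean one-variable recurrence. In 2D, by contrast, the \emph{corner} entries of the bounding rectangle $\Lambda_0$ may be zero, so a corner-based recursion is not directly available and each single annihilation equation underdetermines $\hat h$ along an entire row or column outside $\Lambda_0$. To handle this, one exploits the fact that $(\mu_0\,\gamma)\,\boldsymbol\nabla h = \gamma\,(\mu_0\,\boldsymbol\nabla h) = 0$ continues to hold for every trigonometric polynomial $\gamma$, which produces a much richer family of annihilation equations whose filters are $\hat\mu_0 \ast \hat\gamma$. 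Choosing $\gamma$ so that $\mu_0\,\gamma$ has nonvanishing corner coefficients, and combining the resulting $x$- and $y$-annihilation equations along the sides of the enlarged support, reduces the problem to a 1D-style recursion along each axis and closes the propagation argument.
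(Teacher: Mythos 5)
Note first that the paper does not prove this theorem: it is quoted from \cite{siam}, so there is no in-document argument to compare yours against. On its own terms, your first half is essentially right. The reduction to $h=g-f$ with $\mu_0\nabla h=0$ and $\hat h|_\Gamma=0$, and the conclusion that $h=\sum_j c_j 1_{V_j}$ on the connected components $V_j$ of $[0,1]^2\setminus E$, are correct (though your justification of $\mu_0\nabla f=0$ is loose: a smooth function vanishing on the support of a distribution need not annihilate it, e.g.\ $x\,\delta_0'=-\delta_0$; what saves you is that $\nabla f$ is a vector \emph{measure} on the piecewise smooth edge set). The entire content of the theorem is therefore the final step: that the finitely many conditions $\hat h[\mbf k]=0$, $\mbf k\in\Lambda_0$, kill all the constants $c_j$, i.e., that the vectors $(\widehat{1_{V_j}}[\mbf k])_{\mbf k\in\Lambda_0}$ are linearly independent. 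This is where your argument has a genuine gap.

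The propagation scheme you outline does not close. The fix you propose for the vanishing-corner obstruction is self-defeating: replacing $\mu_0$ by $\mu_0\gamma$ replaces the filter by $\hat\mu_0\ast\hat\gamma$, supported on $\Lambda_0+\Lambda_\gamma$, so each annihilation equation now windows $\hat h$ over a translate of a strictly larger set; to eliminate one new unknown you must already know $\hat h$ on all but one point of that window, and the hypothesis only seeds $\Gamma\supseteq\Lambda_0$ (in the extremal case $\Gamma=\Lambda_0$, too small to start the enlarged recursion). Moreover, a ``1D-style recursion along each axis'' cannot work direction by direction: the $x$-relations alone say only that $u[\mbf m]=m_x\hat h[\mbf m]$ is annihilated by convolution with $\hat\mu_0$, and that system together with $u|_{\Lambda_0}=0$ has an infinite-dimensional solution set (the Fourier coefficients of any measure supported on $E$ and orthogonal to the exponentials indexed by $\Lambda_0$), so nothing propagates along a single axis. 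All of the rigidity must come from coupling the $x$- and $y$-systems through the common $\hat h$ at every step, and whether the resulting local linear systems are uniquely solvable for arbitrary support patterns of $\hat\mu_0$ inside its bounding rectangle is precisely the unproved crux. In short, you have correctly reduced the theorem to a statement that you then only assert; a complete proof has to re-engage the piecewise-constant/$L^1$ structure of $h$ (or, equivalently, carry out the coupled elimination in full generality), which your sketch does not do.
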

In principle, this result allows us to solve for the amplitudes of regions of the piecewise constant function $f$ by plugging in the known $\mu_0$ into the equation \eqref{eq:infiniterecovery} and solving a linear system, similar to Prony's method. However, for complicated piecewise constant images with many regions, it may be more practical to use the approximations introduced in \cite{siam}.

\section{Recovery from non-uniform Fourier samples}
The theory presented in Section \ref{sec:intro} shows that the exact recovery of a continuous domain piecewise constant image with a bandlimited edge set is possible when we collect Fourier samples of the image on a sufficiently large uniform grid in Fourier domain. However, the recovery procedure breaks down when we have non-uniform or missing samples, which is often the case in practical settings, e.g., compressed sensing MRI \cite{lustig2007sparse}. Therefore, we propose and analyze a method to interpolate the missing samples to a uniform grid in Fourier domain, which guarantees full recovery of the image in spatial domain.

Recall that Theorem \ref{prop:rank} says that the structured matrix $\Tf$ built from the Fourier coefficients $\hat f[\bm k],\bm k \in \Gamma$, where $\Gamma \subset \mathbb{Z}^2$ is a uniform rectangular grid, is known to be low-rank precisely when $f$ is a piecewise constant image with a bandlimited edge set. Hence we propose to recover $\hat f[\bm k],\bm k \in \Gamma$ from its samples at non-uniform locations $\Omega \subset \Gamma$ as the solution to the convex matrix completion problem:
\begin{equation}
\label{nucnorm}
\min_{\hat g[\bm k], \bm k\in\Gamma}~ \|\mathcal{T}(\hat{g})\|_* ~\mbox{subject to}~ \hat{g}[\bm k] =  \hat{f}[\bm k]~\text{for all}~\bm k\in \Omega
\end{equation}
where $\|\cdot\|_*$ denotes the nuclear norm, i.e., the sum of the singular values of a matrix, which is the convex relation of the rank functional. Note that \eqref{nucnorm} is different than the standard low-rank matrix completion setting studied in \cite{Candes2012,gross} in that the low-rank matrix $\Tf$ is structured and parameterized by the coefficient vector $\hat f$. Similar structured low-rank matrix completion schemes have been proposed for the recovery of signals from non-uniform Fourier samples \cite{chen2014robust,ye2016compressive} and used with empirical success in MRI applications \cite{sake,haldar2014low,sampta2015}.
The main focus of this paper is to determine the sufficient number of samples that will ensure exact recovery of the Fourier coefficients of $f$ on the reconstruction grid $\Gamma$ with high probability.



\subsection{Role of incoherence}

Several authors have shown that the sufficient number of samples for low-rank matrix recovery by nuclear norm minimization to succeed is dependent on the \emph{incoherence} of the sampling basis with respect to the matrix to be to be recovered \cite{gross,chen2014robust}. Similarly, our results depend on an incoherence measure derived from the structure of the matrix $\Tf$ and properties of the piecewise constant image $f$. In particular, define $\mathcal P_U$ and $\mathcal P_V$ to be the orthogonal projections onto the column space and row space of $\Tf$, respectively, i.e.,\ if $\Tf = \bm U\bm \Sigma\bm V^*$ is the rank-$R$ singular value decomposition then $\mathcal P_U \bm X = \bm U\bm U^* \bm X$, $\mathcal P_V \bm X = \bm X \bm V\bm V^*$. In Appendix B, we show that the structured matrix $\Tf$ can be expanded using orthonormal basis of matrices $\mathbf A_k$ such that
 \begin{equation}
 	\Tf = \sum_{\bm k\in\Gamma/\{0\}} \hat f[\bm k] w[\bm k] \bm A_{\bm k}
 \end{equation}
 where $w[\bm k],\bm k\in\Gamma/\{0\}$ are a set of positive weights that do not depend on $\hat f$. Similar to results in \cite{gross,chen2014robust,ye2016compressive}, we prove that nuclear norm minimization \eqref{nucnorm} recovers the exact low-rank solution with high probability provided we can uniformly bound the norms of the projections of the sampling basis matrices $\bm A_{\bm k}$ onto the row and column spaces of $\Tf$:

\begin{prop}\label{prop:coherence}
Consider $\Tf$ of rank $R$ corresponding to a piecewise constant function $f$ whose edge set coincides with the zero set of $\mu_0$, let $\rho$ be the incoherency measure associated with $\mu_0$ to be defined in the sequel, and set $c_s = |\Gamma|/|\Lambda_1|$. Then we have
\begin{align}
\label{coherence2}
\max_{\bm k \in \Gamma} ~\|\mathcal P_{U}\mathbf A_{\bm k}\|_F^2 &\leq \frac{\rho\,R\,c_s}{|\Gamma|},\\
\max_{\bm k \in \Gamma}\|\mathcal P_{V}\mathbf A_{\bm k} \|_F^2 &\leq  \frac{\rho\,R\,c_s}{|\Gamma|}
\end{align}
\end{prop}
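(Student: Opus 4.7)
The plan is to leverage explicit characterizations of the row and column spaces of $\Tf$ together with a concrete expression for the sampling basis $\mathbf A_{\bm k}$. By Theorem \ref{prop:rank}, the nullspace of $\Tf$ is exactly the span of vectors $(\hat\mu_0 \ast \hat\gamma)|_{\Lambda_1}$ with $\gamma$ bandlimited to $\Lambda_1\colon\Lambda_0$, so the row space is its orthogonal complement in $\mathbb C^{|\Lambda_1|}$, described algebraically by convolution with $\hat\mu_0$. For the column space I would exploit the block structure $\Tf = [\Txf;\,\Tyf]$ together with the annihilation $\mu\,\nabla f = 0$: each row of the two blocks is a windowed slice of the weighted Fourier arrays $k_x\hat f$ and $k_y\hat f$, and by Fourier--spatial duality these rows admit a parametrization through the gradient distribution $\nabla f$, which is supported on the edge curve $E=\{\mu_0=0\}$. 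This produces a finite-dimensional column-space basis indexed by trigonometric polynomials evaluated on $E$.

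Given these two subspace descriptions, I would then write the sampling basis $\mathbf A_{\bm k}$ explicitly: since $\Tf$ depends linearly on $\hat f$ and the block-Toeplitz structure repeats each coefficient $\hat f[\bm k]$ a fixed number of times determined by $\Lambda_1$ and $\Gamma$, $\mathbf A_{\bm k}$ is (after the normalization fixed in Appendix B) the indicator of those replicated positions, with the differentiation factors $2\pi k_x,\,2\pi k_y$ absorbed into the weight $w[\bm k]$. Under this description, the squared Frobenius norms $\|\mathcal P_U \mathbf A_{\bm k}\|_F^2$ and $\|\mathcal P_V \mathbf A_{\bm k}\|_F^2$ collapse to diagonal entries of the orthogonal projection operators onto the column/row spaces, evaluated at the frequency $\bm k$. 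The prefactor $R\,c_s/|\Gamma|$ arises naturally from the rank $R$, the normalization of $\mathbf A_{\bm k}$, and the ratio $c_s=|\Gamma|/|\Lambda_1|$ that converts between ``per-entry'' and ``per-coefficient'' densities, and the incoherence $\rho$ is then \emph{defined} to be the uniform-over-$\bm k\in\Gamma$ upper bound on the normalized projection-kernel diagonal.

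The main obstacle I anticipate is the column-space characterization. Unlike the row space, which is pinned down algebraically by convolution with $\hat\mu_0$, the column space is naturally dual to a space of trigonometric polynomials integrated against the gradient measure supported on $E$. Making this pairing rigorous --- in particular, orthonormalizing the resulting basis with respect to a natural measure on the curve $E$ and controlling the conditioning of the associated Gram matrix --- is where the geometry of the edge set must enter, and where the argument departs substantively from the Vandermonde factorization used in \cite{chen2014robust} for isolated Diracs. The conditioning of that Gram matrix is precisely what ultimately controls $\rho$, and it will reproduce the intuition previewed in the abstract that shorter edge-set curves (smaller piecewise constant regions) give larger $\rho$ and hence a higher sampling burden.
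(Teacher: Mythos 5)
Your high-level architecture matches the paper's: both bounds reduce to controlling $[\lambda_{\min}(\bm E^*\bm E)]^{-1}\,\|\mathbf A_{\bm k}\bm E\|_F^2$ for an explicit basis $\bm E$ of the relevant subspace, the prefactor $R\,c_s/|\Gamma|$ comes from the flat normalization of that basis combined with the $1/\sqrt{|\omega(\bm k)|}$ normalization of $\mathbf A_{\bm k}$, and the conditioning of a Gram matrix of curve-node exponentials is what becomes $\rho$. But two substantive steps are missing, and they are exactly the ones you flag as obstacles rather than resolve. First, describing the row space as the orthogonal complement of ``convolution with $\hat\mu_0$'' does not by itself produce a basis whose entries all have magnitude $1/\sqrt{|\Lambda_1|}$, which is what makes $\|\mathbf A_{\bm k}\ER\|_F^2 = R/|\Lambda_1|$ an exact computation rather than an estimate. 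The paper's key observation is that the translated Dirichlet kernel $D_{\Lambda_1}(\cdot-\bm s)$ pairs with any $\gamma\mu_0$ by point evaluation at $\bm s$, hence lies in the row space whenever $\bm s\in\{\mu_0=0\}$; that $R$ such kernels can be chosen linearly independent (Lemma \ref{lem:admissible}) is not automatic and is proved via the BKK bound on the number of common zeros of two trigonometric polynomials. Without that existence statement the Gram matrix underlying $\rho$ could be singular and the bound vacuous.

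Second, and more seriously, the column-space side is left open. The paper constructs the weighted Vandermonde basis $\EL(P)$ through an exact quadrature formula (Lemma \ref{agrlemma}): the curve integral $\oint_{\partial U}\gamma\,\bm n\,ds$ of any $\gamma\in B_\Lambda$ equals a finite sum $\sum_i \gamma(\bm r_i)\,\bm w_i$ with weight vectors $\bm w_i$ obtained by inverting the Dirichlet Gram matrix against the moments $\bm v_i$. This is what converts the ``gradient measure supported on $E$'' into a concrete finite basis with computable entries. It then still remains to show that the column-space incoherence $\rho'=\min_P[\lambda_{\min}(\EL(P)^*\EL(P))]^{-1}$ is dominated by the row-space incoherence $\rho$, so that a single constant appears in both inequalities of the proposition; the paper does this by writing $\EL$ as two weighted copies of a larger Vandermonde matrix $\tilde{\bm E}_{\text{col}}$ that contains $\ER$ as a row-submatrix, giving $\sigma_{\min}(\ER)\le\sigma_{\min}(\tilde{\bm E}_{\text{col}})\le\sigma_{\min}(\EL)$. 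Neither step appears in your proposal. Finally, \emph{defining} $\rho$ as the uniform upper bound on the normalized projection diagonal would render the proposition tautological and would sever $\rho$ from the geometric quantity (the minimum eigenvalue of the Dirichlet-kernel Gram matrix at curve nodes) that Theorem \ref{geometry} subsequently bounds in terms of the node separation $\Delta$.
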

The proof in Section \ref{coherenceappendix} relies on the row and column spaces of $\Tf$ derived in Lemma \ref{columnlemma} and Lemma \ref{rowlemma} in the next section. These results will be used in the derivation of the main theorem in Section \ref{mainthmproof}.

\subsection{Main Results}
We now present our main results, which determine the sufficient number of random Fourier samples for the convex structured low-rank matrix completion program \eqref{nucnorm} to succeed with high probability. 
Our first theorem addresses the case of recovery from noiseless Fourier samples:
\begin{thm}
\label{mainthm}
Let $f$ be a continuous domain piecewise constant image \eqref{eq:pwc}, whose edge-set is described by the zero-set of the trigonometric polynomial $\mu_0$ bandlimited to $\Lambda_0$ (see \eqref{eq:trigpoly}). Let $\Omega\subset \Gamma$ be an index set drawn uniformly at random within $\Gamma$. Then there exists a universal constant $c>0$ such that the solution to \eqref{nucnorm} is $\widehat{f}$ with probability exceeding $1-|\Gamma|^{-2}$, provided 
\begin{equation}\label{eq:numsamples}
|\Omega| > c \,\rho\;c_s\;R\;\log^{4}|\Gamma|,
\end{equation}
where $R = |\Lambda_1|-|\Lambda_1:\Lambda_0| = \text{rank}~\Tf$, $c_s = |\Gamma|/|\Lambda_1|$, $c$ is a universal constant, and $\rho \geq 1$ is an incoherence measure depending on the geometry of the edge-set, to be defined in the sequel.
\end{thm}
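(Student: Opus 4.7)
The strategy is to adapt the dual-certificate / golfing-scheme argument of Gross and of Chen--Chi \cite{chen2014robust} to our weighted two-block Toeplitz lifting. Working in the orthonormal basis $\{\bm A_{\bm k}\}$ introduced in Appendix B, the lifting acts diagonally, $\mathcal{T}(\hat g) = \sum_{\bm k} \hat g[\bm k]\,w[\bm k]\,\bm A_{\bm k}$, so the equality constraints in \eqref{nucnorm} translate into a linear constraint on matrices in the range of $\mathcal T$. Let $T$ be the tangent space to the rank-$R$ matrix manifold at $\Tf = \bm U\bm\Sigma\bm V^*$, i.e.\ $T = \{\bm U \bm X^* + \bm Y \bm V^* : \bm X,\bm Y\}$, and let $\mathcal P_T$ denote the orthogonal projection onto $T$. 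By the standard KKT analysis, $\hat f$ is the unique optimum of \eqref{nucnorm} provided (i) the sampling operator restricted to $T$ is well conditioned, and (ii) there exists a dual certificate $\bm W$ supported on $\Omega$ in the $\bm A_{\bm k}$ basis, with $\mathcal P_T \bm W \approx \bm U \bm V^*$ and $\|\mathcal P_{T^\perp} \bm W\|_{op} < 1$.

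The first ingredient is Proposition \ref{prop:coherence}, which supplies the key local coherence bound $\max_{\bm k} \|\mathcal P_U \bm A_{\bm k}\|_F^2,\, \|\mathcal P_V \bm A_{\bm k}\|_F^2 \leq \rho R c_s/|\Gamma|$ and hence $\max_{\bm k}\|\mathcal P_T \bm A_{\bm k}\|_F^2 \leq 2\rho R c_s/|\Gamma|$. Define the rescaled random sampling superoperator $\mathcal R_\Omega = \tfrac{|\Gamma|}{|\Omega|}\sum_{\bm k \in \Omega} \bm A_{\bm k}\langle \bm A_{\bm k},\,\cdot\,\rangle$, so that $\mathbb E[\mathcal R_\Omega] = \mathcal I$ on $\text{range}(\mathcal T)$. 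A noncommutative matrix Bernstein applied entrywise in the $\bm A_{\bm k}$-basis, with the local coherence bound as the essential input, then gives
\begin{equation}
\|\mathcal P_T \mathcal R_\Omega \mathcal P_T - \mathcal P_T\|_{op} \leq 1/2
\end{equation}
with probability at least $1 - |\Gamma|^{-c}$ as soon as $|\Omega| \gtrsim \rho c_s R \log|\Gamma|$, which establishes (i) and the geometric contraction needed below.

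The second ingredient is Gross's golfing scheme. Partition $\Omega$ into $L \asymp \log|\Gamma|$ independent batches $\Omega_1,\ldots,\Omega_L$ each of size $\asymp \rho c_s R \log^3|\Gamma|$, set $\bm W_0 = \bm U\bm V^*$, and iterate $\bm W_\ell = (\mathcal I - \mathcal P_T \mathcal R_{\Omega_\ell})\bm W_{\ell-1}$; take the candidate certificate $\bm W = \sum_{\ell=1}^L \mathcal R_{\Omega_\ell}\bm W_{\ell-1}$, which is supported on $\Omega$ by construction. The contraction above yields $\|\bm W_\ell\|_F \leq 2^{-\ell}\|\bm U\bm V^*\|_F$, so $\mathcal P_T \bm W$ exponentially approaches $\bm U\bm V^*$. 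To bound $\|\mathcal P_{T^\perp}\bm W\|_{op}$, a second matrix Bernstein applied to $\mathcal P_{T^\perp}\mathcal R_{\Omega_\ell}\bm W_{\ell-1}$ gives a bound proportional to a weighted $\ell_\infty$-type norm of $\bm W_{\ell-1}$ in the $\bm A_{\bm k}$ basis; one propagates this weighted $\ell_\infty$ norm along the iteration by yet another Bernstein concentration. Summing the geometric series yields $\|\mathcal P_{T^\perp}\bm W\|_{op} < 1/2$ with the probability and sample complexity claimed, the extra $\log^4|\Gamma|$ arising from the batch count $L$ together with the per-batch $\log^3$ needed to control the Bernstein tail.

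The main obstacle is to verify that the various Bernstein variance proxies $\bigl\|\sum_{\bm k}\mathcal P_T \bm A_{\bm k}\bm A_{\bm k}^* \mathcal P_T\bigr\|_{op}$ and the dual row-variance scale as $\rho R c_s / |\Gamma|$ for our specific weighted two-block lifting, since the analysis of \cite{chen2014robust} was tailored to an unweighted single-block Hankel structure and does not transfer verbatim. This reduces to computing these operator norms using the explicit characterizations of the row and column spaces of $\Tf$ provided by Lemma \ref{columnlemma} and Lemma \ref{rowlemma}: the column space is described by blocks of Fourier samples of functions supported along the edge curve, and the row space by trigonometric-polynomial multiples of $\hat\mu_0$. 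These descriptions simultaneously let one define $\rho$ in purely geometric terms (a Dirichlet-type kernel evaluated along $\{\mu_0 = 0\}$), giving the incoherence bound dependent only on the edge-set geometry, and let one absorb the weighting $w[\bm k]$ into the $c_s = |\Gamma|/|\Lambda_1|$ oversampling factor. Putting the pieces together yields the stated sample complexity $|\Omega| \gtrsim \rho c_s R \log^4|\Gamma|$ with failure probability at most $|\Gamma|^{-2}$.
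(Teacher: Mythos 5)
Your proposal is correct and follows essentially the same route as the paper's Appendix B: exact-recovery conditions via an approximate dual certificate, matrix Bernstein for the restricted isometry $\|\mathcal P_T - \mathcal P_T\mathcal Q_\Omega\mathcal P_T\|\leq 1/2$, the golfing scheme with $O(\log|\Gamma|)$ batches, propagation of the $\mathcal A$-weighted $\ell_2$ and $\ell_\infty$ norms through the iteration, and the incoherence input from Proposition \ref{prop:coherence} via the Vandermonde-like row/column space bases of Lemmas \ref{rowlemma} and \ref{columnlemma}. The only differences are bookkeeping (e.g., exactly where the four log factors accumulate), not substance.
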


To better understand the dependence of the bound in \eqref{eq:numsamples} on the filter size $\Lambda_1$ and the edge set bandwidth $\Lambda_0$, assume for simplicity that $\Lambda_1$ is some dilation of $\Lambda_0$, that is, $\Lambda_1 = \alpha \Lambda_0$, where $\alpha > 1$ is an integer. In this case, the factor $c_s\,R$ in 
\eqref{eq:numsamples} simplifies to 
\begin{equation}\label{eq:boundsimp}
\left(\frac{|\Lambda_1|-|\Lambda_1:\Lambda_0|}{|\Lambda_1|}\right)|\Gamma|\leq \left(\frac{\alpha^2-(\alpha-1)^2}{\alpha^2}\right)|\Gamma|\leq \frac{2|\Gamma|}{\alpha}.
\end{equation}
Therefore, assuming the other constants in \eqref{eq:numsamples} are fixed, the number of measurements sufficient for exact recovery is proportional to the reciprocal of the dilation factor $\alpha$. This suggests taking the filter size $\Lambda_1$ to be as large as allowed by Theorem \ref{mainthm}. Namely, $\Lambda_1$ should satisfy $2\Lambda_1 + \Lambda_0 = \Gamma$, i.e., the side-lengths of filter support $\Lambda_1$ should be roughly half those of the reconstruction grid $\Gamma$. Fixing the filter support $\Lambda_1$ to obey this bound, then $\Gamma = (2\alpha+1)\Lambda_0$, and so $|\Gamma|\leq (2\alpha+1)^2|\Lambda_0|$. Inserting this bound into \eqref{eq:boundsimp} gives 
\begin{equation}
c_s R = O\left(\alpha|\Lambda_0|\right).
\end{equation}
Combined with \eqref{eq:numsamples}, this shows that the number of measurements sufficient for exact recovery is on the order of $|\Lambda_0|$, up to incoherence and log factors.


The proof of Theorem \ref{mainthm}, detailed in Appendix B, is in line with the approach of \cite{chen2014robust}. In particular, we prove the result by constructing an approximate dual certificate using the well-known ``golfing scheme'' of \cite{gross}. The main differences between in the proof of the above result and that in \cite{chen2014robust} results from the differences in the matrix structure and hence the characterization of the incoherency between the row and column subspaces of $\Tf$ with the sampling basis.
In particular, the matrix $\Tf$ we consider is obtained by stacking two block Toeplitz with Toeplitz blocks (BTTB) matrices whose entries are the weighted Fourier coefficients of $f$, as opposed to a single unweighted BTTB matrix in \cite{chen2014robust}. The approach in \cite{chen2014robust} relies on an explicit low-rank factorization of a BTTB matrix in terms of Vandermonde-like matrices\footnote{The structured matrices considered in \cite{chen2014robust} are block Hankel with Hankel block matrices (BHHB), but this difference is purely cosmetic: every BTTB matrix can be re-expressed as BHHB after a permutation of its rows and columns. In particular, the Vandermonde-like factorization of BHHB matrices in \cite{chen2014robust} carries over to BTTB matrices.}. Since this factorization is not available in our setting, we use algebraic properties of trigonometric polynomials to give a new characterization of the row and column spaces of the matrix. In particular, we show in Section \ref{sec:incoherence} that similar Vandermonde-like basis matrices exist for the row and column space of the lifted matrix, and use these to derive a related incoherence measure that satisfies the bounds in Prop.~\ref{prop:coherence}.

\subsection{Recovery in the presence of noise and model-mismatch}
We now generalize \eqref{stlr} to the setting where we have noisy or corrupted Fourier samples 
\begin{equation}
\hat f_n[\bm k] = \hat f[\bm k] + \eta[\bm k], \bm k \in \Omega,
\end{equation}
where $\eta[\bm k] \in \mathbb{C}$ is a vector of noise. In this case, we pose recovery as 
\begin{equation}
\label{eq:noisystlr}
\min_{\hat g}~ \|\Tg\|_* ~\text{subject to}~
\|\mathcal P_{\Omega}(\hat{f_n}-\hat{g})\|_2 \leq \delta.
\end{equation}
where $\delta > 0$ is an estimate of the $\ell^2$-norm of the error $\|\eta\|$, and $\mathcal P_{\Omega}$ denotes projection onto $\Omega$. We make no assumptions on the statistics of the noise $\eta$. In particular, $\eta$ can represent errors due to model-mismatch, such as when the image is not perfectly piecewise constant, or when the edge set of the image does not coincide perfectly with the zero level-set of a bandlimited function.

The following theorem shows that when the deviation of $\hat f_n$ from $\hat f$ is small, the modified recovery program \eqref{eq:noisystlr} recovers a solution that is close in norm to $\hat f$ under the same sampling conditions as Theorem \ref{mainthm}. 

\begin{thm}\label{noisythm}
Let $f$ be specified by \eqref{eq:pwc}, whose edge-set is described by the zero-set of the trigonometric polynomial $\mu_0$ bandlimited to $\Lambda_0$ with associated incoherence measure $\rho$. Let $\Omega\subset \Gamma$ be an index set drawn uniformly at random within $\Gamma$ such that $|\Omega|$ satisfies the bound \eqref{eq:numsamples} in Theorem \ref{mainthm}. If the measurements $\hat f_n$ satisfy $\|\mathcal P_{\Omega}({\hat f_n-\hat f})\|_2 \leq \delta$, then the solution $\hat g$ to \eqref{eq:noisystlr} satisfies 
\begin{equation}\label{eq:noisybound}
\|\Tf-\Tg\|_F \leq 5 |\Gamma|^2 \delta.
\end{equation}
with probability exceeding $1-|\Gamma|^{-2}$.
\end{thm}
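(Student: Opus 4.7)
The plan is to adapt the standard noisy matrix completion argument (in the style of Candès--Plan, as also used in \cite{chen2014robust,ye2016compressive}) to the present structured setting, reusing the approximate dual certificate already produced while proving Theorem \ref{mainthm}. Let $\hat g$ be an optimal solution, set $\hat h = \hat g - \hat f$, and $\mathbf H = \mathcal{T}(\hat h)$. Since $\hat g$ is feasible for \eqref{eq:noisystlr} and the true coefficients satisfy $\|\mathcal{P}_\Omega(\hat f_n - \hat f)\|_2 \leq \delta$ by hypothesis, the triangle inequality gives the data-term bound $\|\mathcal{P}_\Omega \hat h\|_2 \leq 2\delta$, which is the only link between the estimate and the noise level.

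The next step is to split $\mathbf H = \mathcal{P}_T \mathbf H + \mathcal{P}_{T^\perp}\mathbf H$, where $T$ is the tangent space at $\Tf = \bm U\bm\Sigma\bm V^*$, and bound the two pieces separately. From the optimality $\|\mathcal T(\hat g)\|_* \leq \|\mathcal T(\hat f)\|_*$ and the nuclear-norm subgradient inequality there exists $\bm W \in T^\perp$ with $\|\bm W\|\leq 1$ and $\langle \bm W, \mathcal{P}_{T^\perp}\mathbf H\rangle = \|\mathcal{P}_{T^\perp}\mathbf H\|_*$ such that
\begin{equation*}
0 \geq \langle \bm U\bm V^* + \bm W, \mathbf H\rangle.
\end{equation*}
Introducing the approximate dual certificate $\mathbf Y$ from the golfing-scheme construction used for Theorem \ref{mainthm} (which lies in the range of $\mathcal T \mathcal{P}_\Omega^{*}$, satisfies $\|\mathcal{P}_T\mathbf Y - \bm U\bm V^*\|_F$ small, and has $\|\mathcal{P}_{T^\perp}\mathbf Y\|$ strictly below $1$) and adding/subtracting $\langle \mathbf Y, \mathbf H\rangle$, the cross term collapses to an inner product of the form $\langle \mathcal{P}_\Omega \bm z, \mathcal{P}_\Omega \hat h\rangle$, which is controlled by Cauchy--Schwarz and the $2\delta$ bound. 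Rearranging gives
\begin{equation*}
\|\mathcal{P}_{T^\perp}\mathbf H\|_* \lesssim \delta \cdot (\text{operator-norm factors}) + \tfrac12 \|\mathcal{P}_T \mathbf H\|_F.
\end{equation*}

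To close the loop, the plan is to bound $\|\mathcal{P}_T \mathbf H\|_F$ by appealing to the local restricted isometry of the (structured) sampling operator on $T$. This restricted isometry, $\|\mathcal{P}_T \mathcal{T}\mathcal{T}^{-1}\mathcal{P}_\Omega^{*}\mathcal{P}_\Omega \mathcal{T}^{-1}\mathcal{T}\mathcal{P}_T - \tfrac{|\Omega|}{|\Gamma|}\mathcal{P}_T\| \leq \tfrac12\tfrac{|\Omega|}{|\Gamma|}$, is already established in the proof of Theorem \ref{mainthm} from the incoherence bounds of Proposition \ref{prop:coherence}. It yields a bound of the form $\|\mathcal{P}_T \mathbf H\|_F \lesssim \|\mathcal{P}_\Omega \hat h\|_2 + \|\mathcal{P}_{T^\perp}\mathbf H\|_F$ (up to factors in $|\Gamma|$ and the weights $w[\bm k]$). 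Combining with the previous inequality and using $\|\cdot\|_F \leq \|\cdot\|_*$ on $T^\perp$ absorbs the $\mathcal{P}_{T^\perp}$ term and leaves $\|\mathbf H\|_F \leq C\delta$.

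The main obstacle, and where the apparently large $5|\Gamma|^2$ constant in \eqref{eq:noisybound} is produced, is the careful bookkeeping of the operator-norm constants that bridge the Fourier-domain feasibility bound $\|\mathcal{P}_\Omega \hat h\|_2 \leq 2\delta$ and the matrix-domain error $\|\mathbf H\|_F$. The lifting $\mathcal T$ has operator norm governed by $\max_{\bm k}w[\bm k]$, which scales polynomially in $|\Gamma|$ because each Fourier coefficient is multiplied by a frequency factor $k_x$ or $k_y$ and is repeated many times inside the BTTB blocks; its pseudo-inverse on the range carries a reciprocal weight. Propagating these factors through the two inequalities above, together with the $O(1)$ constants produced by the dual certificate and the tangent-space isometry, yields the explicit constant $5|\Gamma|^2$. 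The probability statement is inherited directly from the high-probability events (existence of $\mathbf Y$, restricted isometry on $T$) that were already proved for Theorem \ref{mainthm}, so no additional probabilistic argument is needed.
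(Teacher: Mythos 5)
Your overall strategy---reuse the approximate dual certificate and the tangent-space restricted isometry from the noiseless proof, get $\|\mathcal P_\Omega \hat h\|_2\le 2\delta$ by the triangle inequality, bound the $T^\perp$ component through the subgradient inequality, and then close a two-inequality bootstrap while tracking polynomial-in-$|\Gamma|$ constants---is the same as the paper's, which explicitly follows Appendix L of \cite{chen2014robust}. The bookkeeping differs, though. The paper first splits the lifted error as $\bm H=\mathcal A_\Omega(\bm H)+\mathcal A_{\Omega^\perp}(\bm H)$, bounds the observed part by $2\sqrt{n}\,\delta$, and then applies the already-proved noiseless machinery \emph{verbatim} to the unobserved part: the Case~1 infeasibility argument forces $\|\mathcal P_T\mathcal A_{\Omega^\perp}(\bm H)\|_F\le 2n\|\mathcal P_{T^\perp}\mathcal A_{\Omega^\perp}(\bm H)\|_F$, and inequality \eqref{one6} gives $\|\mathcal P_{T^\perp}\mathcal A_{\Omega^\perp}(\bm H)\|_F\lesssim \sqrt n\,\|\mathcal A_\Omega(\bm H)\|_F\lesssim n\delta$, whence the $T$ part is $O(n^2\delta)$ and the three pieces sum to the stated bound. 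You instead split by $\mathcal P_T/\mathcal P_{T^\perp}$ on all of $\bm H$ and re-run a Cand\`es--Plan chain with a fresh appeal to the restricted isometry \eqref{ptclose}; that buys nothing here and costs you a re-derivation the paper avoids, but it is a legitimate alternative.

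There is one place where your sketch, as written, does not close. You bound $\|\mathcal P_{T^\perp}\bm H\|_*\lesssim \delta\cdot(\cdots)+\tfrac12\|\mathcal P_T\bm H\|_F$ and then propose $\|\mathcal P_T\bm H\|_F\lesssim \|\mathcal P_\Omega\hat h\|_2+\|\mathcal P_{T^\perp}\bm H\|_F$ ``up to factors in $|\Gamma|$.'' Those hidden factors are genuinely of order $n$: the operator $\mathcal Q_\Omega$ has norm up to $n/m$, and the reverse tangent-space comparison in this setting (cf.\ \eqref{case2eqn}) loses a factor $2n$. With a coefficient of $\tfrac12$ on $\|\mathcal P_T\bm H\|_F$ in the first inequality, substituting one bound into the other leaves a term $O(n)\cdot\|\mathcal P_{T^\perp}\bm H\|_*$ on the right, which cannot be absorbed. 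The bootstrap only closes because the certificate built for Theorem \ref{mainthm} satisfies the unusually tight condition \eqref{cond3}, $\|\mathcal P_T(\bm W-\bm U\bm V^*)\|_F\le \tfrac{1}{6n}$ --- the extra $1/n$ is there precisely to cancel the $2n$ loss. You need to replace your generic ``$\tfrac12\|\mathcal P_T\bm H\|_F$'' with this $\tfrac{1}{6n}\|\mathcal P_T\bm H\|_F$ term. Relatedly, your attribution of the $|\Gamma|^2$ to the lifting weights $w[\bm k]$ is not where the paper's constant comes from: it arises as $(2n)\times(\sqrt n\cdot\sqrt n)$ from the Case-2 factor and the nuclear-to-Frobenius conversion of $\mathcal A_\Omega(\bm H)$; the weights are swept into the (loosely justified) $2\sqrt n\,\delta$ feasibility bound in the lifted domain.
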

See Section \ref{noisythmproof} in the Supplementary Materials for proof. The bound \eqref{eq:noisybound} allows us to quantify the effect of model-mismatch on recovery. In particular, suppose the image $f_n$ represents a perturbation from an ideal piecewise constant image $f$ such that their difference in $L^2$-norm is $\delta$-small:
\begin{equation}
\|f_n-f\|_{L^2}^2 = \left(\int_{[0,1]^2}|f_n(\bm r)-f(\bm r)|^2\,d\bm r\right)^{\frac 1 2} \leq \delta.
\end{equation}
Then by Parseval's theorem, the measurements of $\hat f_n$ satisfy $\|\mathcal P_{\Omega}({\hat f_n-\hat f})\|^2 \leq \delta$, hence Theorem \ref{noisythm} applies. From \eqref{eq:noisybound} we obtain the bound
$\|\Tf-\Tg\|_F \leq 5 |\Gamma|^2 \|f_n-f\|_{L^2}.$
This shows that if the image $f_n$ is close to the ideal piecewise constant image $f$ in spatial domain $L^2$-norm, then the matrix $\Tg$ we recover using \eqref{eq:noisystlr} will be close in norm to $\Tf$ with high probability.

\section{Row and column spaces of $\Tf$ and incoherence}\label{sec:incoherence}
In this section we define an incoherence measure $\rho$ that satisfies the desired bounds in Prop.~\ref{prop:coherence}. We show that the incoherence measure depends only on the geometry of the edge set of the image. The incoherence measure is derived from a new characterization of the row and column spaces of the matrix $\Tf$ in terms of Vandermonde-like basis matrices.

\subsection{Row and column spaces of $\Tf$}
Our first lemma gives a basis for the row space of $\Tf$:
\begin{lem} 
\label{rowlemma}
A basis of the row space of $\Tf$ is given by the columns of the $|\Lambda_1|\times R$ Vandermonde-like matrix
\begin{equation}
\label{rowspace}
\ER(P) :=\frac{1}{\sqrt{|\Lambda_1}|} \begin{pmatrix} 
e^{j2\pi \bm k_1 \cdot \bm r_1} & \ldots & e^{j2\pi \bm k_1 \cdot \bm r_R}\\
\vdots &&\vdots\\
e^{j2\pi \bm k_{|\Lambda_1|} \cdot \bm r_1} & \ldots & e^{j2\pi \bm k_{|\Lambda_1|} \cdot \bm r_R}\\
\end{pmatrix}
\end{equation}
where $\{\bm k_1,...,\bm k_{|\Lambda_1|}\}$ is a linear indexing of elements in $\Lambda_1$, and $P =\{\mbf r_1,...,\mbf r_R\}$ is a set of $R=|\Lambda_1|-|\Lambda_1:\Lambda_0|$ distinct points on the edge set curve $\{\mu_0=0\}$ chosen such that the columns of $\ER$ are linearly independent. 
\end{lem}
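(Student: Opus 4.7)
The plan is to prove the lemma in three steps using the exact nullspace characterization from Theorem~\ref{prop:rank}: (i) show every column of $\ER(P)$ lies in the row space of $\Tf$; (ii) show that the full family $\{\bm e_{\bm r}:\bm r\in\{\mu_0=0\}\}$ already spans the entire $R$-dimensional row space; (iii) extract an $R$-point configuration $P$ for which the columns are linearly independent.

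For step (i), fix $\bm r_i\in\{\mu_0=0\}$ and denote by $\bm e_{\bm r_i}\in\mathbb{C}^{|\Lambda_1|}$ the corresponding column of $\ER(P)$. The row space is the orthogonal complement of the nullspace, so it suffices to check $\bm e_{\bm r_i}^H \bm h=0$ for every $\bm h\in\ker\Tf$. By Theorem~\ref{prop:rank}, any such $\bm h$ has entries $(\hat\mu_0\ast\hat\gamma)[\bm k]$ for $\bm k\in\Lambda_1$, with $\hat\mu_0\ast\hat\gamma$ supported in $\Lambda_1$. Extending the inner product to a sum over $\mathbb{Z}^2$ (which only adds zero terms) and recognizing the result as a Fourier-series evaluation gives
\begin{equation*}
\bm e_{\bm r_i}^H \bm h \;=\; \frac{1}{\sqrt{|\Lambda_1|}}(\mu_0\gamma)(\pm\bm r_i) \;=\; \frac{\mu_0(\pm\bm r_i)\,\gamma(\pm\bm r_i)}{\sqrt{|\Lambda_1|}} \;=\; 0,
\end{equation*}
where the sign is merely a Fourier-convention artifact (inconsequential since the argument only uses $\bm r_i\in\{\mu_0=0\}$).

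For step (ii), suppose $\bm v$ lies in the row space of $\Tf$ and is orthogonal to every $\bm e_{\bm r}$ with $\bm r$ on the curve. Associate with $\bm v$ the trigonometric polynomial $\eta_{\bm v}(\bm r)=\sum_{\bm k\in\Lambda_1} v[\bm k]\,e^{-j2\pi\bm k\cdot\bm r}$, which is bandlimited to $\Lambda_1$. The orthogonality condition is equivalent to $\eta_{\bm v}$ vanishing on the entire curve $\{\mu_0=0\}$. Under the no-singular-points assumption embedded in Theorem~\ref{prop:rank}, the Nullstellensatz-type factorization result of~\cite{siam} forces $\eta_{\bm v}=\mu_0\gamma$ for some trigonometric polynomial $\gamma$, so $\bm v$ takes the form \eqref{eq:nullspace} and therefore belongs to $\ker\Tf$. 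But $\bm v$ was in the row space, so $\bm v=\bm 0$. Hence $\mathrm{span}\{\bm e_{\bm r}:\bm r\in\{\mu_0=0\}\}$ coincides with the full row space, which has dimension $R$.

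For step (iii), an inductive selection delivers an $R$-point configuration: pick any $\bm r_1$ on the curve; given $k<R$ already-chosen points whose $\bm e_{\bm r_j}$ are linearly independent, the $k$-dimensional span they generate cannot contain all of $\{\bm e_{\bm r}:\bm r\in\{\mu_0=0\}\}$ (whose span has dimension $R>k$ by step (ii)), so a new point $\bm r_{k+1}$ with $\bm e_{\bm r_{k+1}}$ outside the span exists. Iterating $R$ times produces $P=\{\bm r_1,\dots,\bm r_R\}$ whose associated columns of $\ER(P)$ form a basis of the row space. The principal obstacle in the whole argument is the Nullstellensatz input invoked in step (ii); this algebraic fact---that a trigonometric polynomial bandlimited to $\Lambda_1$ vanishing on $\{\mu_0=0\}$ must factor through $\mu_0$---is not obvious in general, but is available here under the no-singular-points hypothesis as part of the nullspace characterization proved in~\cite{siam}. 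Once this is granted, the remainder of the proof is routine linear algebra.
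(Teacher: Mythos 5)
Your proof is correct, and its decisive step goes by a genuinely different route than the paper's. The paper works in spatial domain with the translated Dirichlet kernels $\varphi_{\bm s}=D_{\Lambda_1}(\cdot-\bm s)$ (the spatial-domain avatars of your columns) and establishes the spanning property via the BKK bound of enumerative algebraic geometry (its Lemma~\ref{lem:bkk}): two trigonometric polynomials bandlimited to $\Lambda_0$ and $\Lambda_1$ share at most $R+|\Lambda_0|$ \emph{isolated} zeros, so any $\eta\in B_{\Lambda_1}$ vanishing at $N\geq R+|\Lambda_0|$ distinct points of the curve must already contain $\mu_0$ as a factor. You instead require $\eta_{\bm v}$ to vanish on the \emph{entire} curve and invoke the divisibility (Nullstellensatz-type) result underlying the exact nullspace characterization of Theorem~\ref{prop:rank}; equivalently, one can shortcut your step even further, since vanishing on the whole edge set gives $\eta_{\bm v}\,\nabla f=0$, which places $\hat\eta_{\bm v}$ in $\ker\Tf=(\text{row space})^\perp$ and forces $\bm v=\bm 0$ without ever naming the factorization. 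Both arguments are legitimate under the standing hypotheses. What the paper's quantitative BKK route buys is the stronger companion Lemma~\ref{lem:admissible} --- that \emph{any} $R+|\Lambda_0|$ distinct points on the curve contain an admissible subset --- which is used elsewhere and which your whole-curve argument does not deliver. Your route is softer and shorter for Lemma~\ref{rowlemma} alone, at the cost of leaning on the full strength of the nullspace characterization (which the paper assumes throughout anyway). Your steps (i) and (iii) coincide with the paper's, modulo the Fourier sign/conjugation convention that you correctly flag as inconsequential and that the paper is equally casual about.
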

The careful reader will have noticed that Lemma \ref{rowlemma} takes for granted the existence of a set of points $P =\{\mbf r_1,....,\mbf r_R\} \subset \{\mu_0=0\}$ such that the columns of $\ER(P)$ is linearly independent. Call such a set $P$ a set of \emph{admissible nodes} for the curve $\{\mu_0=0\}$. The following result shows that sets of admissible nodes always exist and are easy to construct:
\begin{lem}\label{lem:admissible}
Let $\mu_0$ be bandlimited to $\Lambda_0$. Any set of $M \geq R + |\Lambda_0|$ distinct points on the curve $\{\mu_0=0\}$ contains a subset of $R$ points that are a set of admissible nodes. 
\end{lem}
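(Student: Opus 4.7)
The plan is to reduce the statement to a rank equality for the $|\Lambda_1|\times M$ matrix $V = [\bm v(\bm r_1),\ldots,\bm v(\bm r_M)]$ with columns $\bm v(\bm r) = (e^{j2\pi\bm k\cdot\bm r})_{\bm k\in\Lambda_1}$. By Lemma~\ref{rowlemma} together with Theorem~\ref{prop:rank}, every $\bm v(\bm r)$ for $\bm r\in\{\mu_0=0\}$ lies in the $R$-dimensional row space of $\mathcal T(\hat f)$, so $\mathrm{rank}\,V \le R$ automatically. Once I establish the matching lower bound $\mathrm{rank}\,V \ge R$, any choice of $R$ linearly independent columns of $V$ yields an admissible subset of the given $M$ points.

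Proving $\mathrm{rank}\,V \ge R$ is equivalent to showing that the left null-space $\ker V^\top$ has dimension at most $|\Lambda_1|-R = |\Lambda_1\colon\Lambda_0|$. Interpreting $\bm h\in\ker V^\top$ as a trigonometric polynomial $h$ bandlimited to $\Lambda_1$ with $h(\bm r_i)=0$ for all $i$, Theorem~\ref{prop:rank} already places $\{\mu_0\gamma : \gamma$ bandlimited to $\Lambda_1\colon\Lambda_0\}$ inside $\ker V^\top$ with the correct dimension, so the task reduces to the reverse inclusion: any $h$ bandlimited to $\Lambda_1$ vanishing at $M\ge R+|\Lambda_0|$ distinct points of $\{\mu_0=0\}$ must be of the form $\mu_0\gamma$ with $\gamma$ bandlimited to $\Lambda_1\colon\Lambda_0$. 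I plan to obtain this via a B\'ezout-type argument: identifying $h$ and $\mu_0$ with bivariate Laurent polynomials in $z_1=e^{j2\pi x}$, $z_2=e^{j2\pi y}$ whose bidegrees are controlled by the rectangular supports $\Lambda_1$ and $\Lambda_0$, B\'ezout's theorem on $\mathbb{P}^1\times \mathbb{P}^1$ bounds the number of isolated common zeros of $h$ and $\mu_0$ in $(\mathbb{C}^*)^2$ (which contains the torus), and a direct bidegree computation shows this bound is strictly smaller than $R+|\Lambda_0|$. Hence having at least $R+|\Lambda_0|$ common zeros on the torus forces $h$ and $\mu_0$ to share a non-trivial common polynomial factor; combined with the minimality of $\mu_0$ as the lowest-bandwidth trigonometric polynomial whose zero set equals all of the edge curve $E$, this promotes to $\mu_0\mid h$, and a final bidegree comparison shows the quotient $\gamma$ is bandlimited to $\Lambda_1\colon\Lambda_0$.

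The main obstacle will be promoting ``shared factor'' to ``$\mu_0$ divides $h$'' in the case where $\mu_0$ is reducible: a priori $h$ might share only some factor of $\mu_0$ and so vanish on a proper sub-component of $E$ without being divisible by all of $\mu_0$. Ruling out this case requires careful use of the minimality of $\mu_0$ as the lowest-bandwidth trigonometric polynomial with zero set $E$, together with the no-singular-points hypothesis of Theorem~\ref{prop:rank}, most cleanly by an induction on the number of irreducible factors of $\mu_0$ in which one divides out the shared factor and re-applies the B\'ezout count to the quotient polynomial against the remaining factor of $\mu_0$. Translating the classical projective B\'ezout bound into the Laurent/torus setting is largely bookkeeping but also needs to be carried out with care.
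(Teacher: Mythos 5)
Your proposal is essentially the paper's own argument: the paper proves the dual statement that the shifted Dirichlet kernels $\varphi_{\bm r_i} = D_{\Lambda_1}(\cdot - \bm r_i)$ span $(\mu_0)_{\Lambda_1}^\perp$, which is exactly your rank bound on $V$, and it controls the left null-space by the same zero-counting step (any $\eta\in B_{\Lambda_1}$ vanishing at all $M\ge R+|\Lambda_0|$ points must contain $\mu_0$ as a factor), invoking the BKK bound --- which for rectangular supports coincides with the bidegree B\'ezout count on $\mathbb{P}^1\times\mathbb{P}^1$ that you propose. The reducibility issue you flag is genuine but is not resolved in the paper either: its BKK lemma carries an irreducibility hypothesis on $\mu_0$ that the main proof applies without verification, so your extra care there goes beyond, rather than departs from, the paper's route.
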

The next lemma shows that we can characterize the column space of $\Tf$ in a similar way as the row space:
\begin{lem} 
\label{columnlemma}
A basis of the column space of $\Tf$ is given by the columns of the $2|\Lambda_2| \times R$ weighted Vandermonde-like matrix:
\begin{equation}
\label{columnspace}
\EL(P) = \frac{1}{\sqrt{|\Lambda_2}|}\begin{pmatrix}
\frac{w_{1,x}}{\|\bm w_1\|}\,e^{j2\pi \bm k_1 \cdot \bm r_1} & \hspace{-0.5em}\ldots\hspace{-0.5em} & \frac{w_{R,x}}{\|\bm w_R\|}\,e^{j2\pi \bm k_1 \cdot \bm r_R}\\
\vdots &&\vdots\\
\frac{w_{1,x}}{\|\bm w_1\|}\,e^{j2\pi \bm k_{|\Lambda_2|} \cdot \bm r_1} & \hspace{-0.5em}\ldots\hspace{-0.5em} & \frac{w_{R,x}}{\|\bm w_R\|}\,e^{j2\pi \bm k_{|\Lambda_2|} \cdot \bm r_R}\\
\hline
\frac{w_{1,y}}{\|\bm w_1\|}\,e^{j2\pi \bm k_1 \cdot \bm r_1} & \hspace{-0.5em}\ldots\hspace{-0.5em} & \frac{w_{R,y}}{\|\bm w_R\|}\,e^{j2\pi \bm k_1 \cdot \bm r_R}\\
\vdots &&\vdots\\
\frac{w_{1,y}}{\|\bm w_1\|}\,e^{j2\pi \bm k_{|\Lambda_2|} \cdot \bm r_1} & \hspace{-0.5em}\ldots\hspace{-0.5em} & \frac{w_{R,y}}{\|\bm w_R\|}\,e^{j2\pi \bm k_{|\Lambda_2|} \cdot \bm r_R}\\
\end{pmatrix},
\end{equation}
where where $\{\bm k_1,...,\bm k_{|\Lambda_2|}\}$ is a linear indexing of elements in $\Lambda_2$ and $P = \{\mbf r_1,....,\mbf r_{R}\}$ is a set of admissible nodes for the curve $\{\mu_0 = 0\}$. The weight vectors $\bm w_i = (w_{i,x}, w_{i,y})$, are described by the formula \eqref{aggregated} in Appendix \ref{sec:appendixA}, and depend only on the edge set $\{\mu_0 = 0\}$, the nodes $P$, and the filter support $\Lambda_1$.
\end{lem}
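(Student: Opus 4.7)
My plan is to prove the lemma via two complementary arguments: first, each column $\bm u_i$ of $\EL(P)$ lies in the column space of $\Tf$, via a duality argument; and second, the $R$ columns are linearly independent, so together they form a basis of the $R$-dimensional column space.

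The starting point is a spatial-domain interpretation of $\Tf$: for a trigonometric polynomial $\mu\in S_{\Lambda_1}$ with coefficient vector $\bm h$, the entries of $\Tf\bm h$ at frequencies $\bm\ell\in\Lambda_2$ are the Fourier coefficients of the distributions $\mu\,\partial_x f$ and $\mu\,\partial_y f$. Since $\nabla f$ is a vector-valued distribution supported on $E=\{\mu_0=0\}$ with direction $\pm\nabla\mu_0/|\nabla\mu_0|$ (the outward normal to each $\partial U_i$), each entry of $\Tf\bm h$ reduces to a curve integral of the form $\int_E \beta(\bm r)\,\mu(\bm r)\,\partial_\bullet\mu_0(\bm r)\,e^{-j2\pi\bm\ell\cdot\bm r}\,d\sigma$, where $\beta$ is a piecewise constant scalar encoding the amplitudes $a_i$ and $1/|\nabla\mu_0|$. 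From this I will characterize the left nullspace of $\Tf$ as the $\bm z=(\bm z_x,\bm z_y)$ for which the scalar function $g_{\bm z}:=\beta\,\nabla\mu_0\cdot\overline{\tilde{\bm z}}$ lies in the orthogonal complement $V_\Lambda^\perp$ in $L^2(E,d\sigma)$, where $V_\Lambda$ denotes the restriction of $S_{\Lambda_1}$ to $E$ and $\tilde{\bm z}=(\tilde z_x,\tilde z_y)$ the pair of trigonometric polynomials with Fourier coefficients $\bm z$.

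To show $\bm u_i$ lies in the column space, I verify $\bm u_i^*\bm z=0$ for every $\bm z$ in the left nullspace. Direct expansion of the inner product yields $\bm u_i^*\bm z=c\cdot\overline{\bm w_i}\cdot\bs\zeta(\bm r_i)$ for some scalar $c$, where $\bs\zeta(\bm r)=\sum_{\bm\ell\in\Lambda_2}\bm z[\bm\ell]\,e^{-j2\pi\bm\ell\cdot\bm r}$. The aggregation formula for $\bm w_i$ in Appendix~\ref{sec:appendixA} is designed so that $\overline{\bm w_i}\cdot\bs\zeta(\bm r_i)=\int_E g_{\bm z}\,\eta_i\,d\sigma$ for an appropriate $\eta_i\in S_{\Lambda_1}$; intuitively, $\eta_i$ is the Dirichlet-type kernel $\sum_{\bm k\in\Lambda_1}e^{j2\pi\bm k\cdot(\bm r-\bm r_i)}$ that reproduces point values at $\bm r_i$ when tested against $S_{\Lambda_1}$, and $\bm w_i$ absorbs the geometric factor $\nabla\mu_0(\bm r_i)$. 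Since $g_{\bm z}\in V_\Lambda^\perp$, this integral vanishes.

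For linear independence, if $\sum_i c_i\bm u_i=\bm 0$, separating the upper and lower blocks gives $\sum_i c_i w_{i,x}\,e^{j2\pi\bm\ell\cdot\bm r_i}=0$ and $\sum_i c_i w_{i,y}\,e^{j2\pi\bm\ell\cdot\bm r_i}=0$ for every $\bm\ell\in\Lambda_2$. Because $P$ is admissible and the sampling assumption $\Gamma\supseteq 2\Lambda_1+\Lambda_0$ ensures $\Lambda_2\supseteq\Lambda_1$, the Vandermonde-type matrix $(e^{j2\pi\bm k\cdot\bm r_i})_{\bm k\in\Lambda_2,\,i}$ has full column rank, forcing $c_i\bm w_i=\bm 0$; the absence of singular points on $E$ keeps $\nabla\mu_0(\bm r_i)\neq\bm 0$ and hence $\bm w_i\neq\bm 0$ via the aggregation formula, so $c_i=0$. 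Combined with $\dim(\text{col span}(\Tf))=R$, these $R$ linearly independent vectors in the column space form a basis. The main obstacle will be showing that the explicit aggregation formula yields weights with precisely this duality property, and that the amplitudes $a_i$ entering $\beta$ factor into an overall scalar that cancels in the final expression—ensuring $\bm w_i$ depends only on the edge set, nodes, and filter support as claimed.
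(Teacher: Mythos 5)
Your outline (membership of each column in the column space, linear independence, dimension count) is reasonable, and the duality idea---column space equals the orthogonal complement of the left nullspace---is a legitimate alternative to what the paper does. However, the step on which everything hinges is asserted rather than proved, and it is not what the weight formula \eqref{aggregated} actually delivers. The weights are defined by $\bm W = \bm D^{-1}\bm V$ with $\bm v_j = \oint_{\partial U} D_\Lambda(\bm r-\bm r_j)\,\bm n(\bm r)\,ds(\bm r)$; their defining property is the \emph{collective} quadrature identity $\sum_i \gamma(\bm r_i)\,\bm w_i = \oint_{\partial U}\gamma\,\bm n\,ds$ for scalar $\gamma \in B_\Lambda$ (Lemma \ref{agrlemma}), not an individual pairing of $\bm w_i$ with point evaluations of a left-nullspace field. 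Writing out $\bm w_i\cdot\overline{\bs\zeta(\bm r_i)}$ from \eqref{aggregated}, the field $\bs\zeta$ is frozen at $\bm r_i$ while the integration variable runs over the whole curve, so the quantity is not of the form $\int_E g_{\bm z}\,\eta_i\,d\sigma$ for any $\eta_i\in B_{\Lambda_1}$. To convert the left-nullspace condition $\oint \varphi\,(\bm n\cdot\overline{\bs\zeta})\,ds = 0$ for all $\varphi\in B_{\Lambda_1}$ into $\bm w_i\cdot\overline{\bs\zeta(\bm r_i)}=0$, you would need a quadrature rule valid for the \emph{product} $\varphi\,\overline{\zeta_\bullet}$, which lives in the larger space $B_{\Lambda_1+\Lambda_2}\subseteq B_\Gamma$; that rule requires $S=|\Gamma|-|\Gamma\colon\Lambda_0| > R$ nodes. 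This is precisely why the paper does not attempt individual membership: it expands $\mathcal S(f)\psi$ as a curve integral, applies the quadrature formula with $S$ nodes to show $\text{im}\,\mathcal S(f)\subseteq \text{span}\{\bm w_i D_{\Lambda_2}(\cdot-\bm r_i)\}_{i=1}^S$, and then extracts $R$ linearly independent elements by a dimension count. Individual membership of each weighted kernel in the range is neither established in the paper nor obviously true, so your duality route cannot be completed as stated.

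A second, smaller error: you justify $\bm w_i\neq\bm 0$ (needed for your independence argument) by saying $\bm w_i$ ``absorbs'' $\nabla\mu_0(\bm r_i)$, which is nonzero off the singular set. But \eqref{aggregated} makes $\bm w_i$ a $\bm D^{-1}$-mixture of the integrals $\bm v_j$ taken over the entire edge set; it is not proportional to the normal at $\bm r_i$, and nonvanishing of an individual weight does not follow from $\nabla\mu_0(\bm r_i)\neq \bm 0$. The block-separation step itself (full column rank of the unweighted Vandermonde matrix over $\Lambda_2$ forcing $c_i\bm w_i=\bm 0$) is fine, but you are left needing a genuine argument that the selected nodes have nonzero weights --- the paper sidesteps this by \emph{defining} the basis as any linearly independent subset of the $S$ spanning vectors rather than fixing the nodes in advance.
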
 
See Section \ref{rowspaceappendix} for the proofs of Lemmas \ref{rowlemma} and \ref{lem:admissible}, and Section \ref{colspaceappendix} for the proof of Lemma \ref{columnlemma}. 

\subsection{Incoherence measure}
We now show how to define an incoherence measure $\rho$ that satisfies the desired bounds in Prop.~\ref{prop:coherence}. Consider the Gram matrix $\GR(P) = [\ER(P)]^*\ER(P)$, where $P$ is any set of $R$ points $\mbf r_1, ..., \mbf r_R$ on the edge set curve $\{\mu=0\}$. It is easy to see from the definition \eqref{rowspace} that the entries of $\GR(P)$ are specified by
\begin{equation}
(\GR(P))_{i,j} = \frac{1}{|\Lambda_1|}D_{\Lambda_1}(\bm r_i-\bm r_j),~~1\leq i,j \leq R,
\end{equation}
where $D_{\Lambda_1}(\bm r) := \sum_{\bm k \in \Lambda_1} e^{j2\pi \bm k \cdot \bm r}$ is the \emph{Dirichlet kernel} supported on $\Lambda_1$. Note that $\GR(P)$ has ones along the diagonal, and the magnitude of the off-diagonal entries is dictated by the distances $|\bm r_i - \bm r_j|$ and the filter support $\Lambda_1$. We now define the \emph{incoherence measure} $\rho$ associated with the edge set $E=\{\mu_0=0\}$ in terms of $\GR(P)$.

\begin{defn}\label{def:incoherence}
Suppose the edge set curve $E=\{\mu_0 = 0\}$ has bandwidth $\Lambda_0$ (see \eqref{eq:trigpoly}), and set $R = |\Lambda_1|-|\Lambda_1\colon\Lambda_0|$. Define the incoherence measure $\rho$ by
\begin{equation}\label{eq:incoherence}
\rho = \min_{\substack{P \subset \{\mu_0=0\}\\|P|=R}} \frac{1}{\lambda_{min}[\GR(P)]},
\end{equation}
where $\lambda_{min}[\GR(P)]$ is the minimum eigenvalue of $\GR(P)$. 
\end{defn}
Put in words, among all possible arrangements of $R$ points along the edge-set $\{\mu_0=0\}$, we seek the arrangement such that the minimum eigenvalue $\GR(P)$ is as large as possible. Intuitively, the optimal arrangement will maximize the minimum separation distance among the $R$ points, and $\rho$ can be thought of as a measure of this geometric property. In particular, edge set curves that enclose a small area, and hence require the points $P$ to be closely spaced along the curve, will result in a large value of $\rho$. According to Theorem \ref{mainthm}, the measurement burden will be high for such curves.

Note that curves corresponding to a particular bandwidth can come in different sizes. Specifically, for a fixed $\mu_0$ with bandwidth $\Lambda_0$ consider the family of curves $\{\mu_0=\alpha\}$, where $\alpha$ is a scalar. One can change $\alpha$ to obtain multiple curves with exactly the same bandwidth, each of which correspond to a different levelset of $\mu_0$. These level-sets will have different incoherence measures, depending on how large or small the level-set curves are. This shows the incoherence of an edge set captures something besides its bandwidth. See Figure \ref{fig:incoherence} for an illustration.

We can give incoherency measure of an edge set a more precise geometric interpretation based on the minimum separation distance of a set of admissible nodes. We generalize a bound on the condition number of Vandermonde matrices derived in \cite{moitra2015super} to the case of the Vandermonde-like matrix \eqref{rowspace}, and use this to derive a bound for the incoherence parameter $\rho$. 
\begin{thm} 
\label{geometry}
Assume that the points $P =\{( x_i, y_i)\}_{i=1}^R$ belonging to the curve $\{\mu_0=0\}$ satisfy $|x_i-x_j| > \Delta$ and $|y_i-y_j| > \Delta$ for all $i\neq j$. Assume the filter support $\Lambda_1 \subset \mathbb{Z}^2$ is a square region symmetric around the origin of size $\sqrt{|\Lambda_1|}\times \sqrt{|\Lambda_1|}$. Then  
\begin{equation}\label{eq:incoherence_bound}
	\rho \leq \bkt{1-\frac{1}{\sqrt{|\Lambda_1|}~\Delta}}^{-2},
\end{equation}
where $\rho$ is the incoherence parameter \eqref{eq:incoherence} associated with the curve $\{\mu_0 = 0\}$.
\end{thm}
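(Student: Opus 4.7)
The plan is to exploit the tensor-product structure of the square filter support $\Lambda_1$, reducing the 2-D conditioning problem for $\GR(P)$ to a pair of 1-D Vandermonde conditioning problems for which a sharp bound of Moitra \cite{moitra2015super} applies. The subtlety, which I will address explicitly, is that a naive combination of the two 1-D estimates via Schur's product theorem would only yield an unsquared bound; a duality argument through \emph{separable interpolators} is needed to recover the squared exponent appearing in \eqref{eq:incoherence_bound}.

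First, since $\Lambda_1$ is an $M\times M$ square symmetric about the origin with $M=\sqrt{|\Lambda_1|}$, the 2-D Dirichlet kernel factors as $D_{\Lambda_1}(x,y)=D_M(x)D_M(y)$. Substituting into the entry formula $\GR(P)_{ij}=D_{\Lambda_1}(\bm r_i-\bm r_j)/|\Lambda_1|$ will give the Hadamard-product decomposition
\begin{equation*}
\GR(P) \;=\; G_x \circ G_y,
\end{equation*}
where $G_x,G_y$ are the normalized 1-D Gram matrices formed from the $x$- and $y$-coordinates of $P$. The hypothesis that both coordinate sets are $\Delta$-separated (with $M\Delta>1$) then lets me invoke Theorem~2.3 of \cite{moitra2015super} to obtain
\begin{equation*}
\lambda_{\min}(G_x),\;\lambda_{\min}(G_y)\;\geq\;1-\tfrac{1}{M\Delta}.
\end{equation*}

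To lift these 1-D bounds to a \emph{squared} bound on $\lambda_{\min}(\GR(P))$, I will use the min-norm interpolation identity
\begin{equation*}
\|\GR^{-1}\|_{\mathrm{op}} \;=\; |\Lambda_1|\,\max_{\|c\|=1}\min_Q \|Q\|_{L^2}^2,
\end{equation*}
where $Q$ ranges over trigonometric polynomials bandlimited to $\Lambda_1$ with $Q(\bm r_i)=c_i$. For each $c$, I will construct the separable test interpolator $Q(x,y)=\sum_i c_i\,p_i(x)\,q_i(y)$, where $p_i$ and $q_i$ are the 1-D minimum-norm interpolators at $\{x_i\}$ and $\{y_i\}$, respectively. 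A short Fourier computation identifies the Gram matrices of the collections $\{\hat p_i\}$ and $\{\hat q_i\}$ as $G_x^{-1}/M$ and $G_y^{-1}/M$, which, combined with the tensor-product form of $Q$, will give
\begin{equation*}
\|Q\|_{L^2}^2 \;=\; \tfrac{1}{M^2}\,c^*\bigl(G_x^{-1}\circ G_y^{-1}\bigr)c.
\end{equation*}

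Finally, I will apply the classical Schur upper bound (Horn--Johnson Thm.~5.2.12) to the PSD matrices $G_x^{-1}$ and $G_y^{-1}$ to conclude
\begin{equation*}
\bigl\|G_x^{-1}\circ G_y^{-1}\bigr\|_{\mathrm{op}} \;\leq\; \lambda_{\max}(G_x^{-1})\cdot\max_i(G_y^{-1})_{ii} \;\leq\; \bigl(1-\tfrac{1}{M\Delta}\bigr)^{\!-2}.
\end{equation*}
Chaining these displays produces $\|\GR^{-1}\|_{\mathrm{op}}\leq (1-1/(M\Delta))^{-2}$, i.e.\ $\lambda_{\min}(\GR(P))\geq (1-1/(M\Delta))^{2}$, from which \eqref{eq:incoherence_bound} follows since $\rho$ is the minimum of $1/\lambda_{\min}(\GR(P'))$ over admissible node sets and the hypothesis $M\Delta>1$ ensures the given $P$ is itself admissible. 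The main obstacle is precisely this passage to inverses: a direct Schur application to $\GR=G_x\circ G_y$ loses a factor of $(1-1/(M\Delta))$, and it is the separable-interpolator construction above that supplies the missing factor and delivers the squared exponent.
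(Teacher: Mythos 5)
Your proof is correct, but it travels a genuinely different road from the paper's. The paper proves the bound directly in 2-D: it writes $\|\ER\mbf u\|^2$ as an integral of $|v(\mbf f)|^2$ against the separable indicator $I_E(f_1)I_E(f_2)$ times a 2-D Dirac comb, replaces each indicator factor by its Beurling--Selberg minorant $c_E$, and uses the $\Delta$-separation of both coordinate sets to annihilate every off-diagonal term of the resulting double sum, leaving $\widehat{c_E}(0)^2/|\Lambda_1| = \bkt{1-\tfrac{1}{\sqrt{|\Lambda_1|}\Delta}}^2$ as a lower bound on $\lambda_{\min}(\GR)$ in one stroke. You instead treat Moitra's 1-D conditioning result as a black box, factor $\GR = G_x\circ G_y$, and transfer the 1-D bounds to 2-D through the min-norm interpolation duality and the Schur upper bound on $\lambda_{\max}(G_x^{-1}\circ G_y^{-1})$; every step of that chain checks out (the identity $\|\GR^{-1}\| = |\Lambda_1|\max_{\|c\|=1}\min_Q\|Q\|_{L^2}^2$, the Gram matrix $(MG_x)^{-1}$ of the cardinal interpolators, and the separable test function are all sound, and the hypothesis $\sqrt{|\Lambda_1|}\,\Delta>1$ forces $R<\sqrt{|\Lambda_1|}$ so the 1-D systems are invertible). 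One correction to your framing, though: the ``naive'' route you dismiss actually proves \emph{more}, not less. The Schur product \emph{lower} bound $\lambda_{\min}(G_x\circ G_y)\geq\lambda_{\min}(G_x)\min_i(G_y)_{ii}=\lambda_{\min}(G_x)\geq 1-\tfrac{1}{\sqrt{|\Lambda_1|}\Delta}$ gives $\rho\leq\bkt{1-\tfrac{1}{\sqrt{|\Lambda_1|}\Delta}}^{-1}$ in one line, and since this quantity is \emph{smaller} than $\bkt{1-\tfrac{1}{\sqrt{|\Lambda_1|}\Delta}}^{-2}$, it implies the theorem immediately; the squared exponent is the weaker conclusion, not the harder one to reach. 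Your duality detour is therefore unnecessary (it deliberately recovers a lossier constant to match the stated theorem), but it is not wrong, and as a by-product it shows the separable-interpolator construction is what costs the extra factor relative to the optimal 2-D interpolant.
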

See Section \ref{sec:supp_moitra} of the Supplementary Materials for the proof.
The bound in \eqref{eq:incoherence_bound} shows that the incoherence is close to one (i.e., is as small as possible) when $\Delta \gg 1/\sqrt{|\Lambda_1|}$. Since $\Delta$ is the spacing between each pair of points on the curve, to achieve a larger $\Delta$ spacing, and hence a smaller $\rho$, requires a larger curve. This suggests that fewer measurements are required to recover a larger curve, which is consistent with the findings in the isolated Dirac setting \cite{moitra2015super,ye2015compressive}. 

\begin{figure}[ht!]
\centering
\begin{minipage}{\columnwidth}
\subfloat[][Level-sets of $\mu_0$]{\includegraphics[height=0.23\textwidth]{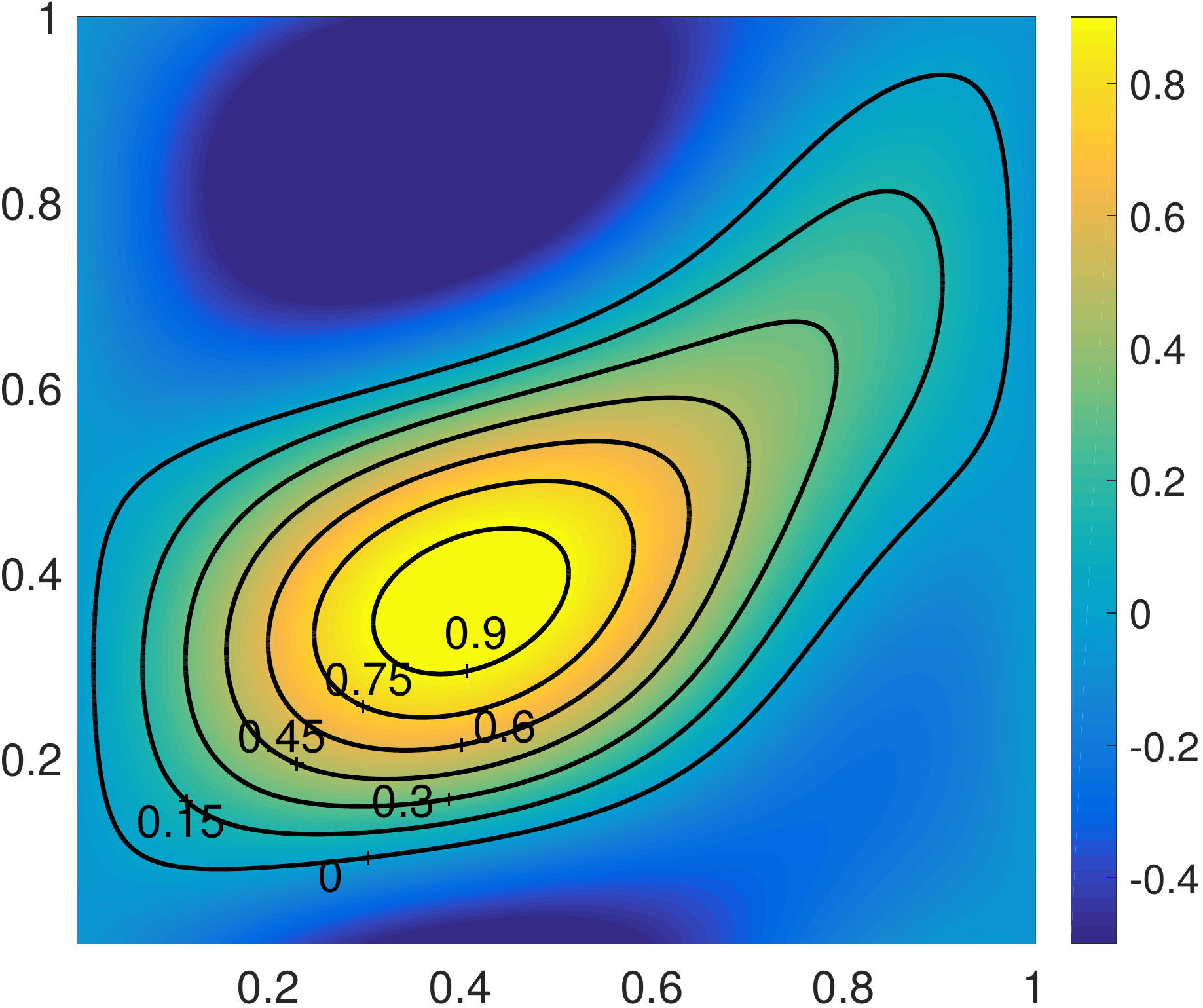}}~
\subfloat[][$\rho \leq 8.0$]{\includegraphics[height=0.23\textwidth]{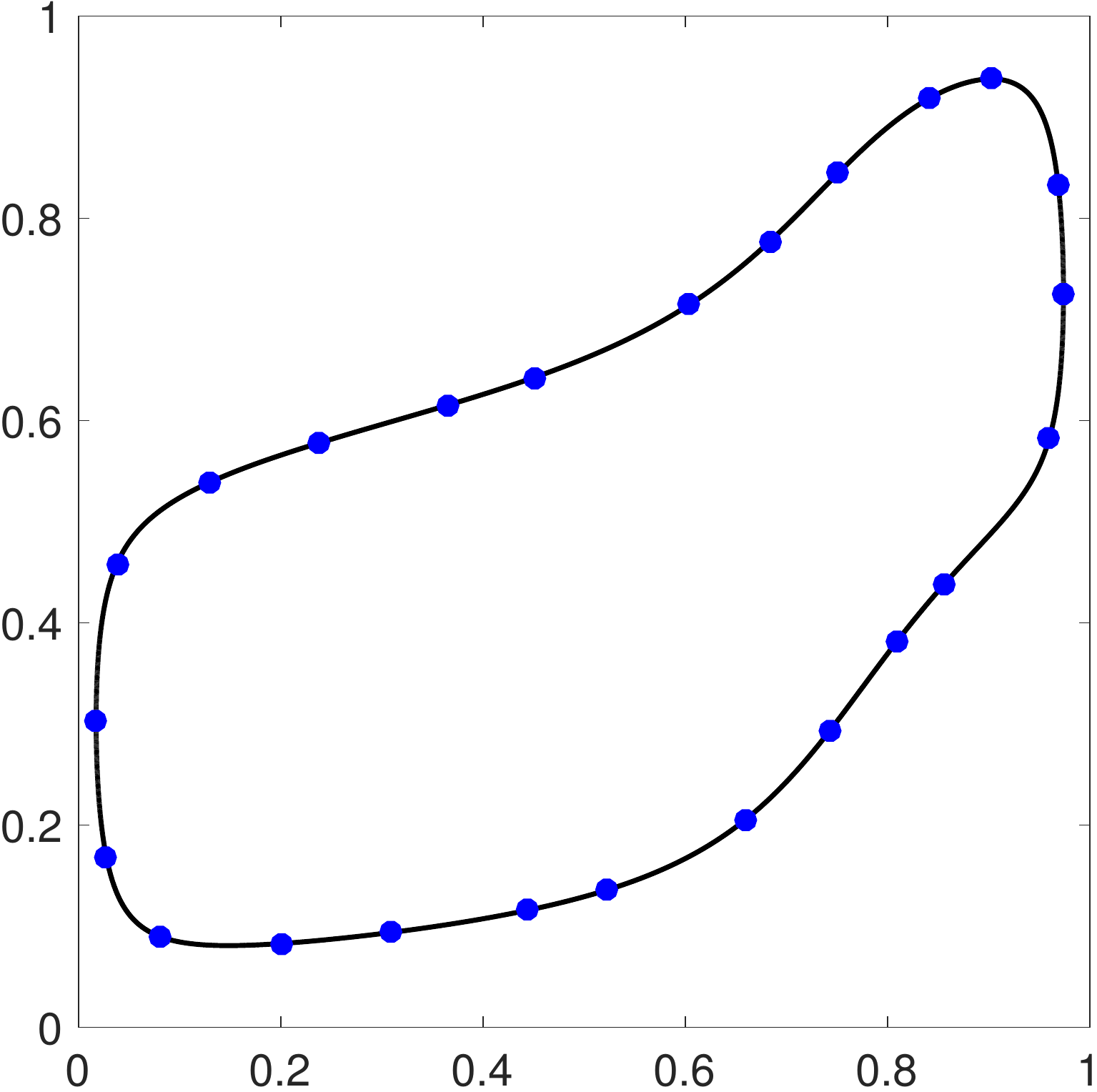}}~
\subfloat[][$\rho \leq 264.9$]{\includegraphics[height=0.23\textwidth]{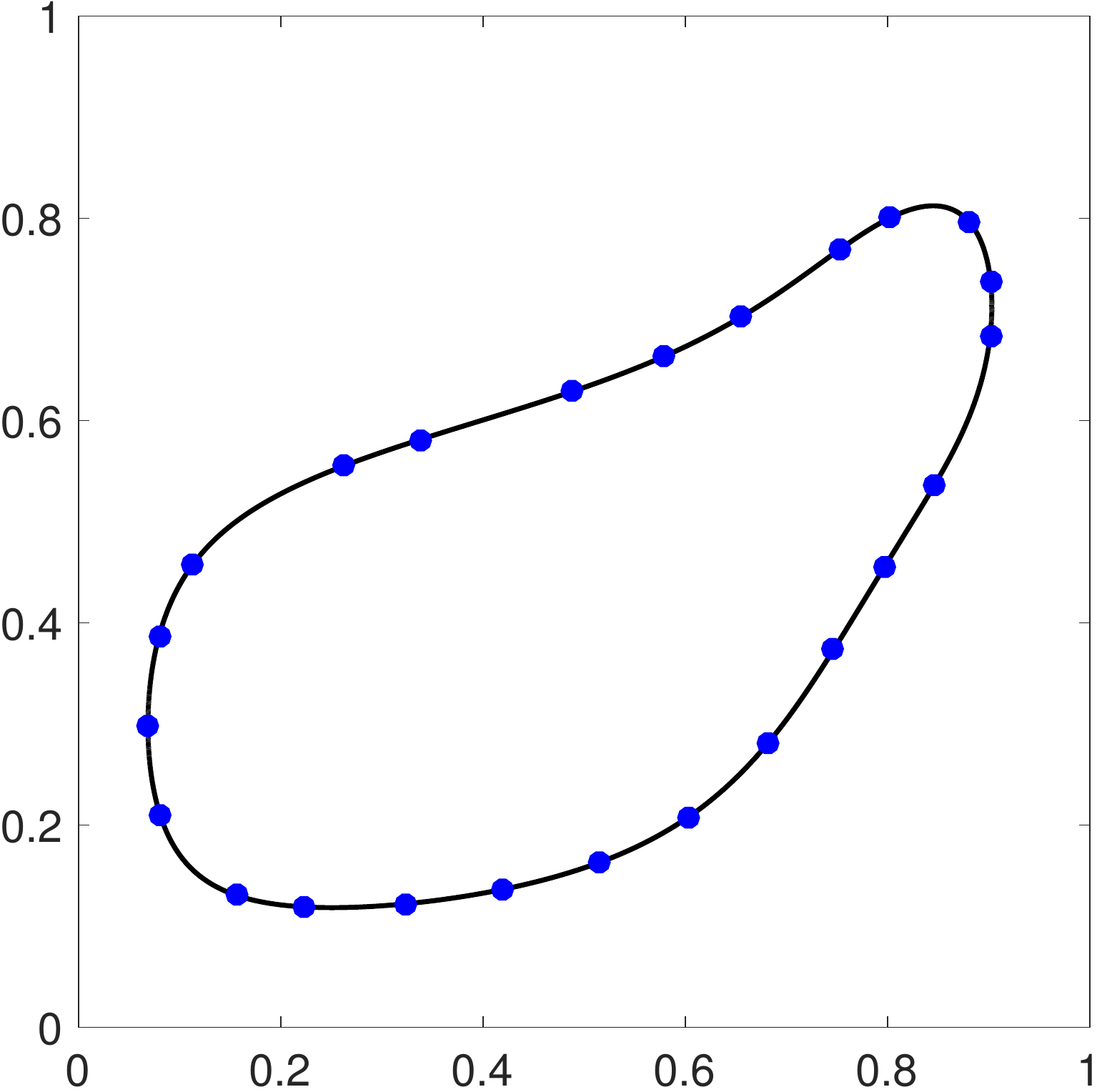}}~
\subfloat[][$\rho \leq 5.0\times 10^4$]{\includegraphics[height=0.23\textwidth]{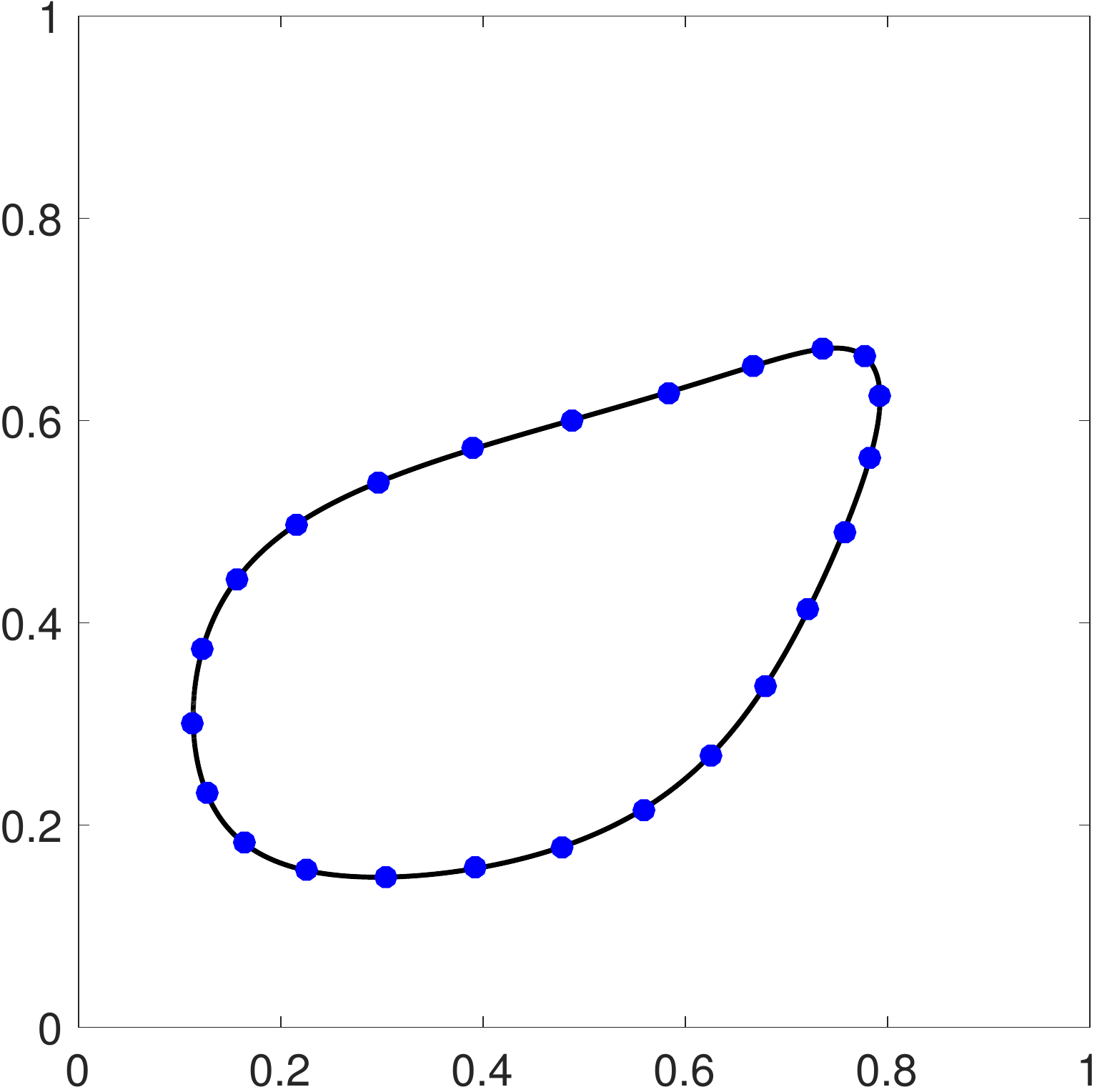}}
\end{minipage}
\caption{\small Illustration of edge set incoherence measure $\rho$. In (a) are the level-sets of trigonometric polynomial $\mu_0$ bandlimited to $\Lambda_0$ of size $3\times 3$. These curves all have the same bandwidth, $\Lambda_0$, but come in different sizes. In (b)-(d) we show $R=24$ nodes on the curve giving the indicated bound on incoherence parameter $\rho$ defined in \eqref{eq:incoherence}, assuming a filter $\Lambda_1$ of size $7\times 7$. Observe that the incoherence measure increases as the curve gets smaller. This indicates the smaller curves have a significant sampling burden.}
\label{fig:incoherence}
\end{figure}

\section{Numerical Experiments}\label{sec:exp}
\subsection{Algorithms}
For small to moderate problem sizes the nuclear norm minimization problem \eqref{nucnorm} can be solved efficiently with the alternating directions method of multipliers (ADMM) algorithm, which results in a modification of the singular value thresholding (SVT) algorithm \cite{cai2010singular}. This approach has been proposed for related structured low-rank matrix completion problems in several works, e.g., \cite{fazel2013hankel,chen2014robust,sampta2015,ye2016compressive}. We adopt this approach here as well for our small-scale numerical experiments. A detailed implementation of this algorithm can be found in, e.g., \cite{ye2015compressive}. However, we note that for large scale problems, such as those encountered in realistic imaging applications, more efficient approaches need to be adopted, because often in these cases the lifted matrix is too large to be held in memory. A fast algorithm for solving an approximation to \eqref{nucnorm} for large-scale problems is given in \cite{girafarxiv}. 
\subsection{Phase transitions}
In Fig.~\ref{fig:phasetransition}, we study the probability of exact recovery under different assumptions on the filter size and edge set of the image. For these experiments the reconstruction grid $\Gamma$ was of size $65\times 65$. We generated synthetic random piecewise constant functions with known edge set bandwidth (see Fig. \ref{fig:incoherence}(c)), and attempted to recover their Fourier coefficients in $\Gamma$ from random samples in $\Omega$ at the specified undersampling factor. For each set of parameters we ran 10 random trials. We count the recovery as ``exact'' if the recovered coefficients $\hat f$ satisfied $\|\hat f -\hat f_0\|/\|\hat f_0\| < 10^{-3}$, where $\hat f_0$ is the ground truth. The exact recovery rate was then obtained by averaging over the 10 trials. 

First, in Fig.~\ref{fig:phasetransition} (a), we studied the effect of changing the filter size $\Lambda_1$ on the recovery while keeping other parameters constant. We fixed the edge-set bandwidth to $|\Lambda_0|=9\times 9$ and varied the filter size as $|\Lambda_1| =(2K+1)\times(2K+1)$ for $K=1,...,30$. We call $K$ the filter bandwidth. Note that Theorem \ref{mainthm} has restrictions on how large $\Lambda_1$ can be. The maximum filter bandwidth for which Theorem \ref{mainthm} holds in this case was $K=15$ (red line in Figure \ref{fig:phasetransition}(a)), however we extended the filter size to observe the behavior of the algorithm outside of this regime. As predicted by Theorem \ref{mainthm}, we find that the optimal performance is obtained when $\Lambda_1$ is the largest as allowed by Theorem \ref{mainthm} (roughly half the size of $\Gamma$ in each dimension).

Next, in Fig.~\ref{fig:phasetransition}(b), we study the recovery as a function of the bandwidth of the edge-set of the image. The filter bandwidth was fixed at $K=15$, and we varied the edge-set bandwidth as $|\Lambda_0| = (2K_0+1)\times(2K_0+1)$. The phase transition shows dependence $|\Omega| = O(|\Lambda_0|)$ as predicted by Theorem \ref{mainthm}.

\begin{figure}[ht!]
\centering
\begin{minipage}{\columnwidth}
\begin{tabular}{cc}
\subfloat[][Varying filter bandwidth]{\includegraphics[height=0.35\textwidth]{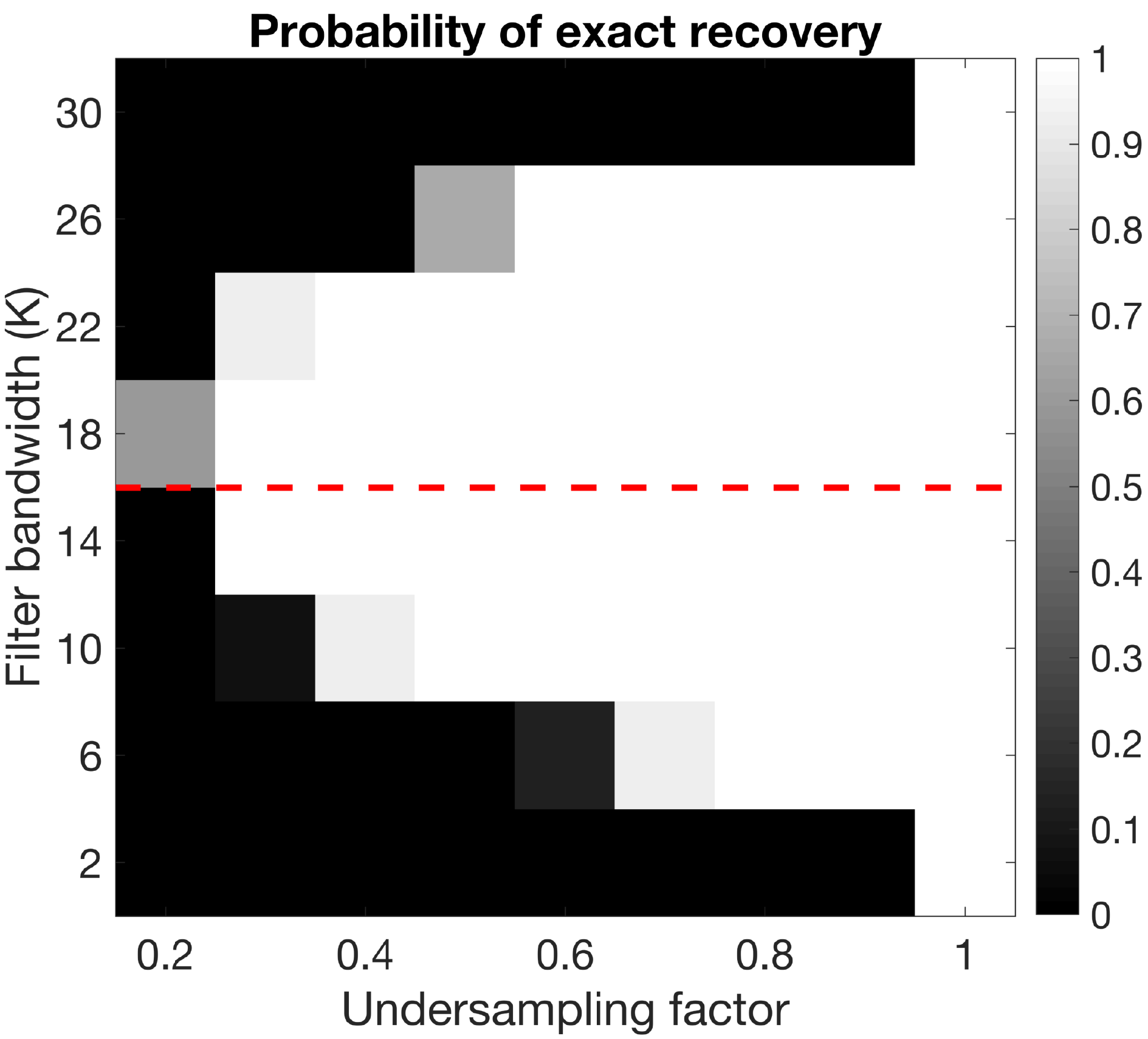}}
& 
\subfloat[][Varying edge set bandwidth]{\includegraphics[height=0.35\textwidth]{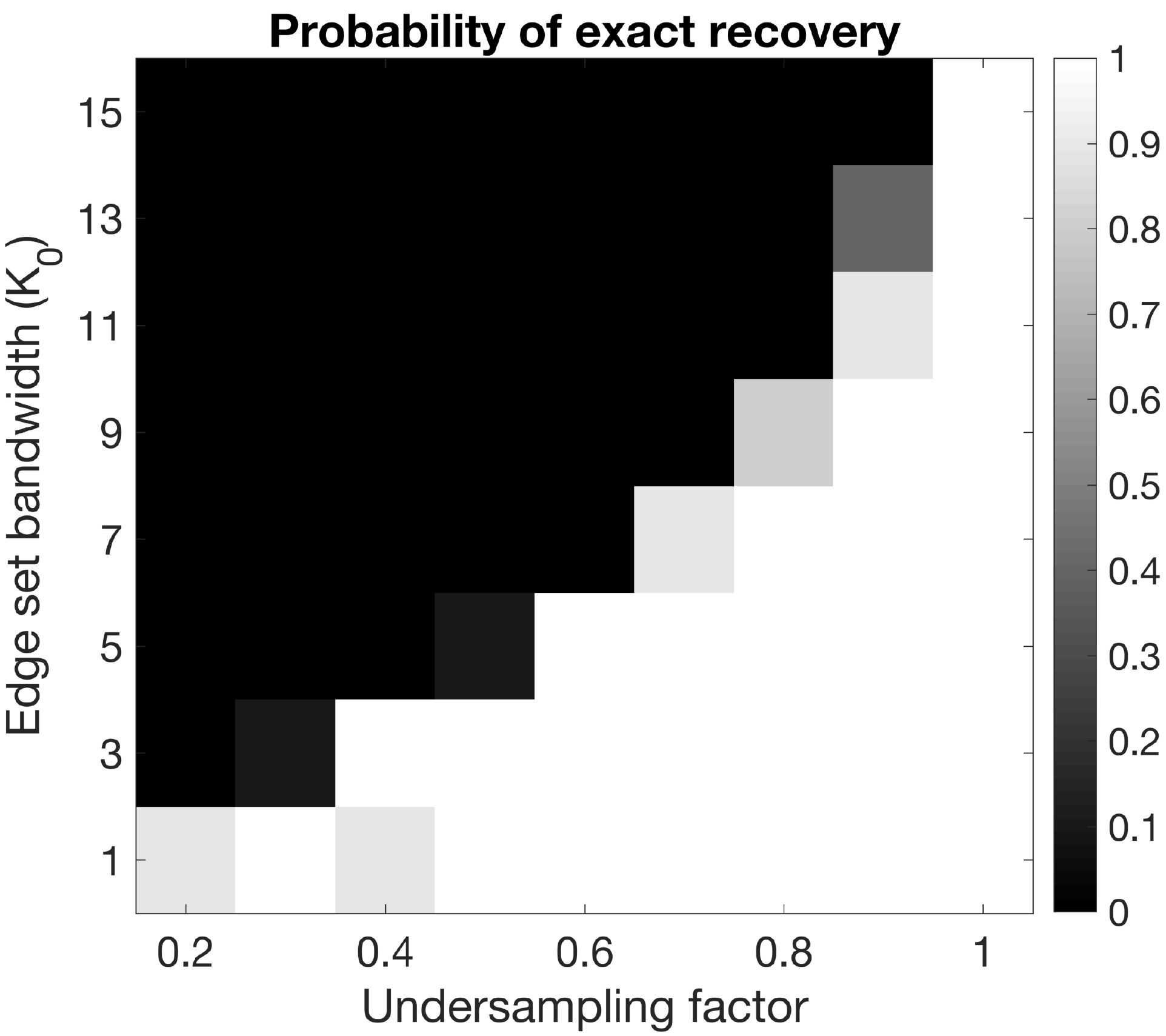}}
\\[2em]
\multicolumn{2}{c}{
\subfloat[][Examples of randomly generated piecewise constant images]{\begin{tabular}{cccc}
\includegraphics[width=0.18\textwidth]{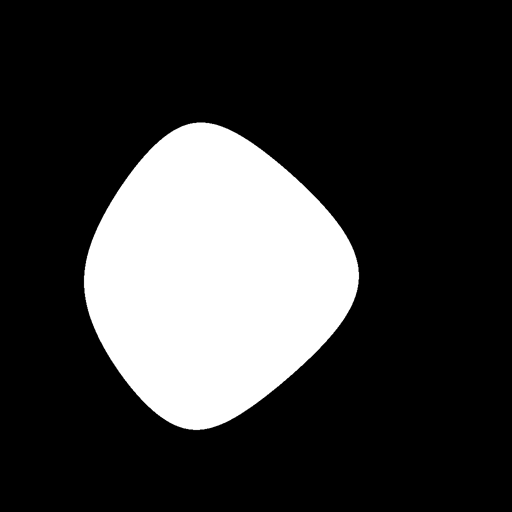} & 
\includegraphics[width=0.18\textwidth]{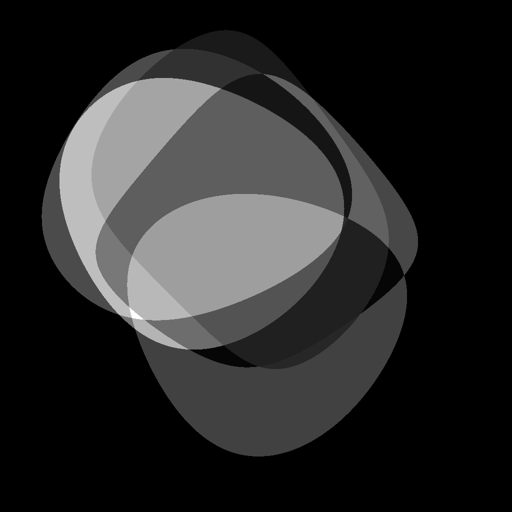} &
\includegraphics[width=0.18\textwidth]{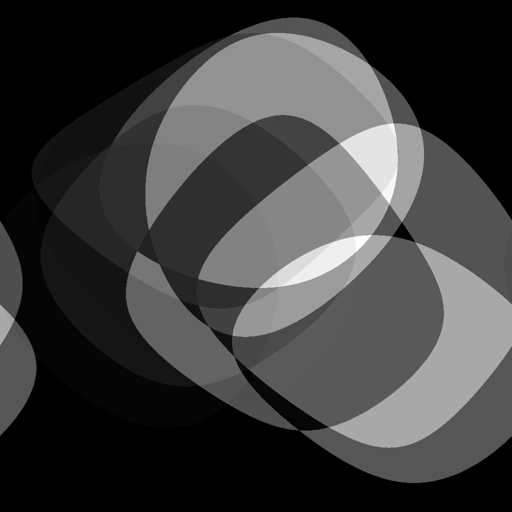} & 
\includegraphics[width=0.18\textwidth]{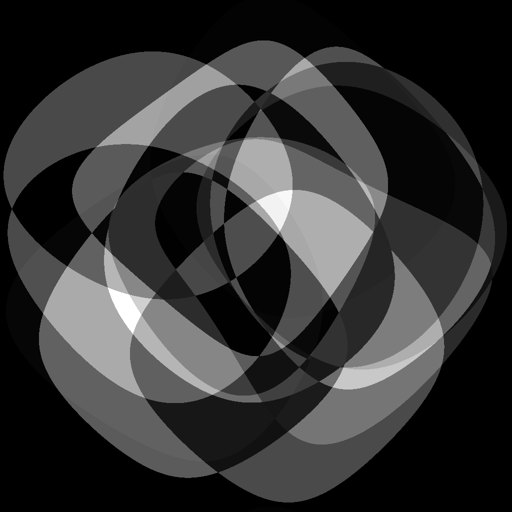} \\
$K_0 = 1$ & $K_0 = 5$ & $K_0 = 9$ & $K_0 = 13$ 
\end{tabular}}}
\end{tabular}
\end{minipage}
\caption{Phase transition experiments. We generated random piecewise constant images with known edge-set bandwidth and study the success rate proposed structured low-rank matrix completion scheme under two conditions: in (a) we vary the filter size $\Lambda_1$ while keeping the edge-set bandwidth $K_0$ fixed, in (b) we vary the edge-set bandwidth $K$ while keeping the filter size fixed. Examples of the randomly generated data are shown in (c).}
\label{fig:phasetransition}
\end{figure}
\subsection{Comparison with TV minimization on real MRI data}
We also compare the proposed Fourier domain interpolation scheme against standard discrete TV minimization in spatial domain: 
\begin{equation}
\label{tvrecon}
\min_{\bm u \in \mathbb{C}^{N}}~ TV(\bm u) ~\text{subject to}~ P_{\Omega}(\bm F \bm u) = P_{\Omega}(\bm F \bm u_0).
\end{equation}
Here $\bm u\in\mathbb{C}^{N}$ with $N=N_xN_y$ is a 2-D array representing a discrete $N_x\times N_y$ image, $\bm u_0\in\mathbb{C}^{N}$ is the image to be recovered, $\bm F \in \mathbb{C}^{N\times N}$ denotes the unitary 2-D discrete Fourier transform (DFT) matrix acting on $N_1\times N_2$ arrays, $\bm P_\Omega$ is projection onto the index of sampling locations $\Omega \subset [N_x] \times [N_y]$, and $TV(\cdot)$ denotes the (isotropic) total variation semi-norm:
\begin{equation}
TV(\bm u) = \sum_{i=1}^N(|(\bm \partial_1 \bm u)_{i}|^2 + |(\bm \partial_2 \bm u)_{i}|^2)^{\frac 1 2}
\end{equation}
where $\bm \partial_1$ and $\bm \partial_2$ are finite difference operators in the horizontal and vertical directions, respectively. The problem \eqref{tvrecon} has been studied extensively \cite{candes2006robust,candes2006stable,needell2013near,needell2013stable,krahmer2014stable,poon2015role} as a model for undersampled MRI reconstruction and other inverse problems in imaging. 


In Fig.~\ref{fig:mainresult} we perform an experiment comparing against TV minimization and the proposed approach on real MRI data. For this experiment we used a fully-sampled four-coil single-slice acquisition consisting of $256\times 256$ Cartesian k-space samples, which was compressed to a single virtual coil using an SVD-based technique \cite{zhang2013coil}. The data in the single virtual coil was observed to have smoothly varying complex phase in image domain. To compensate for this source of model-mismatch, we further pre-preprocessed the data by removing the complex phase in image domain. We note that this pre-processing step is unrealistic for a true MRI experiment. However, the optimization problem \eqref{nucnorm} could be modified to incorporate an estimate of the smoothly varying phase in the measurement model; we omit this step for simplicity. Finally, we retrospectively undersampled the pre-processed virtual single coil data, taking $50\%$ uniform random samples. We find that the proposed structured low-rank recovery shows significant improvement recovery error over standard total variation as measured by $\text{SNR} = 20\log_{10}(\|\hat f\|/\|\hat f^\ast-\hat f\|)$, where $\hat f^*$ is the recovered data and $\hat f$ is the ground truth. The error images indicate the proposed method more faithfully recovers the true edges of the image.

\begin{figure}[ht!]
\begin{minipage}{\linewidth}
\hspace{-1em}
\begin{tabular}{ccc}
Fully sampled & 
TV min. &
Proposed \\
\begin{minipage}{0.315\linewidth}
\begin{tikzpicture}
    \node[anchor=south west,inner sep=0] (image) at (0,0) {\includegraphics[height=\linewidth,width=0.9\linewidth,trim=60 70 150 70,clip,angle=-90,origin=c]{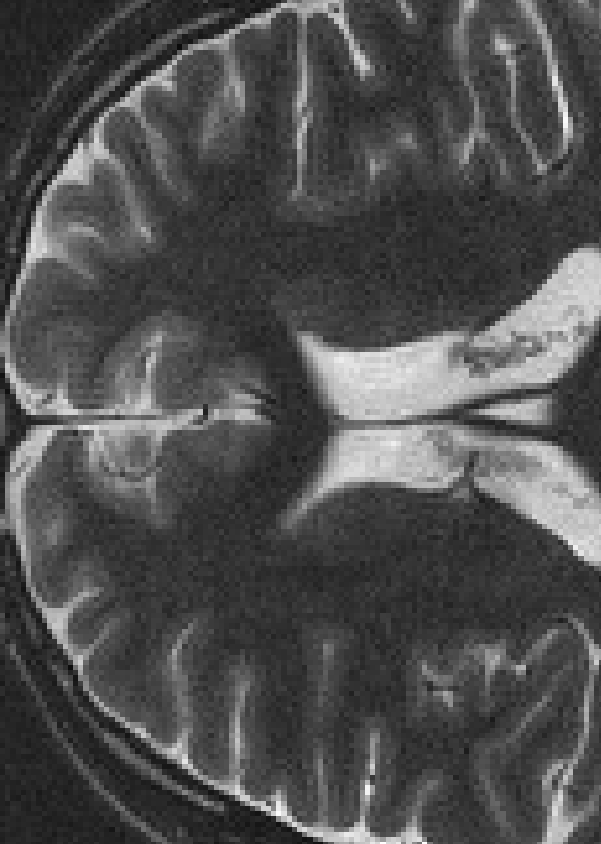}};
    \begin{scope}[x={(image.south east)},y={(image.north west)}]
        \draw[yellow,thick,-latex] (0.25,0.75) -- (0.40,0.80);
        \draw[yellow,thick,-latex] (0.42,0.2) -- (0.27,0.2);
    \end{scope}    
\end{tikzpicture}
\end{minipage}
& 
\hspace{-1em}
\begin{minipage}{0.315\linewidth}
\begin{tikzpicture}
    \node[anchor=south west,inner sep=0] (image) at (0,0) {\includegraphics[height=\linewidth,width=0.9\linewidth,trim=60 70 150 70,clip,angle=-90,origin=c]{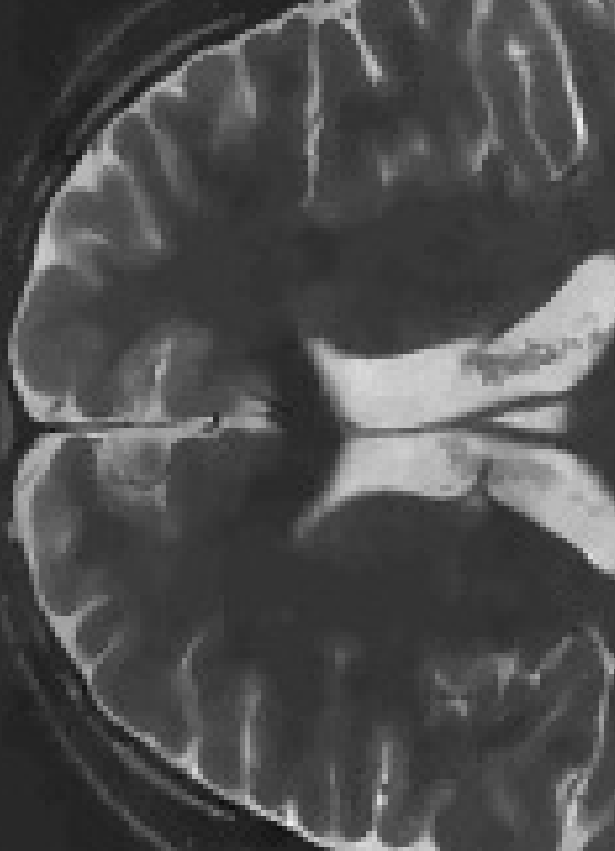}};
    \begin{scope}[x={(image.south east)},y={(image.north west)}]
        \draw[yellow,thick,-latex] (0.25,0.75) -- (0.40,0.80);
        \draw[yellow,thick,-latex] (0.42,0.2) -- (0.27,0.2);
    \end{scope}         
\end{tikzpicture}
\end{minipage}
& 
\hspace{-1em}
\begin{minipage}{0.315\linewidth}
\begin{tikzpicture}
    \node[anchor=south west,inner sep=0] (image) at (0,0) {\includegraphics[height=\linewidth,width=0.9\linewidth,trim=60 70 150 70,clip,angle=-90,origin=c]{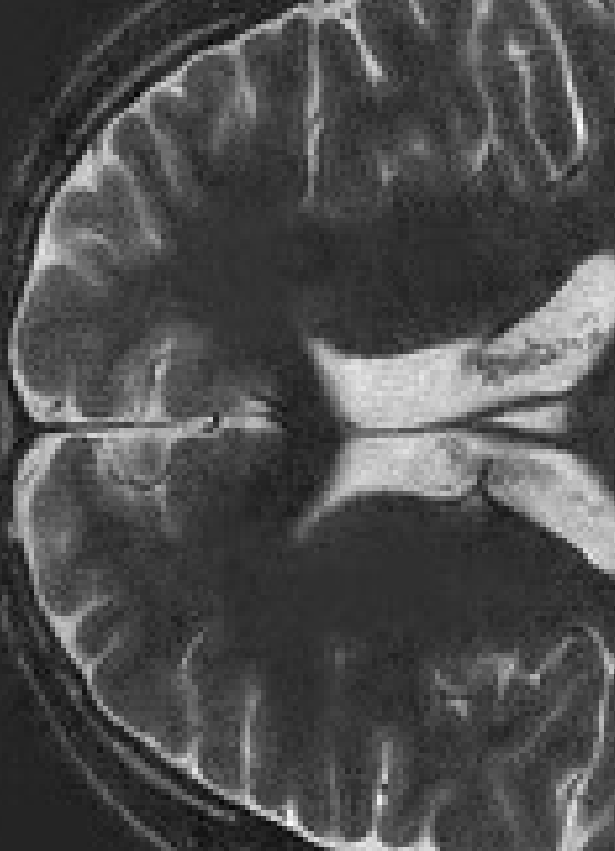}};
    \begin{scope}[x={(image.south east)},y={(image.north west)}]
        \draw[yellow,thick,-latex] (0.25,0.75) -- (0.40,0.80);
        \draw[yellow,thick,-latex] (0.42,0.2) -- (0.27,0.2);
    \end{scope}      
\end{tikzpicture}
\end{minipage}
\\
\begin{minipage}{0.315\linewidth}
\centering
\begin{tikzpicture}
    \node[anchor=south west,inner sep=0] (image) at (0,0) {\includegraphics[width=0.9\linewidth,height=0.75\linewidth,angle=-90,origin=c]{images/realbrain_cor2_hi_uniform0p5_redo_orig_gray.pdf}};
    \begin{scope}[x={(image.south east)},y={(image.north west)}]
         \draw[yellow,thick] (0.13,0.41) rectangle (0.85,0.85);
    \end{scope}  
\end{tikzpicture}
\end{minipage}
& 
\hspace{-1em}
\begin{minipage}{0.315\linewidth}
\begin{tikzpicture}
    \node[anchor=south west,inner sep=0] (image) at (0,0) {\includegraphics[height=\linewidth,width=0.9\linewidth,trim=60 70 150 70,clip,angle=-90,origin=c]{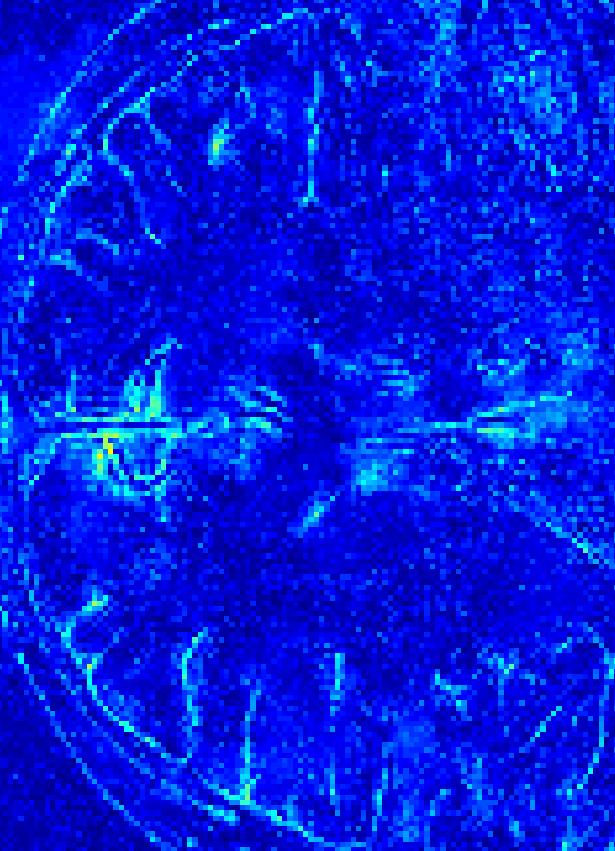}};
    \begin{scope}[x={(image.south east)},y={(image.north west)}]
        \draw[yellow,thick,-latex] (0.25,0.75) -- (0.40,0.80);
        \draw[yellow,thick,-latex] (0.42,0.2) -- (0.27,0.2);
    \end{scope}    
\end{tikzpicture}
\end{minipage}
&
\hspace{-1em}
\begin{minipage}{0.315\linewidth}
\begin{tikzpicture}
    \node[anchor=south west,inner sep=0] (image) at (0,0) {\includegraphics[height=\linewidth,width=0.9\linewidth,trim=60 70 150 70,clip,angle=-90,origin=c]{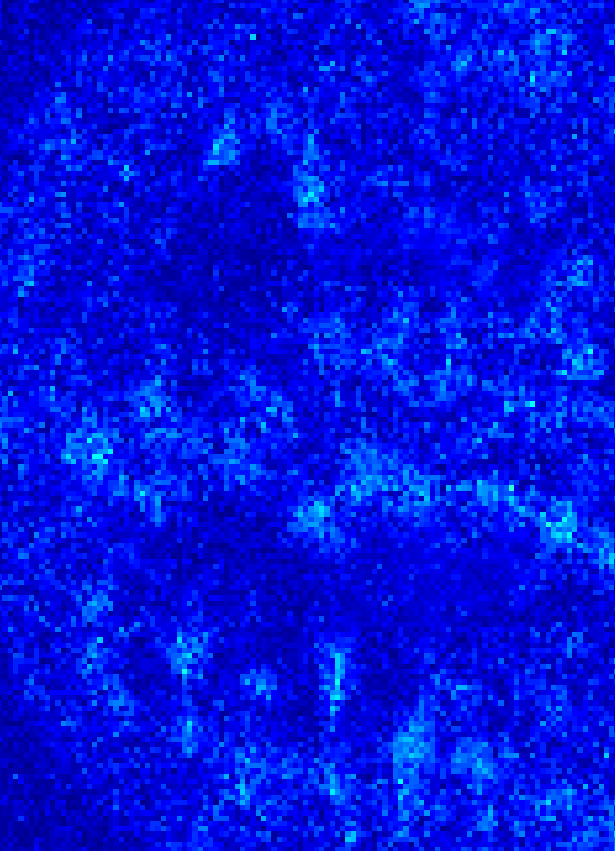}};
    \begin{scope}[x={(image.south east)},y={(image.north west)}]
        \draw[yellow,thick,-latex] (0.25,0.75) -- (0.40,0.80);
        \draw[yellow,thick,-latex] (0.42,0.2) -- (0.27,0.2);
    \end{scope}
\end{tikzpicture}
\end{minipage}
\\
 & 
SNR=17.8dB &  
SNR=19.0dB
\end{tabular}
\end{minipage}
\caption{Recovery of MRI data from 2-fold random uniform undersampling. Error images shown in bottom right.}
\label{fig:mainresult}
\end{figure}

\section{Discussion}
Discrete domain total-variation minimization has played a central role in compressed sensing from its inception \cite{candes2006robust,candes2006stable}, which models the image to be recovered as (approximately) piecewise constant. Since the present work can be thought of as an extension of compressed sensing type guarantees to the continuous domain setting, it is fruitful to explore the connections between our continuous domain model and discrete domain total variation. 

At first glance, the structured low-rank matrix completion problem \eqref{nucnorm} may seem far removed from the TV-minimization problem \eqref{tvrecon}. But, in fact, one can show TV-minimization \eqref{tvrecon} is equivalent to nuclear norm minimization of a related structured matrix lifting in Fourier domain. Specifically, \eqref{tvrecon} is equivalent to
\begin{equation}\label{eq:tvlifting}
	\min_{\bm v}\|\mathcal{C}(\bm F\bm u )\|_* ~\text{subject to}~ P_\Omega(\bm F \bm u) = P_\Omega(\bm F \bm u_0).
\end{equation}
Here \begin{equation}
\mathcal{C}(\bm F\bm u ) = 
\begin{bmatrix}
	\mathcal{C}_x(\bm F\bm u )\\
	\mathcal{C}_y(\bm F\bm u )
\end{bmatrix}\in\mathbb{C}^{2N\times N}
\end{equation}
and $\mathcal{C}_x(\bm F\bm u)$, $\mathcal{C}_y(\bm F\bm u)$ are block circulant with circulant blocks matrices whose first column is specified by the arrays $\bm v_x = \bm F\bm \partial_x \bm u$ and $\bm v_y = \bm F\bm \partial_x \bm u$. Assuming circular boundary conditions, we can write $(\bm v_x)[k_x,k_y] = (1-e^{j2\pi k_x/N_x})(\bm F\bm u)[k_x,k_y]$ and $(\bm v_y)[k_x,k_y] = (1-e^{j2\pi k_y/N_y})(\bm F\bm u)[k_x,k_y]$.  

We find it interesting to use this re-formulation of TV-minimization to better understand the proposed approach. In Table \ref{table:compare} we summarize the similarities and differences. One essential difference is the dimensions of the matrix liftings. In particular, the matrix lifting we propose has dimensions $2|\Lambda_2|\times|\Lambda_1|$, with $|\Lambda_1| \ll |\Lambda_2|$ whereas the matrix lifting associated with TV in \eqref{eq:tvlifting} has dimensions $2N\times N$. If the reconstruction grid size is the same in both cases, i.e., $|\Gamma| = N$, then the proposed matrix lifting has substantially fewer columns than the one associated with TV. This is due to our assumption that edge set of the image has low bandwidth. In other words, we restrict the degrees of freedom of the model by constructing a lifting with fewer columns. We believe this difference may explain the success of the proposed method over TV-minimization observed empirically in Section \ref{sec:exp}.

\begin{table}
\centering
\begin{tabular}{c|c|c}
 & TV-minimization & Proposed\\
 \hline
Spatial domain & discrete & continuous\\
Derivative operator & finite differences & exact derivative\\
Singularity set & discrete points & connected curves\\
\hline
Frequency domain & discrete & discrete\\
Frequency weighting $w_i[\bm k]$ & $1-e^{j2\pi k_i/N_i}$ & $j2\pi k_i$\\
Lifted matrix structure & two-level circulant & two-level Toeplitz\\
Rank of lifted matrix & sparsity of & bandwidth of \\
   & discrete gradient & edge set\\
\hline
\end{tabular}
\vspace{1em}
\caption{Comparison of proposed scheme with discrete total variation minimization}
\label{table:compare}
\vspace{-1em}
\end{table}
\section{Conclusion}
We derived performance guarantees for the recovery of piecewise constant images from random non-uniform Fourier samples via a convex  structured low-rank matrix completion problem. This was achieved by adapting results in \cite{chen2014robust} to the case of a low-rank block two-fold Toeplitz matrix with an additional weighting scheme that arises naturally when considering piecewise constant images. We also define incoherence measures that rely only on geometric properties of the edge set, which indicate that the sampling burden is higher for images with smaller piecewise constant regions.


The recovery guarantees in this work studied the case of uniform random samples. However, in practice we observe that recovery works well with when considering other types of variable density random sampling, where the low spatial frequencies are more heavily sampled. It would be interesting to adapt our results to a wider variety of sampling distributions, and to identify the optimal sampling strategy for signals belonging to our image model.

\section{Appendix A: Incoherence Bounds}\label{sec:appendixA}
\setcounter{subsection}{0}
\subsection{Notation and Preliminaries}

To simplify our arguments, we will convert the linear operators $\Tf$ and $\Tf^*$ defined in Fourier domain to linear operators acting on spaces of trigonometric polynomials \eqref{eq:trigpoly} in spatial domain. Specifically, for any index set $\Omega \subset \mathbb{Z}^2$, let $B_\Omega$ denote the vector space of all trigonometric polynomials that have coefficients supported within $\Omega$. Similarly, we denote the space of vector fields $\bm \rho = (\rho_1,\rho_2)$ with components $\rho_1,\rho_2 \in B_\Omega$ as $B^2_\Omega$. We set $\mathcal{S}(f)= \mathcal{F}\Tf\mathcal{F}^{-1}$, where $\mathcal{F}$ is the Fourier transform of a periodic function on $[0,1]^2$. For any index set $\Lambda$, define the Dirichlet kernel $D_{\Lambda_1}(\bm r) := \sum_{\bm k \in \Lambda_1} e^{j2\pi \bm k \cdot \bm r}$. For all $\varphi \in B_{\Lambda_1}$, the action of the linear operator $\mathcal{S}(f):B_{\Lambda_1}\rightarrow B_{\Lambda_2}^2$ can be expressed compactly as
\begin{equation}
\mathcal{S}(f)\varphi = D_{\Lambda_2}\ast(\varphi\,\nabla f)\in B_{\Lambda_1}^2,
\end{equation}
where $\varphi\,\nabla f$ is understood as a tempered distribution, and the convolution is applied separately to each vector field component. Here convolution with $D_{\Lambda_2}$ is a bandlimiting operation. Simliarly, for $\bs \rho =(\rho,\rho_2) \in B^2_{\Lambda_2}$, the adjoint $\mathcal{S}(f)^*$ acts as
\begin{equation}
\mathcal{S}(f)^*\bs\rho = D_{\Lambda_1}\ast(\bs\rho \cdot {\nabla f}) \in B_{\Lambda_1}
\end{equation}
which is the spatial domain equivalent of the adjoint matrix $\Tf^*$. More expliclty, if $f = 1_U$ where $U$ is a simply connected region with smooth boundary $\partial U$, a straightforward argument using the divergence theorem shows that the function $\mathcal{S}(f)\varphi$ is given pointwise as the weighted curve integral
\begin{equation}
(\mathcal{S}(f)\varphi)(\bm r) = \oint_{\partial U} D_{\Lambda_2}(\bm r-\bm r')\,\bm n(\bm r')\,ds(r'),
\end{equation}
for all $\bm r \in [0,1]^2$, where $\bm n(\bm r')$ is the outward unit normal to the curve $\partial U$ at $r'$, and $ds$ is the arc-length element. Likewise, $\mathcal{S}(f)^*\bm \rho$ is the function given pointwise by
\begin{equation}\label{eq:Sfstar}
(\mathcal{S}(f)^*\bm \rho)(\bm r) = \oint_{\partial U} D_{\Lambda_1}(\bm r-\bm r')\,[\bm\rho(\bm r')\cdot\bm n(\bm r')]\,ds(r'),
\end{equation}
for all $\bm r \in [0,1]^2$. These formulas can be generalized to an arbitrary piecewise constant function $f = \sum_{i} a_i 1_{U_i}$ by linearity. However, in the remainder we focus on the case where $f = 1_U$ to simplify our arguments.

\subsection{Fundamental subspaces of $\mathcal{S}(f)$ and dimensions}
Under the conditions of Theorem \ref{prop:rank}, the nullspace of $\Tf$ is spanned by shifts of the minimal annihilating filter, $\widehat{\mu_0}$. In spatial domain, this space consists of all multiples of the minimal degree polynomial $\gamma = \eta\,\mu_0$ such that $\gamma$ is bandlimited to $\Lambda_1$. We denote this space by
\begin{equation}
(\mu_0)_{\Lambda_1} := \{\eta\,\mu_0 : \eta \in B_{\Lambda_1\colon\Lambda_0}\}.
\end{equation}
Note that $(\mu_0)_{\Lambda_1}$ is a subspace of $B_{\Lambda_1}$ with dimension $|\Lambda_1\colon\Lambda_0|$. Therefore, the dimension of the kernel of $\mathcal{S}(f)$, denoted by $\ker \mathcal{S}(f)$, is given by
\begin{equation}
\dim \ker \mathcal{S}(f) = |\Lambda_1\colon\Lambda_0|.
\end{equation} 
By the rank-nullity theorem, the dimension of the image of $\mathcal{S}(f)$, denoted by $\text{im}\,\mathcal{S}(f)$, is
\begin{equation}
\dim \text{im}\,\mathcal{S}(f) =  |\Lambda_1| - |\Lambda_1\colon\Lambda_0| = R.
\end{equation}
Likewise, the dimension of the coimage $\text{im}\,\mathcal{S}(f)^*$ is also $R$. Furthermore, since $\text{im}\,\mathcal{S}(f)^* = [\ker S(f)]^\perp$, we have
\begin{equation}
\text{im}\,\mathcal{S}(f)^* = (\mu_0)^\perp_{\Lambda_1}
\end{equation}
This means that any $\gamma \in B_{\Lambda_1}$ is in the row space if and only if $\gamma$ is orthogonal to every trigonometric polynomial of the form $\eta\,\mu_0 \in  B_{\Lambda_1}$, or equivalently,
\begin{equation}
\langle \gamma, \eta\,\mu_0 \rangle = \int_{[0,1]^2}\gamma(\bm r) \overline{\eta(\bm r) \mu_0(\bm r)}\,d\bm r = 0
\end{equation}
for all $\eta \in B_{\Lambda_1\colon\Lambda_0}$.


\subsection{Basis for the coimage of $\mathcal{S}(f)$ (corresponding to the row space of $\mathcal T(\hat f)$)}
\label{rowspaceappendix}
Let $\bm s \in [0,1]^2$, and set $\varphi_{\bm s} \in B_{\Lambda_1}$ to be the translated Dirichlet kernel: 
\begin{equation}
\varphi_{\bm s}(\bm r) = D_{\Lambda_1}(\bm r - \bm s)~~\text{for all}~~\bm r \in [0,1]^2.
\end{equation}
Equivalently, $\varphi_{\bm s} \in B_{\Lambda_1}$ is the trigonometric polynomial specified in Fourier domain as
\begin{equation}
\widehat{\varphi_{\bm s}} [\bm k] = 
\begin{cases}
e^{-j2\pi \bm s\cdot \bm k} & \text{if }\bm k \in \Lambda_1 \\
0 & \text{if }\bm k \not\in \Lambda_1
\end{cases}.
\end{equation}
Observe that the inner product of $\varphi_{\bm s}$ with any other trigonometric polynomial $\eta \in B_{\Lambda_1}$ is given by the point-evaluation of $\eta$ at $\bm s$:
\begin{equation}
\langle \eta, \varphi_{\bm s} \rangle = \sum_{\bm k \in \Lambda_1} \hat\eta[\bm k] e^{j2\pi \bm k \cdot \bm s} = \eta(\bm s). 
\end{equation}
Suppose now that the point $\mbf s$ satisfies $\mu_0(\bm s) = 0$. In this case, we see that $\varphi_{\bm s}$ is necessarily in the coimage $\text{im}\,\mathcal{S}(f)^* = (\mu_0)_{\Lambda_1}^\perp$ since we have
\begin{align}
\label{orthog}
\langle \gamma\mu_0, \varphi_{\bm s} \rangle = \gamma(\mbf s)\mu_0(\mbf s) = 0.
\end{align}
for any multiple of the minimal polynomial $\gamma \mu_0 \in B_{\Lambda_1}$, i.e., any element in $\ker \mathcal{S}(f) = (\mu_0)_{\Lambda_1}$. 

We will now show how to construct a basis for the coimage of $\mathcal{S}(f)$ out of elements having the form $\varphi_{\bm r_i}$ for some $\bm r_i$, $i=1,...,R$ belonging to the zero set of $\mu_0$. For an arbtirary collection of $R$ points $\{\bm r_i\}_{i=1}^R\subset \{\mu_0 = 0\}$, we are not guaranteed that the set of functions $\{\varphi_{\bm r_i}\}_{i=1}^R$ is linearly independent. However, we will show that there exists a constant $M = M(\Lambda_0,\Lambda_1)$ such that \emph{for any} $M$ distinct points $\{\bm r_i\}_{i=1}^M \subset \{\mu_0 = 0\}$ we can always find a subset of $R$ linearly independent basis functions from the collection $\{\varphi_{\bm r_i}\}_{i=1}^M$. The constant $M$ is related the maximum number of isolated zeros that a system of two trigonometric polynomials can have. The following lemma, which is a consequence of the BKK bound in enumerative algebraic geometry (see, e.g., \cite{li1996bkk}), puts a bound on $M$. See section \ref{SMbkk} of the supplementary material for proof.


\begin{lem}\label{lem:bkk}
Let $\Lambda_1$ and $\Lambda_0$ be rectangular index sets such that $\Lambda_0\subset\Lambda_1$, and set $R=|\Lambda_1|-|\Lambda_1\colon\Lambda_0|$. For any $\mu_0,\mu_1$ trigonometric polynomials bandlimited to $\Lambda_0$ and $\Lambda_1$, respectively, the maximum number $M$ of isolated solutions of $\mu_0(\bm r) = \mu_1(\bm r) = 0$ is bounded as
\begin{equation}
\label{bound}
M < R+|\Lambda_0|.
\end{equation}
\end{lem}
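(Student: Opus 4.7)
The approach is to translate from trigonometric polynomials on $[0,1]^2$ to Laurent polynomials on the algebraic torus $(\mathbb{C}^*)^2$ via the substitution $z_1 = e^{j2\pi x}$, $z_2 = e^{j2\pi y}$, and then invoke the Bernstein-Kushnirenko-Khovanskii (BKK) bound from toric algebraic geometry. Under this substitution, a trigonometric polynomial bandlimited to a rectangular index set $\Lambda$ of dimensions $K \times L$ (so that $|\Lambda| = KL$) corresponds to a bivariate Laurent polynomial whose Newton polytope is, up to translation, the rectangle $\Lambda$ itself. After multiplication by a suitable Laurent monomial, which does not alter zeros in $(\mathbb{C}^*)^2$, one obtains a bihomogeneous polynomial of bidegree $(K-1, L-1)$ on $\mathbb{P}^1 \times \mathbb{P}^1$. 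Real zeros of the trigonometric system on the torus $[0,1]^2$ then correspond to zeros of the polynomial system on $(S^1)^2 \subset (\mathbb{C}^*)^2$.

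Next I would invoke the bi-degree Bezout theorem on $\mathbb{P}^1 \times \mathbb{P}^1$, which is the special case of BKK in which the mixed volume of two axis-aligned rectangles is computed directly and equals $\tfrac12(a_0 b_1 + a_1 b_0)$: two algebraic curves of bidegrees $(a_0,b_0)$ and $(a_1,b_1)$ share at most $a_0 b_1 + a_1 b_0$ isolated common zeros, with the bound remaining valid even when the two curves share a positive-dimensional component (since any isolated zero lies off such a component by definition). Substituting $a_i = K_i - 1$, $b_i = L_i - 1$ where $K_i \times L_i$ are the dimensions of $\Lambda_i$, I would obtain the bound $M \leq (K_0-1)(L_1-1) + (K_1-1)(L_0-1)$.

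The final step is a short arithmetic computation. Since $\Lambda_0 \subseteq \Lambda_1$ are rectangular, $|\Lambda_1 : \Lambda_0| = (K_1 - K_0 + 1)(L_1 - L_0 + 1)$, so $R = K_1 L_1 - (K_1 - K_0 + 1)(L_1 - L_0 + 1)$. Expanding,
\[
R + |\Lambda_0| - \bigl[(K_0-1)(L_1-1) + (K_1-1)(L_0-1)\bigr] = 2K_0 + 2L_0 - 3,
\]
which is at least $1$ whenever $K_0, L_0 \geq 1$, yielding the desired strict inequality $M < R + |\Lambda_0|$.

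The main obstacle I anticipate is reconciling isolated real zeros on the torus with isolated complex zeros in $(\mathbb{C}^*)^2$ when $\mu_0$ and $\mu_1$ are non-generic. The cleanest fix is to factor out any common trigonometric polynomial divisor $g$, reducing to the case of coprime polynomials $\tilde\mu_0 = \mu_0/g$ and $\tilde\mu_1 = \mu_1/g$ whose common zero set in $(\mathbb{C}^*)^2$ is zero-dimensional and to which the bi-Bezout bound applies directly. Since the Newton polytopes of the quotients are contained in those of $\mu_0,\mu_1$, the bound is preserved; and any isolated zero of the original pair not lying on $\{g=0\}$ is an isolated zero of the coprime pair, so the bound transfers. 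I expect this accounting to be the only nontrivial ingredient beyond the direct algebra above.
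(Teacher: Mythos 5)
Your proposal follows essentially the same route as the paper: the paper invokes the BKK bound directly, bounding the number of isolated solutions by the mixed area $area(P_0+P_1)-area(P_0)-area(P_1)$, which for rectangular supports is exactly your bi-Bezout number $(K_0-1)(L_1-1)+(K_1-1)(L_0-1)$, and then performs the same arithmetic comparison with $R+|\Lambda_0|$ that you make explicit (the margin $2K_0+2L_0-3\geq 1$). Your handling of the non-generic/common-factor case is, if anything, more careful than the paper's, which simply asserts that the complex-torus BKK count carries over to isolated real solutions of the trigonometric system.
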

We now prove equivalents of Lemma \ref{rowlemma} and Lemma \ref{lem:admissible} in terms of the spatial domain operator $\mathcal S(f)$:
\begin{lem} Let $\{\mbf r_1,....,\mbf r_N\}$ be any collection of $N$ distinct points on the curve $\{\mu_0=0\}$, where $N \geq R+|\Lambda_0|$. Then the coimage space $\text{im} \,\mathcal{S}(f)^* = (\mu_0)^\perp_{\Lambda_1}$ is spanned by the set of shifted Dirichlet kernels $\varphi_i(\bm r) = D_{\Lambda_1}(\bm r- \bm r_i)$ for all $i=1,...,N$, i.e.,
\begin{equation}
\label{equality}
\text{span}\brac{\varphi_{\bm r_i}}_{i=1}^N = (\mu_0)_{\Lambda_1}^\perp.
\end{equation}
In particular, there exists a subset of $R = |\Lambda_1|-|\Lambda_1\colon\Lambda_0|$ elements of $\brac{\varphi_{\bm r_i}}_{i=1}^N$ that is a basis for the coimage space $(\mu_0)_{\Lambda_1}^\perp$.
\end{lem}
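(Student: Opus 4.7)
The plan is to establish $\text{span}\{\varphi_{\bm r_i}\}_{i=1}^N = (\mu_0)^\perp_{\Lambda_1}$ by dimension counting. One inclusion is already in hand from \eqref{orthog}: every $\bm r_i$ lies on $\{\mu_0=0\}$, so for any $\eta\mu_0\in(\mu_0)_{\Lambda_1}$, $\langle \eta\mu_0, \varphi_{\bm r_i}\rangle = \eta(\bm r_i)\mu_0(\bm r_i) = 0$, which gives $\varphi_{\bm r_i} \in (\mu_0)^\perp_{\Lambda_1}$. For the reverse inclusion I prefer to pass to orthogonal complements inside the finite-dimensional space $B_{\Lambda_1}$: setting $V = \text{span}\{\varphi_{\bm r_i}\}$, the equality $V = (\mu_0)^\perp_{\Lambda_1}$ is equivalent to $V^\perp = (\mu_0)_{\Lambda_1}$, and $(\mu_0)_{\Lambda_1}\subseteq V^\perp$ is immediate from the same pairing, so the entire proof reduces to showing $V^\perp \subseteq (\mu_0)_{\Lambda_1}$.

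For the reverse containment, pick $\mu_1 \in V^\perp$, i.e., $\mu_1\in B_{\Lambda_1}$ with $\mu_1(\bm r_i) = 0$ for every $i$. Then the $N\ge R+|\Lambda_0|$ points $\bm r_i$ are common zeros of the pair $(\mu_0,\mu_1)$. I now invoke Lemma \ref{lem:bkk}, which bounds the number of \emph{isolated} common zeros of such a system by $R+|\Lambda_0|-1 < N$. Hence the common zero locus must contain at least one positive-dimensional component. Passing to the Laurent polynomial ring $\mathbb{C}[z_1^{\pm 1},z_2^{\pm 1}]$ via $z_j = e^{j2\pi r_j}$, this forces $\mu_0$ and $\mu_1$ to share a nontrivial common factor in this UFD. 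The minimality/uniqueness properties of $\mu_0$ as the edge-set polynomial, established in \cite{siam} and invoked in Theorem \ref{prop:rank} under the no-singular-points hypothesis, then upgrade this to $\mu_0\mid\mu_1$: any common factor whose zero locus captures enough of $E$ must already contain $\mu_0$. Writing $\mu_1 = \mu_0\gamma$, the support constraint on $\hat\mu_1$ confines $\gamma$ to $B_{\Lambda_1\colon\Lambda_0}$, so $\mu_1\in(\mu_0)_{\Lambda_1}$, closing the reverse inclusion.

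Once the span equality is established, $\dim V = R$, and a basis of $R$ elements can be extracted from the spanning family $\{\varphi_{\bm r_i}\}_{i=1}^N$ by a standard greedy sweep: process the indices in order and retain each $\varphi_{\bm r_i}$ that is linearly independent of those already retained. The procedure terminates with exactly $R$ surviving indices, and the corresponding $\varphi_{\bm r_i}$ form a basis for $(\mu_0)^\perp_{\Lambda_1}$. Their Fourier representations are $R$ linearly independent columns of the Vandermonde-like matrix $\ER$ in \eqref{rowspace}, which simultaneously establishes Lemma \ref{rowlemma} and Lemma \ref{lem:admissible}.

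The delicate step is the passage ``positive-dimensional shared zero locus $\Rightarrow \mu_0\mid\mu_1$.'' Lemma \ref{lem:bkk} by itself only controls isolated intersections and does not rule out the pathological scenario where $\mu_0$ factors as $\mu_0=\nu\nu'$ into components with disjoint zero sets, $\mu_1=\nu$, and all $\bm r_i$ happen to sit on the single subcurve $\{\nu=0\}\subsetneq E$. Excluding this is exactly where the algebraic characterization of the minimal edge-set polynomial from \cite{siam} must do real work, and I expect this algebraic step to absorb most of the proof-writing effort; the remainder of the argument is a more or less mechanical combination of a BKK-type count with the Dirichlet-kernel pairing identity \eqref{orthog}.
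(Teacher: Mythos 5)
Your proof follows essentially the same route as the paper's: the forward inclusion via the point-evaluation pairing $\langle \eta\mu_0,\varphi_{\bm r_i}\rangle=\eta(\bm r_i)\mu_0(\bm r_i)=0$, the reverse inclusion by noting that any $\eta\in B_{\Lambda_1}$ orthogonal to all the $\varphi_{\bm r_i}$ shares $N\geq R+|\Lambda_0|$ zeros with $\mu_0$ and invoking the BKK count to conclude $\mu_0\mid\eta$, and a dimension count to extract the $R$-element basis. The one step you flag as delicate---passing from a positive-dimensional common zero locus to $\mu_0\mid\eta$ when $\mu_0$ may be reducible with components having disjoint zero sets---is exactly the step the paper asserts in one line (``By Lemma~\ref{lem:bkk} this is only possible if $\eta$ contains $\mu_0$ as a factor''), even though the divisibility corollary of the BKK theorem as stated in the Supplementary Materials explicitly assumes irreducibility of $\mu_0$. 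So your concern is legitimate and well-localized (and, as you note, would actually threaten the ``any collection of $N$ points'' phrasing of the lemma if all points were chosen on a single component of a reducible edge set), but the paper's own proof does not close this gap either; your proposal is therefore as complete as the published argument, and more honest about where the remaining algebraic work lies.
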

\begin{proof}
All the functions $\varphi_{\bm r_i}$ are in $(\mu_0)_{\Lambda_1}^{\perp}$ since we have $\inner{\varphi_i,\gamma\mu_0}= \gamma(\mbf r_i)\mu_0(\mbf r_i)=0$ because each $\mbf r_i$ belong to the zero set of $\mu_0$. This implies that 
\begin{equation}
\label{subset}
\text{span}\brac{\varphi_{\bm r_i}}_{i=1}^M \subseteq (\mu_0)_{\Lambda_1}^\perp.
\end{equation}
Our focus is on proving \eqref{equality} with equality. For this, it is sufficient to show that any vector orthogonal to $\text{span}\{\varphi_i\}_{i=1}^N$ is in $(\mu_0)_{\Lambda}$. Assume that there is a vector $\eta(\mbf r) \in B_{\Lambda_1}$ that is in the orthogonal complement space of $\text{span}\{\varphi_i\}_{i=1}^N$. This is only possible if
\begin{equation}
\langle \eta, \varphi_i \rangle = \eta(\bm r_i) = 0,~~\text{for all}~~i=1,...,N.
\end{equation}
Therefore, both $\eta$ and $\mu_0$ have $N$ zeros in common. By Lemma \ref{lem:bkk} this is only possible if $\eta$ contains $\mu_0$ as a factor. This implies that all vectors in the orthogonal complement space of $\text{span}\{\varphi_i\}_{i=1}^M$ are in $(\mu_0)_{\Lambda_1}$, or equivalently 
\begin{equation}
\label{subset2}
\text{span}\brac{\varphi_{\bm r_i}}_{i=1}^M \supseteq (\mu_0)_{\Lambda_1}^\perp,
\end{equation}
which together with \eqref{subset} proves \eqref{equality}.

Finally, we also know that the dimension of $(\mu_0)_{\Lambda_1}^{\perp}$ is equal to $R<M$. Thus, one can select a subset of $R$ basis functions $\varphi_i$ that are linearly independent and hence a basis for $(\mu_0)^\perp_{\Lambda_1}$. 

\end{proof}


Translating this result to Fourier domain, we see that the row space of $\Tf$ is spanned by the vectors of Fourier coefficients $(\widehat{\varphi_i}[\bm k] : \bm k \in \Lambda_1) \in \mathbb{C}^{|\Lambda_1|}$, for $i=1,...,R$. Equivalently, this can be expressed as the columns of the Vandermonde-like matrix $\ER$ specified by \eqref{rowspace}, which proves Lemma \ref{rowlemma} and \ref{lem:admissible}.

\subsection{Discretization of curve integrals: quadrature formula}
Using the results from the previous subsection, we now introduce a quadrature formula for curve integrals, which we will use to determine the range space $\text{im}\,\mathcal{S}(f)$ in the next subsection. 

Let $\gamma$ be any function in  $B_\Lambda$ for any $\Lambda \supseteq \Lambda_0$. Then from the orthogonal decomposition $B_\Lambda = (\mu_0)_{\Lambda}\oplus(\mu_0)^\perp_{\Lambda}$ we can decompose $\gamma$ as
\begin{equation}\label{eq:orthdecomp}
\gamma(\bm r) = \sum_{i=1}^S a_i D_{\Lambda}(\bm r - \bm r_i) + \varphi(\bm r)\mu_0(\bm r),
\end{equation}
where $S = |\Lambda|-|\Lambda\colon\Lambda_0|$, and where $\{D_\Lambda(\bm r - \bm r_i)\}_{i=1}^S$ defines a basis of $(\mu_0)_{\Lambda}^\perp$. Here, the coefficients $a_i$ in \eqref{eq:orthdecomp} are obtained uniquely as
\begin{equation}
\begin{bmatrix}
a_1\\ \vdots \\ a_S 
\end{bmatrix}
= \bm D^{-1}
\begin{bmatrix}
\gamma(\bm r_1)\\ \vdots \\ \gamma(\bm r_S) 
\end{bmatrix},
\end{equation}
where $\bm D \in \mathbb{R}^{S\times S}$ is the symmetric matrix with entries $[\bm D]_{i,j} = D_\Lambda(\bm r_i - \bm r_j)$ for $1 \leq i,j\leq S$.
The above expression can be compactly expressed as $
\bm a = \bm D^{-1} \bm g$, where $\bm g = (\gamma(\bm r_1),...,\gamma(\bm r_S))^T$. 

\begin{lem}
\label{agrlemma}
 Let $f = 1_U$ where $U$ is a simply connected region with smooth boundary $\partial U$, which is the zero levelset of $\mu_0 \in B_{\Lambda_0} $ and let $\gamma \in B_{\Lambda}$. Consider the curve integral of the form
\begin{equation}
\mbf q = \oint_{\partial U} \gamma(\bm r)\,\bm n(\bm r)\,ds(\bm r),
\end{equation}
where $\bm n(\bm r) = \nabla f(\mbf r)/|\nabla f(\mbf r)|$ is the unit normal on the curve $\partial U$. The curve integral can be evaluated using the quadrature formula
\begin{equation}
\label{quad}
\mbf q  =\sum_{i=1}^S \gamma(\bm r_i) ~\bm w_i,
\end{equation}
where the $S= |\Lambda|-|\Lambda:\Lambda_0|$ points $\{\bm r_i\}_{i=1}^S$ belong to the curve $\{\mu_0 = 0\}$, and the cooresponding weight vectors $\mbf w_i \in \mathbb{R}^{2}$, $i=1,..,S$, are specified by 
\begin{equation}
\label{aggregated}
\begin{bmatrix}
\bm w_1\\
\vdots\\
\bm w_S
\end{bmatrix}
 = \bm D^{-1} \begin{bmatrix} \bm v_1 \\ \vdots \\ \bm v_S \end{bmatrix}.
\end{equation}
where $\bm v_i = \oint_{\partial U} D_\Lambda(\bm r - \bm r_i) \bm n(\bm r) ds(\bm r) \in \mathbb{R}^2$.
 \end{lem}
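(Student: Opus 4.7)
The plan is to exploit the orthogonal splitting $B_\Lambda = (\mu_0)_\Lambda \oplus (\mu_0)_\Lambda^\perp$ introduced just before the lemma. Since each node $\bm r_i$ lies on the zero-set of $\mu_0$, every translated Dirichlet kernel $D_\Lambda(\cdot - \bm r_i)$ is orthogonal to every multiple of $\mu_0$ inside $B_\Lambda$, by the same reproducing-kernel argument used in Section \ref{rowspaceappendix}. Assuming the nodes form an admissible set---an assumption that the statement of the lemma tacitly requires and that is guaranteed to be achievable by Lemma \ref{lem:admissible} (applied with $\Lambda$ in place of $\Lambda_1$ and $S$ in place of $R$)---these $S$ kernels form a basis of $(\mu_0)_\Lambda^\perp$, so the decomposition \eqref{eq:orthdecomp} is well-defined and unique for every $\gamma \in B_\Lambda$.

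The central step is then to substitute \eqref{eq:orthdecomp} into $\bm q$ and observe that the $(\mu_0)_\Lambda$-component drops out. Linearity gives
\begin{equation*}
\bm q = \sum_{i=1}^S a_i \oint_{\partial U} D_\Lambda(\bm r - \bm r_i)\,\bm n(\bm r)\,ds(\bm r) + \oint_{\partial U} \varphi(\bm r)\mu_0(\bm r)\,\bm n(\bm r)\,ds(\bm r),
\end{equation*}
and since $\mu_0$ vanishes pointwise on $\partial U = \{\mu_0=0\}$, the second integral is identically zero. What remains is $\bm q = \sum_{i=1}^S a_i\,\bm v_i$, with the vectors $\bm v_i$ exactly as defined in the statement. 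This single observation---that a factor of $\mu_0$ annihilates any integrand restricted to $\partial U$---is what drives the entire proof.

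Next I would pass from the abstract coefficients $a_i$ to the samples $\gamma(\bm r_j)$. Evaluating \eqref{eq:orthdecomp} at each node $\bm r_j$ and again using $\mu_0(\bm r_j) = 0$ yields the linear system $\gamma(\bm r_j) = \sum_i a_i\,D_\Lambda(\bm r_j - \bm r_i)$, i.e., $\bm g = \bm D\,\bm a$. The matrix $\bm D$ is invertible because it is the Gram-type matrix of the linearly independent basis $\{D_\Lambda(\cdot - \bm r_i)\}_{i=1}^S$; the reproducing property $\langle D_\Lambda(\cdot - \bm r_i), D_\Lambda(\cdot - \bm r_j)\rangle = D_\Lambda(\bm r_i - \bm r_j)$ identifies its entries with those appearing in the definition of $\bm D$. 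Solving $\bm a = \bm D^{-1}\bm g$, inserting into $\bm q = \sum_i a_i\,\bm v_i$, and swapping the order of summation produces $\bm q = \sum_j \gamma(\bm r_j)\,\bm w_j$ with the weights stacked precisely as in \eqref{aggregated}, which is the desired formula \eqref{quad}.

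The main conceptual obstacle is not algebraic but structural: ensuring that the orthogonal decomposition \eqref{eq:orthdecomp} is actually available, i.e., that the $S$ nodes on $\{\mu_0 = 0\}$ have been chosen to be admissible so that $\bm D$ is invertible and the function $\varphi$ is well-defined. Once admissibility is secured, the vanishing of $\mu_0$ on the contour reduces the curve integral to an $S$-dimensional linear-algebra problem, and the rest of the argument is bookkeeping.
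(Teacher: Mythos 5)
Your proof is correct and follows essentially the same route as the paper's: decompose $\gamma$ via \eqref{eq:orthdecomp}, observe that the $\varphi\,\mu_0$ component integrates to zero on $\partial U=\{\mu_0=0\}$, and convert the coefficients $a_i$ to the samples $\gamma(\bm r_i)$ through $\bm a = \bm D^{-1}\bm g$. You are in fact more explicit than the paper about two points it leaves tacit, namely the vanishing of the second integral and the admissibility of the nodes needed for $\bm D$ to be invertible.
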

 \begin{proof}
 Decomposing $\gamma(r)$ using \eqref{eq:orthdecomp}, we obtain
\begin{equation}
\oint_{\partial U} \gamma(\bm r)\,\bm n(\bm r)\,ds(\bm r) = \sum_{i=1}^S a_i \underbrace{\oint_{\partial U} D_\Lambda(\bm r-\bm r_i)\,\bm n(\bm r)\,ds(\bm r)}_{:=\bm v_i}\\
\end{equation}
The above sum can be expressed in the vector form as 
\begin{equation}
\label{quadrature}
\sum_{i=1}^S a_i \bm v_i = \bm a^* \bm V = \bm g^* \bm D^{-1} \bm V
\end{equation}
where $\bm V = [\bm v_1^T,...,v_S^T]^T \in \mathbb{C}^{R\times 2}$. Setting $\bm W = \bm D^{-1} \bm V = [\bm w_1^T,...,\bm w_S^T]^T \in \mathbb{C}^{R\times 2}$ we obtain \eqref{quad}.
\end{proof}



\subsection{Basis for the range of $\mathcal{S}(f)$ (corresponding to the column space of $\Tf$)}
\label{colspaceappendix}
We now introduce a basis set for $\mathsf{im}\,\mathcal{S}(f)$, which will be used to prove Lemma \ref{columnlemma}.
\begin{lem}
The range of  $\mathcal S(f)$, denoted by $\mathsf{im}\,\mathcal{S}(f)$ is specified by
\begin{equation}
\mathsf{im}\,\mathcal{S}(f) = \text{span}\{ \bm w_i \, D_{\Lambda_2}(\bm r-\bm r_i)\}_{i=1}^R
\end{equation}
for an appropriate choice of points $\{\bm r_i\}_{i=1}^{R} \subset \{\mu_0 = 0\}$ with $R = |\Lambda_1|-|\Lambda_1\colon\Lambda_0|$, and where the weight vectors $\bm w_i$ are specified by \eqref{aggregated}.
\end{lem}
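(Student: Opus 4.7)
The plan is to leverage the orthogonal decomposition of $B_{\Lambda_1}$ together with the quadrature formula of Lemma~\ref{agrlemma}. First, I would reduce the problem to studying $\mathcal{S}(f)$ on the coimage. Since $\ker\mathcal{S}(f) = (\mu_0)_{\Lambda_1}$ and $B_{\Lambda_1} = (\mu_0)_{\Lambda_1}\oplus(\mu_0)^\perp_{\Lambda_1}$, we have $\mathrm{im}\,\mathcal{S}(f) = \mathcal{S}(f)\bigl((\mu_0)^\perp_{\Lambda_1}\bigr)$, a subspace of dimension $R$. By the spatial-domain version of Lemma~\ref{rowlemma}, the coimage has basis $\{D_{\Lambda_1}(\cdot-\bm r_i)\}_{i=1}^R$, so the image is spanned by the $R$ vector fields $\bm h_i := \mathcal{S}(f)[D_{\Lambda_1}(\cdot-\bm r_i)]$.

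The heart of the proof is to show that each $\bm h_i$ lies in $\mathrm{span}\{\bm w_j D_{\Lambda_2}(\cdot-\bm r_j)\}_{j=1}^R$. Starting from the integral representation
\begin{equation*}
\bm h_i(\bm r) = \oint_{\partial U} D_{\Lambda_1}(\bm r'-\bm r_i)\,D_{\Lambda_2}(\bm r-\bm r')\,\bm n(\bm r')\,ds(\bm r'),
\end{equation*}
I would expand the $\Lambda_2$-Dirichlet kernel as $D_{\Lambda_2}(\bm r-\bm r')=\sum_{\bm k\in\Lambda_2} e^{j2\pi\bm k\cdot(\bm r-\bm r')}$ to reduce the problem to evaluating, for each $\bm k\in\Lambda_2$, a curve integral of the form $\oint D_{\Lambda_1}(\bm r'-\bm r_i)\,e^{-j2\pi\bm k\cdot\bm r'}\,\bm n(\bm r')\,ds(\bm r')$. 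The integrand is bandlimited in $\bm r'$ to the shifted set $\Lambda_1-\{\bm k\}$, whose admissible-node structure mirrors that of $\Lambda_1$ (admissibility is preserved under modulation, since the Vandermonde-like matrix $\ER$ merely gets right-multiplied by a diagonal unitary). Applying Lemma~\ref{agrlemma} on this shifted support and then reassembling over $\bm k\in\Lambda_2$ expresses $\bm h_i(\bm r)$ as a linear combination of $\{\bm w_j D_{\Lambda_2}(\bm r-\bm r_j)\}_{j=1}^R$, with coefficients that can be read off from entries of the Dirichlet Gram matrix $\GR$.

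Finally, I would close the argument by a dimension count: the $R$ candidate vectors $\{\bm w_j D_{\Lambda_2}(\cdot-\bm r_j)\}$ are linearly independent, as follows from a Vandermonde-type argument in the spirit of Lemma~\ref{rowlemma} applied with $\Lambda_2$ in place of $\Lambda_1$ (the sampling hypothesis $\Gamma\supseteq 2\Lambda_1+\Lambda_0$ guarantees $\Lambda_2$ is large enough for this), together with nonvanishing of the weights $\bm w_j$. Since an $R$-dimensional subspace is contained in an $R$-dimensional image, the two coincide. The principal obstacle is the quadrature step: the combined factor $D_{\Lambda_1}(\bm r'-\bm r_i)D_{\Lambda_2}(\bm r-\bm r')$ is not itself bandlimited to $\Lambda_1$, so Lemma~\ref{agrlemma} must be applied carefully one Fourier frequency at a time, and the resulting $\bm k$-dependent expressions must be shown to assemble into a single $\bm k$-independent combination of $\{\bm w_j D_{\Lambda_2}(\cdot-\bm r_j)\}$. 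The correct bookkeeping relies on the symmetry of $\GR$ and on the interplay between the $\Lambda_1$-based weight definition in \eqref{aggregated} and the evaluation of $D_{\Lambda_2}(\bm r-\bm r')$ at the quadrature nodes.
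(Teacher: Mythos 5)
Your overall architecture (restrict to the coimage, push shifted Dirichlet kernels through $\mathcal S(f)$, close with a dimension count) is sound and close in spirit to the paper, but the step you yourself flag as ``the principal obstacle'' is a genuine gap, and the route you propose does not close it. If you expand $D_{\Lambda_2}(\bm r-\bm r')=\sum_{\bm k\in\Lambda_2}e^{j2\pi\bm k\cdot(\bm r-\bm r')}$ and apply Lemma~\ref{agrlemma} frequency by frequency on the shifted support $\Lambda_1-\{\bm k\}$, the resulting quadrature weights are genuinely $\bm k$-dependent: the Gram matrix only conjugates by the diagonal modulation ($\bm D_{\bm k}=\bm\Phi_{\bm k}\bm D\bm\Phi_{\bm k}^*$ with $\bm\Phi_{\bm k}=\mathrm{diag}(e^{-j2\pi\bm k\cdot\bm r_i})$), but the moment vectors become $\bm v_i^{(\bm k)}=\oint_{\partial U} D_{\Lambda_1-\{\bm k\}}(\bm r-\bm r_i)\,\bm n(\bm r)\,ds(\bm r)$, which are new curve integrals, not modulations of the original $\bm v_i$. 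Reassembling over $\bm k$ then produces $\sum_{\bm k\in\Lambda_2}e^{j2\pi\bm k\cdot\bm r}\sum_j\gamma_{\bm k}(\bm r_j)\,\bm w_j^{(\bm k)}$, and for this to collapse to $\sum_j c_{ij}\,\bm w_j\,D_{\Lambda_2}(\bm r-\bm r_j)$ you would need the $\bm w_j^{(\bm k)}$ to be independent of $\bm k$ up to common scalars, which is false in general; the symmetry of $\GR$ does not rescue this.

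The fix---and this is what the paper does---is to never split the kernel. Lemma~\ref{agrlemma} is stated for an arbitrary bandlimit $\Lambda\supseteq\Lambda_0$, so apply it once with $\Lambda=\Gamma$: for each fixed $\bm r$, the map $\bm s\mapsto\psi(\bm s)\,D_{\Lambda_2}(\bm r-\bm s)$ lies in $B_\Gamma$, and the quadrature nodes and weights depend only on $\Gamma$ and the curve, not on $\bm r$ or on any frequency index. This yields in one line $(\mathcal S(f)\psi)(\bm r)=\sum_{i=1}^S\psi(\bm r_i)\,\bm w_i\,D_{\Lambda_2}(\bm r-\bm r_i)$ with $S=|\Gamma|-|\Gamma\colon\Lambda_0|>R$ nodes, hence $\mathsf{im}\,\mathcal{S}(f)\subseteq\mathrm{span}\{\bm w_i\,D_{\Lambda_2}(\cdot-\bm r_i)\}_{i=1}^{S}$, and the passage from $S$ to $R$ is exactly the dimension count you already have: select $R$ linearly independent elements from a spanning set of the $R$-dimensional image. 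In other words, the price of handling the product $\psi(\bm s)D_{\Lambda_2}(\bm r-\bm s)$, which is not bandlimited to $\Lambda_1$, is paid by enlarging the node set to $S$ up front rather than by forcing an $R$-node, $\Lambda_1$-based quadrature to work one frequency at a time.
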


\begin{proof}
Consider an arbitrary element $\bs \rho =(\rho,\rho_2) \in \mathsf{im}\,\mathcal{S}(f)$. We can express $\bs\rho$ as $\bs\rho = \mathcal{S}(f)\psi = \mathcal{B}_{\Lambda_2}(\psi \nabla f) = D_{\Lambda_2} \ast (\psi\nabla f)$ for some $\psi\in B_{\Lambda_1}$. By the definition in \eqref{eq:Sfstar}, we have
\begin{align}\nonumber
 \bs\rho(\bm r) 
& = \oint_{\partial U} {\psi(\bm s)}\,D_{\Lambda_2}(\bm r-\bm s)\,\bm n(\bm s)\,ds\\\label{lemma4.1}
& = \sum_{i=1}^{S} {\psi(\bm r_i)} \, D_{\Lambda_2}(\bm r-\bm r_i)\,\bm w_i,
\end{align}
where we 
Lemma \ref{agrlemma} in the last step with $S = |\Gamma|-|\Gamma:\Lambda_0|$ since the integrand ${\psi(\bm s)}^*\,D_{\Lambda_2}(\bm r-\bm s)$ belongs to $B_\Gamma$. The above relation shows that any $\bm \rho(\mbf s) \in \mathsf{im}\,\mathcal{S}(f)$ can be expressed as the linear combination of the functions $D_{\Lambda_2}(\bm s-\bm r_i)\,\bm w_i,$ for $i=1,..,S$. Thus, we have $\mathsf{im}\,\mathcal{S}(f) \subset \text{span}\{D_{\Lambda_2}(\bm r-\bm r_i)\,\bm w_i\}_{i=1}^{S}$. We also know that $\dim\bkt{\mathsf{im}\,\mathcal{S}(f)} = R <S$. This implies that we can select a subset of $R$ vectors from the set $\{D_{\Lambda_2}(\bm r-\bm r_i)\,\bm w_i\}_{i=1}^{S}$ that are linearly independent, which will span $\mathsf{im}\,\mathcal{S}(f)$, and hence define a basis.
\end{proof}

Correspondingly, the column space of $\mathcal T(\hat f)$ is spanned by the Fourier coefficients of the basis vectors $\mbf w_i D_{\Lambda_2}(\mbf r-\mbf r_i)$, or the columns of the $2|\Lambda_2| \times R$ weighted Vandermonde-like matrix $\EL$ specified by \eqref{columnspace}. 

\subsection{Incoherence Bounds}
\label{coherenceappendix}
\subsubsection{Projection onto row subspace}
Let $\ER = \ER(P)$ be any basis for the row space $V$ of $\Tf$ specified by \eqref{rowspace}, whose columns are vectorized Fourier coefficients of the translated and normalized Dirichlet kernels $\varphi_i(\bm r) = \frac{1}{\sqrt{|\Lambda_1|}}D_{\Lambda_1}(\bm r-\bm r_i)$, $i=1,...,R$, for some set of admissible nodes $P=\{\bm r_1,...,\bm r_R\}\subset \{\mu_0 = 0\}$. 
Projecting the measurement basis matrix $\bm A_{\bm k}$ onto $V$, we have
\begin{align*}
\|\mathcal{P}_V \mbf A_{\mbf k}\|_F^2 & = \|\mbf A_{\mbf k} \ER (\ER^* \ER)^{-1} \ER^* \|_F^2\\
& \leq [\lambda_{min} (\ER^*\ER)]^{-1} \|\mbf A_{\mbf k} \ER\|_F^2 
\end{align*}
Since $\mbf A_{\mbf k}$ selects $|\omega(\bm k)|$ rows of $\ER$, each of which has $R$ entries of magnitude $1/\sqrt{|\Lambda_1|}$, we have
\begin{equation}
\|\mbf A_{\mbf k} \ER\|_F^2 = \frac{1}{|\omega(\mbf k)|}\cdot R\cdot|\omega(\mbf k)|\cdot \frac{1}{|\Lambda_1|} = \frac{R}{|\Lambda_1|} = \frac{R\,c_s}{|\Gamma|}
\end{equation}
where $c_s = |\Gamma|/|\Lambda_1|$. Hence, 
\begin{equation}
\|\mathcal{P}_V \mbf A_{\mbf k}\|_F^2 \leq [\lambda_{min} (\ER^*\ER)]^{-1} \frac{R~c_s }{|\Gamma|}.	
\end{equation}
Minimizing over all sets of admissible nodes $P$ in the construction of $\ER$ gives the final bound
\begin{equation}
\|\mathcal{P}_V \mbf A_{\mbf k}\|_F^2 \leq \frac{\rho\, R\,c_s }{|\Gamma|}.
\end{equation}
\subsubsection{Projection onto column space}
Let $\EL = \EL(P)$ be a basis for the column space of $\Tf$ specified by \eqref{columnspace}, whose columns are vectorized Fourier coefficients of the translated and weighted Dirichlet kernels $\frac{1}{\sqrt{|\Lambda_2|}}\frac{\bm w_i}{\|\bm w_i\|} D_{\Lambda_2}(\bm r-\bm r_i)$, for some set of admissible nodes $P=\{\bm r_1,...,\bm r_R\}\subset \{\mu_0 = 0\}$. Observe the columns of $\EL$ are defined to have unit $\ell^2$-norm. Following the same steps as in the row space bound, we have
\begin{align*}
\|\mathcal{P}_U \bm A_{\bm k}\|_F^2 & = \|\EL (\EL^* \EL)^{-1} \EL^* \bm A_{\bm k} \|_F^2 \\
& \leq [\lambda_{min} (\EL^*\EL)]^{-1} \|\EL^* \bm A_{\bm k} \|_F^2
\end{align*}
Expanding the norm $\|\EL^*\bm A_{\bm k}\|_F^2$ gives
\begin{align*}
\|\EL^*\bm A_{\bm k}\|_F^2 & = \frac{1}{|\Lambda_2|}\sum_{i=1}^{R}\frac{1}{|\omega(\bm k)|}\sum_{\bm \ell \in \omega(\bm k)}\left|\left\langle \frac{\bm \ell}{\|\bm \ell\|}, \frac{\bm w_i}{\|\bm w_i\|} \right\rangle\right|^2 \\
& \leq \frac{R}{|\Lambda_2|} \leq \frac{R\,c_s}{|\Gamma|}.
\end{align*}
Hence, we have
\begin{equation}\label{eq:rhoprimebnd}
\|\mathcal{P}_U \bm A_{\bm k}\|_F^2 \leq \frac{\rho'\,R\,c_s}{|\Gamma|}.
\end{equation}
where $\rho'$ is defined similarly to $\rho$ as:
\begin{equation}\label{eq:incoherence_left}
\rho' = \min_{\substack{P \subset \{\mu_0=0\}\\|P|=R}} \frac{1}{\lambda_{min}[\EL(P)^*\EL(P)]},
\end{equation}

Finally, we show how to bound $\rho'$ by $\rho$ in \eqref{eq:rhoprimebnd}. Observe that we can re-define $\rho$ and $\rho'$ in terms of the minimum singular value of the basis matrices $\ER(P)$ and $\EL(P)$, according to the correspondences:
\begin{align*}
\lambda_{min}(\EL(P)^*\EL(P)) & = \sigma_{min}^2(\EL(P)),\\
\lambda_{min}(\ER(P)^*\ER(P)) & = \sigma_{min}^2(\ER(P)).
\end{align*}
We will show $\sigma^2_{min}(\ER(P)) \leq \sigma^2_{min}(\EL(P))$, or equivalently, $[\lambda_{min}(\EL(P)^*\EL(P))]^{-1} \leq [\lambda_{min}(\bm \ER(P)^*\ER(P)]^{-1}$, for any set $P$ consisting of $R$ points on the edge set. The claim then follows immediately by taking the minimum over all such sets $P$. 

To ease notation, we drop the dependence on the set $P$ in the following. Observe that we can express $\EL$ as 
\begin{equation}
\EL = 
\begin{bmatrix}
\tilde{\bm E}_{\text{col}} \bm W_x\\
\tilde{\bm E}_{\text{col}} \bm W_y
\end{bmatrix}
\end{equation}
where $\bm W_x = \text{diag}(\frac{w_{1,x}}{\|\bm w_1\|},...,\frac{w_{R,x}}{\|\bm w_R\|})$, $\bm W_y = \text{diag}(\frac{w_{1,y}}{\|\bm w_1\|},...,\frac{w_{R,y}}{\|\bm w_R\|})$, and  $\tilde{\bm E}_{\text{col}} \in \mathbb{C}^{|\Lambda_2|\times R}$ is the Vandermonde-like matrix given entrywise by $[\tilde{\bm E}_{\text{col}}]_{i,j} = e^{j2\pi \bm k_{i}\cdot\bm r_i},~\text{for all}~~\bm k_i\in\Lambda_2, 1\leq j\leq R$. In other words, $\tilde{\bm E}_{\text{col}}$ has the same structure as $\ER$, but is built with respect to $\Lambda_2$ instead of $\Lambda_1$. In particular, since we always assume $\Lambda_1\subset\Lambda_2$, the matrix $\ER$ can be embedded as a submatrix of $\tilde{\bm E}_{\text{col}}$ by restricting the rows of $\tilde{\bm E}_{\text{col}}$ to those indexed by $\Lambda_1$. By the variational characterization of the minimum singular value of a matrix, we have
\begin{align}
\nonumber \sigma_{min}^2(\EL) & = \min_{\|\bm u\|=1} \|\EL \bm u\|^2\\ 
\nonumber & = \min_{\|\bm u\|=1} \|\tilde{\bm E}_{\text{col}} \bm W_x\bm u\|^2 + \|\tilde{\bm E}_{\text{col}} \bm W_y \bm u\|^2\\
& \geq \sigma_{min}^2(\tilde{\bm E}_{\text{col}})\underbrace{(\|\bm W_x \bm u\|^2 + \|\bm W_y\bm u\|^2)}_{=1} 
\label{eq:singvalineq}
\end{align}
Finally, since $\ER$ is a submatrix of $\tilde{\bm E}_{\text{col}}$, we also have $\sigma^2_{min}(\ER) \leq \sigma^2_{min}(\tilde{\bm E}_{\text{col}})$, which together with \eqref{eq:singvalineq} gives the desired inequality.

\section{Appendix B: Proof of Main Theorem}
\label{mainthmproof}
\subsection{Reformulation in lifted domain}
We now reformulate the recovery of $\hat f$ as a matrix recovery problem in the lifted domain. The matrices $\Txf$ and $\Tyf$ contain several copies of the weighted entries $k_x\widehat{f}[\bm k]$ and $k_y\widehat{f}[\bm k]$, respectively. We use $\omega(\bm k)$ to denote the set of locations $(\alpha_1,\alpha_2)$ in the matrix $\Txf$ or $\Tyf$ that contain the entry $k_x\widehat{f}[\bm k]$ or $k_y\widehat{f}[\bm k]$ (this set is the same in either case). 

We define the sampling matrices $\bm A_{\bm k} = \begin{bmatrix}\bm A_{1,\bm k}\\\bm A_{2,\bm k}\end{bmatrix} \in \mathbb{C}^{2|\Lambda_2|\times|\Lambda_1|}$, for each $\bm k = (k_1,k_2) \in \Gamma$, where 
\begin{eqnarray}
(\bm A_{i,\bm k})_{\bm \alpha} &=& \left\{\begin{array}{ccc}
\frac{k_i}{\|\bm k\|\sqrt{|\omega_i(\bm k)|}},&\mbox{ if }& \bm \alpha = (\alpha_1,\alpha_2) \in \omega(\bm k)\\
0 & \mbox{else}
\end{array}
\right.
\end{eqnarray}
for $i=1,2$. 
The matrices $\{\bm A_{\bm k}\}_{\bm k \in \Gamma}$ form an orthonormal basis for the space of matrices defined by the range of the matrix lifting $\mathcal{T}$; we will call any matrix in the range of $\mathcal{T}$ a \emph{structured matrix}. For any set of coefficients $\{\widehat g[\bm k]\}_{\bm k \in \Gamma}$ we can expand the structured matrix $\mathcal{T}(\widehat g)$ as 
\begin{equation}
\mathcal{T}(\widehat g) = \sum_{\bm k\in\Gamma} \widehat{g}[\bm k]\, \|\bm k\|\sqrt{|\omega_i(\bm k)|}\,\bm A_{\bm k}.
\end{equation}
We denote the projection operator corresponding to a single sampling location $\bm k$ by $
\mathcal A_{\mbf k}(\mbf X) =  \inner{\bm A_{\bm k},\bm X} \bm A_{\bm k}$. Since $\{\bm A_{\bm k}\}_{\bm k \in \Gamma}$ is an orthonormal basis, for any structured matrix $\mbf X$, we have $\sum_{k\in \Gamma} \mathcal A_{\mbf k}(\mbf X) = \mathcal A(\mbf X) = \mbf X$. Since $\bm A_{\bm k}$ is not the basis for a general $\bm X \in \mathbb{C}^{2|\Lambda_2|\times|\Lambda_1|}$, we also define the projection operator to the space orthogonal to the space of structured matrices by $
\mathcal A^{\perp}(\bm X) = (\mathcal I - \mathcal A)(\mbf X)$, where $\mathcal{I}$ is the identity operator.
In particular, the constraint $\mathcal A^{\perp}(\bm X) = \mathbf 0$ implies that $\bm X$ is a structured matrix. 

The recovery of $f$ from its partial Fourier samples $\widehat{f}[\bm k], \bm k\in \Omega,$ can thus be reformulated as the completion of a structured matrix $\bm X$ from its measurements $\mathcal{A}_{\bm k}, \bm k \in \Omega$.
Since the matrix is structured, we have $\mathcal A^{\perp}(\mbf X)= \mbf 0$. We thus reformulate \eqref{nucnorm} as the structured low-rank recovery problem:
\begin{equation}
\label{stlr}
\mbox{minimize}_{\bm X}~ \|\bm X\|_* ~\mbox{subject to}~
 \mathcal Q_{\Omega}(\bm X) = \mathcal  Q_{\Omega}(\Tf),
\end{equation}
where $\mathcal Q_{\Omega}$ that satisifies $\mbb E[\mathcal Q_{\Omega}] = \mathcal I$ is defined as:
\begin{align}
\label{Qomega}
\mathcal Q_{\Omega} &= \frac{|\Gamma|}{|\Omega|}\mathcal A_{\Omega} + \mathcal A^{\perp}
\end{align}
\subsection{Conditions for perfect recovery}
The tangent space $T$ of the matrix $\mathbf X$ is defined as $
T:= \{ \mathbf U\mathbf X_1^H+\mathbf X_2 \mathbf V^H: \mathbf X_1 \in \mathbb C ^{|\Lambda_2| \times R}, \mathbf X_2 \in \mathbb C^{|\Lambda_1|\times R} \}$
where $\mathbf X = \mathbf U \bs \Lambda \mathbf V^H$ is the singular value decomposition of $\mbf X$. The orthogonal complement of $T$ is denoted by $T^\perp$. 
We first show that if $\mathcal P_T \approx \mathcal P_T\mathcal Q_{\Omega}\mathcal P_T$, and if an approximate dual certificate that satisfies certain conditions exist, we obtain perfect recovery. 

\begin{lem}
\label{conditionslemma}
Consider a multiset $\Omega$ that contains $m$ random indices. Suppose the sampling operator $\mathcal Q_{\Omega}$ obeys 
\begin{equation}
{\|\mathcal P_T - \mathcal P_T\mathcal Q_{\Omega}\mathcal P_T\| \leq \frac{1}{2}}
\label{ptclose}\end{equation}
and there exists a dual certificate matrix $\bm W$ satisfying 
\begin{eqnarray}
\label{cond2}
\mathcal Q_{\Omega}^{\perp} (\bm W) &=& 0\\
\label{cond3}
\|\mathcal P_T (\bm W - \bm U \bm V^*)\|_F &\leq& \frac{1}{6n}\\ 
\label{cond4}
\|\mathcal P_T^{\perp}(\bm W)\| &\leq& \frac{1}{2}.
\end{eqnarray}
Then, $\mathcal{T}(\hat f)$ is the unique solution to \eqref{stlr}, where $n= |\Gamma|$ and $m=|\Omega|$.
  \label{lemQ}
\end{lem}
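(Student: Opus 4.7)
The plan is to adapt the standard Candès--Recht / Gross dual-certificate argument for nuclear norm minimization to the structured setting here. Suppose for contradiction that there exists a feasible matrix $\mathcal{T}(\hat f) + \bm H$ with $\|\mathcal{T}(\hat f)+\bm H\|_* \leq \|\mathcal{T}(\hat f)\|_*$ and $\bm H \neq \mathbf{0}$. Feasibility (together with $\mathcal{T}(\hat f)$ being structured) means that $\bm H$ is structured (i.e.\ $\mathcal{A}^\perp(\bm H)=0$) and $\mathcal{A}_\Omega(\bm H)=0$, so in particular $\mathcal{Q}_\Omega(\bm H)=0$. My goal is to derive a strict lower bound $\|\mathcal{T}(\hat f)+\bm H\|_* > \|\mathcal{T}(\hat f)\|_*$, contradicting optimality.

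The starting point is the subgradient lower bound for the nuclear norm: there exists a matrix $\bm W_0 \in T^\perp$ with $\|\bm W_0\|\leq 1$ and $\langle \bm W_0,\bm H\rangle = \|\mathcal{P}_{T^\perp}(\bm H)\|_*$, giving
\begin{equation*}
\|\mathcal{T}(\hat f)+\bm H\|_* \geq \|\mathcal{T}(\hat f)\|_* + \langle \bm U\bm V^*,\bm H\rangle + \|\mathcal{P}_{T^\perp}(\bm H)\|_*.
\end{equation*}
I then bring in the approximate dual certificate $\bm W$. Since $\mathcal{Q}_\Omega^\perp(\bm W)=0$ says (in context) that $\bm W$ lies in the span of the sampling basis on $\Omega$, and $\bm H$ is structured with $\mathcal{A}_\Omega(\bm H)=0$, expanding in the orthonormal basis $\{\bm A_{\bm k}\}$ yields $\langle \bm W,\bm H\rangle = 0$. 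So $\langle \bm U\bm V^*,\bm H\rangle = \langle \bm U\bm V^* - \bm W,\bm H\rangle$, which I split via $\mathcal{P}_T + \mathcal{P}_{T^\perp} = \mathcal{I}$ and estimate by Cauchy--Schwarz and the trace/operator-norm duality, using \eqref{cond3} and \eqref{cond4}:
\begin{equation*}
|\langle \bm U\bm V^*,\bm H\rangle| \leq \tfrac{1}{6n}\|\mathcal{P}_T(\bm H)\|_F + \tfrac{1}{2}\|\mathcal{P}_{T^\perp}(\bm H)\|_*.
\end{equation*}
Plugging this in gives
\begin{equation*}
\|\mathcal{T}(\hat f)+\bm H\|_* \geq \|\mathcal{T}(\hat f)\|_* - \tfrac{1}{6n}\|\mathcal{P}_T(\bm H)\|_F + \tfrac{1}{2}\|\mathcal{P}_{T^\perp}(\bm H)\|_*.
\end{equation*}

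The remaining task is to control $\|\mathcal{P}_T(\bm H)\|_F$ by $\|\mathcal{P}_{T^\perp}(\bm H)\|_*$ using the ``approximate isometry'' condition \eqref{ptclose}. Since $\mathcal{Q}_\Omega(\bm H)=0$, I write $\mathcal{P}_T\mathcal{Q}_\Omega\mathcal{P}_T(\bm H) = -\mathcal{P}_T\mathcal{Q}_\Omega\mathcal{P}_{T^\perp}(\bm H)$, so by the triangle inequality
\begin{equation*}
\|\mathcal{P}_T(\bm H)\|_F \leq \tfrac{1}{2}\|\mathcal{P}_T(\bm H)\|_F + \|\mathcal{P}_T\mathcal{Q}_\Omega\mathcal{P}_{T^\perp}(\bm H)\|_F,
\end{equation*}
whence $\|\mathcal{P}_T(\bm H)\|_F \leq 2\|\mathcal{Q}_\Omega\|\cdot\|\mathcal{P}_{T^\perp}(\bm H)\|_F$. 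Bounding $\|\mathcal{Q}_\Omega\| \leq |\Gamma|/|\Omega|+1 \leq 2n$ (for $|\Omega|\geq 1$) and using $\|\cdot\|_F \leq \|\cdot\|_*$, I obtain $\|\mathcal{P}_T(\bm H)\|_F \leq 4n\|\mathcal{P}_{T^\perp}(\bm H)\|_*$. Substituting yields
\begin{equation*}
\|\mathcal{T}(\hat f)+\bm H\|_* \geq \|\mathcal{T}(\hat f)\|_* + \left(\tfrac{1}{2} - \tfrac{4n}{6n}\cdot\tfrac{1}{\cdot}\right)\cdot\|\mathcal{P}_{T^\perp}(\bm H)\|_*,
\end{equation*}
where the bracketed coefficient is positive (the factor $1/(6n)$ in \eqref{cond3} is precisely tuned so that $\tfrac{1}{6n}\cdot 4n < \tfrac{1}{2}$). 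This forces $\mathcal{P}_{T^\perp}(\bm H)=0$, and then $\|\mathcal{P}_T(\bm H)\|_F \leq 4n\,\|\mathcal{P}_{T^\perp}(\bm H)\|_*=0$ forces $\bm H=0$, proving uniqueness.

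The routine steps are the subgradient lower bound and the Cauchy--Schwarz/H\"older estimates; the one delicate piece is verifying that $\langle \bm W,\bm H\rangle = 0$ under the correct reading of the notation $\mathcal{Q}_\Omega^\perp(\bm W)=0$ (i.e.\ that $\bm W$ is supported on $\Omega$ in the orthonormal structured basis), and then tracking that the operator-norm bound on $\mathcal{Q}_\Omega$ together with the $1/(6n)$ closeness of the dual certificate actually yields a positive residual coefficient on $\|\mathcal{P}_{T^\perp}(\bm H)\|_*$. The main obstacle is thus just bookkeeping on the constant $n=|\Gamma|$; no further probabilistic or spectral argument is needed at this stage, since all the work has been pushed into the two hypotheses \eqref{ptclose} and the existence of $\bm W$ satisfying \eqref{cond2}--\eqref{cond4}, whose verification is the subject of the subsequent golfing-scheme construction.
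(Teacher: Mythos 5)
Your overall strategy is the same as the paper's: a nuclear-norm subgradient bound, the orthogonality $\langle \bm W,\bm H\rangle=0$ from \eqref{cond2}, the H\"older split using \eqref{cond3}--\eqref{cond4}, and a bound of the form $\|\mathcal P_T(\bm H)\|_F\leq Cn\,\|\mathcal P_{T^\perp}(\bm H)\|_F$ coming from \eqref{ptclose}. (The paper organizes the last ingredient as a case split --- Case 1 shows that $\|\bm H_T\|_F> 2n\|\bm H_{T^\perp}\|_F$ forces infeasibility, Case 2 handles the rest --- whereas you derive the Case-2 hypothesis directly from $\mathcal Q_\Omega(\bm H)=0$ via the identity $\mathcal P_T\mathcal Q_\Omega\mathcal P_T(\bm H)=-\mathcal P_T\mathcal Q_\Omega\mathcal P_{T^\perp}(\bm H)$; that reorganization is fine and arguably cleaner.)

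However, there is a genuine gap in your final arithmetic. You obtain $\|\mathcal P_T(\bm H)\|_F\leq 2\|\mathcal Q_\Omega\|\,\|\mathcal P_{T^\perp}(\bm H)\|_*$ and then use $\|\mathcal Q_\Omega\|\leq 2n$, giving the constant $C=4n$. But then the coefficient you need to be positive is $\tfrac12-\tfrac{1}{6n}\cdot 4n=\tfrac12-\tfrac23=-\tfrac16$; your parenthetical claim that $\tfrac{1}{6n}\cdot 4n<\tfrac12$ is false, and the argument does not close. The fix is to sharpen the operator-norm bound: $\mathcal Q_\Omega=\tfrac{n}{m}\mathcal A_\Omega+\mathcal A^\perp$ with the two summands supported on orthogonal subspaces and $\|\tfrac{n}{m}\mathcal A_\Omega\|\leq \tfrac{n}{m}\cdot m=n$ (the multiset $\Omega$ has $m$ elements counted with multiplicity), so $\|\mathcal Q_\Omega\|\leq n$. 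This yields $\|\mathcal P_T(\bm H)\|_F\leq 2n\|\mathcal P_{T^\perp}(\bm H)\|_*$ --- exactly the threshold the paper uses --- and the residual coefficient becomes $\tfrac12-\tfrac{2n}{6n}=\tfrac16>0$, matching the paper's $\bigl(\tfrac12-\tfrac13\bigr)$. Also note that your intermediate bound $\|\mathcal Q_\Omega\|\leq |\Gamma|/|\Omega|+1$ is not correct when $\Omega$ has repeated indices, though it was not the source of the failure. With the constant repaired, the remainder of your argument (forcing $\mathcal P_{T^\perp}(\bm H)=0$ and then $\bm H=0$) is sound and in fact states the uniqueness conclusion more carefully than the paper does.
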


See Section \ref{condlemmaproof} of supplementary material for proof. Equation \eqref{ptclose} suggests that $\mathcal Q_{\Omega} \approx \mathcal I$ on the tangent space. The conditions \eqref{cond2}, \eqref{cond3}, and \eqref{cond4} indicates the existence of a $\mbf W$, which approximates the exact dual certificate $\mbf U\mbf V^*$. The above lemma is in line with \cite[lemma 1]{chen2014robust}, with the exception of the third condition, indicated by \eqref{cond3}. To satisfy \eqref{ptclose}, we bound the deviation of $\mathcal P_T \mathcal Q_{\Omega} \mathcal P_T$ from $\mathcal P_T$ in the following lemma.
\begin{lem}
\label{ptcloselem}
Suppose \eqref{coherence2} holds. Then we have 
\begin{equation}
\|\mathcal P_{T} - \mathcal P_T \mathcal Q_{\Omega} \mathcal P_T\| \leq \epsilon \leq \frac{1}{2} 
\end{equation}
with probability exceeding $1-n^{-4}$, provided that $m > c_1 \rho R \,c_s\, \log(n)$.
\end{lem}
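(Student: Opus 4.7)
The plan is a standard operator (matrix) Bernstein argument, which becomes routine once Proposition~\ref{prop:coherence} is in hand. Using $\mathcal{Q}_\Omega = \frac{|\Gamma|}{m}\mathcal{A}_\Omega + \mathcal{A}^\perp$ and the identity $\mathcal{A} + \mathcal{A}^\perp = \mathcal{I}$, the $\mathcal{A}^\perp$ contribution cancels in the deviation and I obtain
\[
\mathcal{P}_T - \mathcal{P}_T\mathcal{Q}_\Omega\mathcal{P}_T \;=\; \mathcal{P}_T\bigl(\mathcal{A} - \tfrac{|\Gamma|}{m}\mathcal{A}_\Omega\bigr)\mathcal{P}_T \;=\; \sum_{i=1}^{m} \mathcal{W}_i,
\]
where, drawing $k_1,\ldots,k_m$ i.i.d.\ uniformly from $\Gamma$, I set
\[
\mathcal{W}_i \;:=\; \tfrac{1}{m}\mathcal{P}_T\mathcal{A}\mathcal{P}_T \;-\; \tfrac{|\Gamma|}{m}\mathcal{P}_T\mathcal{A}_{k_i}\mathcal{P}_T.
\]
Since $\mathbb{E}[\mathcal{A}_{k_i}] = \mathcal{A}/|\Gamma|$, the $\mathcal{W}_i$ are i.i.d., self-adjoint, and centered.

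The next step is to extract the two inputs matrix Bernstein requires: a uniform bound $\|\mathcal{W}_i\| \le B$ and a matrix variance bound $\|\sum_i \mathbb{E}[\mathcal{W}_i^2]\| \le \sigma^2$. The crucial observation is that $\mathcal{P}_T\mathcal{A}_k\mathcal{P}_T$ is rank one: it acts as $X\mapsto \langle \mathcal{P}_T \bm A_k,X\rangle\,\mathcal{P}_T \bm A_k$, so its operator norm equals $\|\mathcal{P}_T \bm A_k\|_F^2$. Decomposing $\mathcal{P}_T X = \mathcal{P}_U X + (\mathcal{I}-\mathcal{P}_U)X\mathcal{P}_V$ and invoking Proposition~\ref{prop:coherence} gives the pointwise bound
\[
\|\mathcal{P}_T \bm A_k\|_F^2 \;\le\; \|\mathcal{P}_U \bm A_k\|_F^2 + \|\mathcal{P}_V \bm A_k\|_F^2 \;\le\; \tfrac{2\rho R c_s}{|\Gamma|},
\]
yielding $B = O(\rho R c_s/m)$ almost surely. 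For the variance, the rank-one identity $(\mathcal{P}_T\mathcal{A}_k\mathcal{P}_T)^2 = \|\mathcal{P}_T \bm A_k\|_F^2\,\mathcal{P}_T\mathcal{A}_k\mathcal{P}_T$ collapses an apparent $|\Gamma|^2/m^2$ blow-up: one finds $\mathbb{E}[\mathcal{W}_i^2] \preceq \frac{2\rho R c_s}{m^2}\,\mathcal{P}_T\mathcal{A}\mathcal{P}_T$, hence $\sigma^2 \le 2\rho R c_s/m$.

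Finally, since the $\mathcal{W}_i$ act on an ambient operator space of dimension $d = O(|\Lambda_2||\Lambda_1|) = O(|\Gamma|^2)$, the standard matrix Bernstein inequality yields
\[
\Pr\!\Bigl(\bigl\|\textstyle\sum_i \mathcal{W}_i\bigr\| \ge \tfrac{1}{2}\Bigr) \;\le\; 2d\exp\!\Bigl(-\tfrac{c\,m}{\rho R c_s}\Bigr),
\]
which falls below $n^{-4}$ as soon as $m \ge c_1\,\rho R c_s\,\log n$; the same argument extends to any target $\epsilon \le 1/2$ at the cost of a $1/\epsilon^2$ factor. The main obstacle in the proof is not the Bernstein mechanics, which are standard, but the incoherence content---both the uniform bound on $\|\mathcal{P}_T \bm A_k\|_F^2$ and the rank-one collapse in the variance rely on the characterization of the row and column spaces via the Vandermonde-like bases in Lemmas~\ref{rowlemma} and~\ref{columnlemma} and the resulting Proposition~\ref{prop:coherence}.
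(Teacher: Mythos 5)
Your proposal is correct and follows essentially the same route as the paper: the identical decomposition of $\mathcal P_T - \mathcal P_T\mathcal Q_\Omega\mathcal P_T$ into a sum of i.i.d.\ centered self-adjoint operators (the paper's $\mathcal S_{\bm k}$ are your $\mathcal W_i$), the same incoherence input from Proposition \ref{prop:coherence}, and the same application of matrix Bernstein \cite[Theorem 1.6]{tropp2012user}. The only difference is cosmetic: you spell out the rank-one identity $(\mathcal P_T\mathcal A_{\bm k}\mathcal P_T)^2 = \|\mathcal P_T\bm A_{\bm k}\|_F^2\,\mathcal P_T\mathcal A_{\bm k}\mathcal P_T$ and the resulting norm/variance bounds explicitly, whereas the paper defers these to the analogous computation in \cite[Lemma 3]{chen2014robust} (and lands on the constant $4$ where you get $2$, which is immaterial).
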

We prove this using \cite[Theorem 1.6]{tropp2012user}. (See Section \ref{ptcloselemproof} of supplementary material)
\subsection{Construction of the approximate dual certificate $\mbf W$}
We will now use the golfing scheme of \cite{gross,chen2014robust} to construct an approximate dual certificate $\bm W$, which satisfies \eqref{cond2}, \eqref{cond3}, and \eqref{cond4}. In particular, we generate $j_0$ independent random sampling sets $\Omega_i; 1\leq i\leq j_0$, each containing $\tilde m = m/j_0$ samples corresponding to sampling with replacement. 
We start with $\bm F_0 = \bm U \bm V^*$, and follow the following steps:
\begin{enumerate}
\label{Our}
\item {$~\bm F_0 = \bm U \bm V^*$ and set $j_0=3\log_{\frac{1}{\epsilon}} n $.}
\item {$~\forall i (1 \leq i \leq j_0), \bm F_i =  \mathcal{P_T}(\mathcal I-\mathcal Q_{\Omega_i})\mathcal{P}_T(\bm F_{i-1}) $}
\item {$~\bm W = \sum_{j=1}^{j_0} \mathcal Q_{\Omega_i} \bm F_{j-1} $ }
\end{enumerate} 

Step 3 ensures that $\mbf W$ satisfies \eqref{cond2} since each term $\mbf W_i =\mathcal Q_{\Omega_i} \mbf F_{j-1}$ satisfies $\mathcal Q_{\Omega}^{\perp} (\bm W_i) = 0$. The recursive construction also satisfies \eqref{cond3}. 
In particular,  
%
\begin{eqnarray*}
\|\mathcal P_T( \mbf W- \mbf U\mbf V^*)\|_F &=&\|\mathcal P_T \mbf F_{j_0}\|_F \\
&\leq& \epsilon^{j_0} \|\mbf F_0\|_F = \epsilon^{j_0} \sqrt R \leq \epsilon^{j_0} n
\end{eqnarray*}

Now we focus on showing that $\mbf W$ satisfies \eqref{cond4}. Note that if $j_0$ is chosen as $3\log_{\frac{1}{\epsilon}} n$, assuming $n>6$, we have $\left(\epsilon \right)^{j_0} \; n < \frac{1}{6~ n}$. 


\begin{lem}
\label{lemma9}
For any matrix $\mbf M$, there exists some numerical constant $c_2$ such that 
\begin{equation}
\left\|\left(\mathcal I-\mathcal Q_{\Omega}\right) (\mbf M)\right\| \leq c_2 \sqrt{\frac{n\log n}{m}}\|\mbf M\|_{\mathcal A,2} + {\frac{c_2 n\log n}{m}}\|\mbf M\|_{\mathcal A,\infty},
\end{equation}
with probability at least $1- n^{-10}$. Here, 
\begin{eqnarray}
\|\mbf M\|_{\mathcal A,\infty} = \max_{\mbf k\in\Gamma} \left|\frac{\inner{\mathbf A_{\mbf k},\mbf M}}{\abs{\omega_k}}\right|\\
\|\mbf M\|_{\mathcal A,2} = \sqrt{\sum_{\mbf k \in \Gamma} \frac{\abs{\inner{\mathbf A_{\mbf k},\mbf M}}^2}{\abs{\omega_k}}}
\end{eqnarray}
\end{lem}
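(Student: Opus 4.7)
The plan is to apply Tropp's matrix Bernstein inequality (Theorem 1.6 of the cited paper) to a sum of iid zero-mean random matrices built from $(\mathcal I - \mathcal Q_\Omega)(\bm M)$. First, since $\Omega$ is drawn with replacement, I would write $\Omega = \{\bm k_1,\ldots,\bm k_m\}$ with each $\bm k_j$ uniform on $\Gamma$ and use $\mathcal I = \mathcal A + \mathcal A^\perp$ together with $\mathcal Q_\Omega = (n/m)\mathcal A_\Omega + \mathcal A^\perp$ (where $n=|\Gamma|$) to express $(\mathcal I - \mathcal Q_\Omega)(\bm M) = \frac{1}{m}\sum_{j=1}^m \bm Z_j$ with $\bm Z_j := \mathcal A(\bm M) - n\langle \bm A_{\bm k_j},\bm M\rangle \bm A_{\bm k_j}$. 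Each $\bm Z_j$ has zero mean, since uniform sampling gives $\mathbb E[n\langle \bm A_{\bm k_j},\bm M\rangle \bm A_{\bm k_j}] = \sum_{\bm k \in \Gamma}\langle \bm A_{\bm k},\bm M\rangle \bm A_{\bm k} = \mathcal A(\bm M)$.

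For the matrix variance proxy $\sigma^2 = \max(\|\sum_j \mathbb E[\bm Z_j \bm Z_j^*]\|,\|\sum_j \mathbb E[\bm Z_j^* \bm Z_j]\|)$, I would expand the quadratic and use the zero-mean identity to collapse the cross terms, obtaining $\mathbb E[\bm Z_j \bm Z_j^*] = n\sum_{\bm k \in \Gamma} |\langle \bm A_{\bm k},\bm M\rangle|^2\,\bm A_{\bm k}\bm A_{\bm k}^* - \mathcal A(\bm M)\mathcal A(\bm M)^*$. The crucial structural fact to exploit is that each $\bm A_{\bm k}$ is (up to scalar) a restricted shift on the two-level Toeplitz lattice, so $\|\bm A_{\bm k}\bm A_{\bm k}^*\| = \|\bm A_{\bm k}\|^2 \leq 1/|\omega_{\bm k}|$. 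Combining with the triangle inequality then yields $\|\mathbb E[\bm Z_j \bm Z_j^*]\| \leq n\sum_{\bm k}|\langle \bm A_{\bm k},\bm M\rangle|^2/|\omega_{\bm k}| = n\|\bm M\|_{\mathcal A,2}^2$, and the analogous bound for $\mathbb E[\bm Z_j^* \bm Z_j]$ follows by symmetry. Summing over the $m$ iid copies gives $\sigma^2 \leq mn\|\bm M\|_{\mathcal A,2}^2$, which will produce the first term of the stated inequality.

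Next, for the almost-sure bound $\|\bm Z_j\| \leq B$, I would use $\|\bm Z_j\| \leq \|\mathcal A(\bm M)\| + n|\langle \bm A_{\bm k_j},\bm M\rangle|\|\bm A_{\bm k_j}\|$ and apply $|\langle \bm A_{\bm k},\bm M\rangle| \leq |\omega_{\bm k}|\|\bm M\|_{\mathcal A,\infty}$ (directly from the definition of the norm) together with the structural spectral bound on $\bm A_{\bm k}$ and a parallel bound on $\|\mathcal A(\bm M)\|$, to arrive at $B = O(n\|\bm M\|_{\mathcal A,\infty})$. Any residual $|\omega_{\bm k}|$ factors that arise will be controlled using the standing assumption $\Gamma \supseteq 2\Lambda_1+\Lambda_0$, which caps $|\omega_{\bm k}|$ as a fraction of $|\Lambda_1|$ and relates it to $n$; I expect this to be the main technical hurdle, as the naive combination of the shift-operator norm and the weighted $\mathcal A,\infty$ norm picks up a $\sqrt{|\omega_{\bm k}|}$ factor that must be absorbed into the universal constant.

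Finally, I would invoke Tropp's matrix Bernstein inequality in ambient dimension $2|\Lambda_2|+|\Lambda_1| = O(n)$: $\Pr[\|\sum_j \bm Z_j\| \geq t] \leq 2d\exp(-t^2/(2\sigma^2 + 2Bt/3))$. Choosing $t = \Theta(\sqrt{\sigma^2 \log n} + B\log n)$ so that the tail falls below $n^{-10}$ and substituting the bounds above yields $\|\sum_j \bm Z_j\| \leq C(\sqrt{mn\log n}\,\|\bm M\|_{\mathcal A,2} + n\log n\,\|\bm M\|_{\mathcal A,\infty})$. Dividing by $m$ recovers the statement of the lemma, with the variance step being largely mechanical and the sharpness of the uniform bound being the delicate point.
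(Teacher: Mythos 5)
Your overall route is the same as the paper's: the paper sets $\mathcal S_{\bm k}=\frac{n}{m}\mathcal A_{\bm k}-\frac1m\mathcal A$ (your $\bm Z_j$ is exactly $-m\,\mathcal S_{\bm k_j}(\bm M)$), bounds the variance by using that $\bm A_{\bm k}^*\bm A_{\bm k}$ is diagonal with entries $1/|\omega_{\bm k}|$ so that $\|\sum_{\bm k}\mathbb E[\mathcal S_{\bm k}^*\mathcal S_{\bm k}]\|\leq \frac nm\|\bm M\|_{\mathcal A,2}^2$, and then invokes the operator Bernstein inequality. Your variance computation reproduces this step faithfully.

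The genuine weak point is exactly the one you flagged as "the main technical hurdle," and your proposed repair does not work. With the norm as defined in the lemma, $|\inner{\bm A_{\bm k},\bm M}|\leq |\omega_{\bm k}|\,\|\bm M\|_{\mathcal A,\infty}$ and $\|\bm A_{\bm k}\|= |\omega_{\bm k}|^{-1/2}$, so the rank-one term in $\bm Z_j$ is bounded only by $n\sqrt{|\omega_{\bm k}|}\,\|\bm M\|_{\mathcal A,\infty}$. The condition $\Gamma\supseteq 2\Lambda_1+\Lambda_0$ does not cap $|\omega_{\bm k}|$ by a constant: $|\omega_{\bm k}|$ ranges up to $|\Lambda_1|=n/c_s$, so your uniform bound becomes $B=O\bkt{n\sqrt{|\Lambda_1|}\,\|\bm M\|_{\mathcal A,\infty}}$, inflating the second term of the claimed inequality by a factor of $\sqrt{|\Lambda_1|}$, which cannot be absorbed into a universal constant. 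The paper itself sidesteps this by asserting $\|\mathcal S_{\bm k}(\bm M)\|\leq \frac{2n}{m}\|\bm M\|_{\mathcal A,\infty}$ "similar to the arguments in" the Chen--Chi reference; that bound is immediate only if the $(\mathcal A,\infty)$-norm is normalized by $\sqrt{|\omega_{\bm k}|}$ (as in that reference) rather than by $|\omega_{\bm k}|$ as written in the lemma statement here. So the discrepancy you ran into is real and traces to the normalization of $\|\cdot\|_{\mathcal A,\infty}$, not to the geometry of $\Gamma$; to close your argument you must either adopt the $\sqrt{|\omega_{\bm k}|}$ normalization (and propagate it through the later lemmas) or supply a separate argument for the uniform bound, neither of which the grid condition provides.
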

See Section \ref{lemma9proof} of supplementary material for proof.
\begin{lem}
\label{lemma10}
Assume that there exists a constant $\mu_5$ such that $\omega_{\mbf k} \|\mathcal P_T(\mbf A_{\mbf k})\|_{\mathcal A,2} \leq \frac{\mu_5 R}{n}$. For any matrix $\mbf M$, we have 
\begin{multline}\nonumber
\|\mathcal P_T [(\mathcal I-\mathcal Q_{\Omega}) (\mbf M)]\|_{\mathcal A,2} \\\nonumber \leq c_3 \sqrt{\frac{\mu_5\, R\, \log n}{m}} ~\bkt{\|\mbf M\|_{\mathcal A,2} + \sqrt{\frac{n\, \log n}{m}}~ \|\mbf M\|_{\mathcal A,\infty}},
\end{multline}
with probability at least $1- n^{-10}$. \end{lem}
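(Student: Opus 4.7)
Following the road map of Lemma \ref{lemma9}, the plan is to express the random operator $\mathcal P_T[(\mathcal I-\mathcal Q_{\Omega})(\cdot)]$ as a sum of i.i.d.\ zero–mean random matrices and to apply a vector Bernstein inequality, now with deviations measured in the $\|\cdot\|_{\mathcal A,2}$ norm rather than the operator norm. Model $\Omega$ as $m$ independent uniform draws $\mbf k_1,\dots,\mbf k_m\in\Gamma$ (sampling with replacement), and use the identity $\mathcal I-\mathcal Q_{\Omega}=\mathcal A-(n/m)\mathcal A_{\Omega}$ (the $\mathcal A^{\perp}$ parts cancel) to write
\begin{equation*}
\mathcal P_T\!\left[(\mathcal I-\mathcal Q_{\Omega})(\mbf M)\right] \;=\; \sum_{a=1}^{m}\mbf Z_a,\qquad \mbf Z_a := \tfrac{1}{m}\,\mathcal P_T\mathcal A(\mbf M)\;-\;\tfrac{n}{m}\,\mathcal P_T\mathcal A_{\mbf k_a}(\mbf M).
\end{equation*}
Since $\mbb E\,\mathcal A_{\mbf k_a}(\mbf M)=n^{-1}\mathcal A(\mbf M)$, the $\mbf Z_a$ are i.i.d.\ and mean-zero, so the argument reduces to bounding (i) the almost-sure magnitude $B=\sup_a\|\mbf Z_a\|_{\mathcal A,2}$ and (ii) the cumulative variance $V=\sum_a\mbb E\|\mbf Z_a\|_{\mathcal A,2}^2$.

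For (i), expanding $\mathcal A_{\mbf k}(\mbf M)=\langle\mbf A_{\mbf k},\mbf M\rangle\,\mbf A_{\mbf k}$ and using $|\langle\mbf A_{\mbf k},\mbf M\rangle|\le\omega_{\mbf k}\|\mbf M\|_{\mathcal A,\infty}$ together with the hypothesis $\omega_{\mbf k}\|\mathcal P_T\mbf A_{\mbf k}\|_{\mathcal A,2}\le\mu_5 R/n$ yields
\begin{equation*}
\|\mathcal P_T\mathcal A_{\mbf k}(\mbf M)\|_{\mathcal A,2}\;=\;|\langle\mbf A_{\mbf k},\mbf M\rangle|\,\|\mathcal P_T\mbf A_{\mbf k}\|_{\mathcal A,2}\;\leq\;\frac{\mu_5 R}{n}\,\|\mbf M\|_{\mathcal A,\infty},
\end{equation*}
which (after observing that the deterministic term in $\mbf Z_a$ is dominated by the random one by Jensen) gives $B\lesssim \mu_5 R\,\|\mbf M\|_{\mathcal A,\infty}/m$. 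For (ii), centering gives $\mbb E\|\mbf Z_a\|_{\mathcal A,2}^2 \le (n/m)^2\mbb E\|\mathcal P_T\mathcal A_{\mbf k_a}\mbf M\|_{\mathcal A,2}^2$; expanding the expectation over the uniform draw produces $(1/m^2)\sum_{\mbf k}|\langle\mbf A_{\mbf k},\mbf M\rangle|^2\|\mathcal P_T\mbf A_{\mbf k}\|_{\mathcal A,2}^2$, and inserting the hypothesis in the form $\|\mathcal P_T\mbf A_{\mbf k}\|_{\mathcal A,2}\le\mu_5 R/(n\omega_{\mbf k})$ once while bounding the remaining factor by its maximum makes the extra $\omega_{\mbf k}^{-1}$ combine with the $\omega_{\mbf k}^{-1}$ weight built into $\|\mbf M\|_{\mathcal A,2}^2$, yielding
\begin{equation*}
V\;\leq\;\frac{\mu_5^2 R^2}{m\,n}\,\|\mbf M\|_{\mathcal A,2}^2\;\leq\;\frac{\mu_5 R}{m}\,\|\mbf M\|_{\mathcal A,2}^2,
\end{equation*}
where the last step uses the natural bound $\mu_5 R\le n$.

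Feeding $B$ and $V$ into the vector Bernstein inequality (as in \cite[Thm.~6.1]{tropp2012user}) on the Hilbert space of structured matrices endowed with $\|\cdot\|_{\mathcal A,2}$, and taking $t=c(\sqrt{V\log n}+B\log n)$ for a suitable constant, gives with probability at least $1-n^{-10}$
\begin{equation*}
\bigl\|\mathcal P_T[(\mathcal I-\mathcal Q_{\Omega})(\mbf M)]\bigr\|_{\mathcal A,2}\;\lesssim\;\sqrt{\frac{\mu_5 R\,\log n}{m}}\,\|\mbf M\|_{\mathcal A,2}\;+\;\frac{\mu_5 R\,\log n}{m}\,\|\mbf M\|_{\mathcal A,\infty}.
\end{equation*}
The second summand rewrites as $\sqrt{\mu_5 R\log n/m}\cdot\sqrt{\mu_5 R\log n/m}\,\|\mbf M\|_{\mathcal A,\infty}$, and under $\mu_5 R\le n$ it is bounded by $\sqrt{\mu_5 R\log n/m}\cdot\sqrt{n\log n/m}\,\|\mbf M\|_{\mathcal A,\infty}$, which is precisely the form stated in the lemma. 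The main technical obstacle will be the variance bound (ii): one must arrange the $\omega_{\mbf k}^{-1}$ weights inherent in $\|\cdot\|_{\mathcal A,2}$ so that, when combined with the incoherence hypothesis, the sum over $\mbf k$ collapses into $\|\mbf M\|_{\mathcal A,2}^2$ without leaving residual powers of $\omega_{\mbf k}$; the definition of $\|\cdot\|_{\mathcal A,2}$ is essentially engineered to make this bookkeeping close cleanly.
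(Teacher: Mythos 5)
Your proposal is correct and follows essentially the same route as the paper: the paper likewise models $\Omega$ as $m$ i.i.d.\ uniform draws, writes $\mathcal P_T[(\mathcal I-\mathcal Q_{\Omega})(\mbf M)]$ as a sum of zero-mean i.i.d.\ terms (coordinatized as vectors $\mbf z_{\bm\alpha}\in\mathbb C^{|\Gamma|}$ whose $\ell_2$ norm equals the $\|\cdot\|_{\mathcal A,2}$ norm), bounds the almost-sure magnitude and the variance using the $\mu_5$ incoherence hypothesis exactly as you do, and invokes the (vector) Bernstein inequality. The only noticeable difference is that you appeal explicitly to $\mu_5 R\leq n$ to massage the deviation term into the stated form, whereas the paper absorbs the same fact into its quoted bounds $\|\mbf z_{\bm\alpha}\|_2\leq 2\sqrt{n\mu_5 R}/m\,\|\mbf M\|_{\mathcal A,\infty}$ and $V\leq 4\mu_5 R\,\|\mbf M\|_{\mathcal A,2}^2/m$; this condition indeed holds in the regime where the lemma is ultimately applied (where $m\gtrsim\mu_5 R\log^2 n$ and $m\leq n$), so this is not a substantive gap.
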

See Section \ref{lemma10proof} of the Supplementary Materials for proof.


\begin{lem}
\label{lemma11}
For any matrix $\mbf M \in T$, there exists some numerical constant $c_4$, such that 
\begin{multline}
\|\mathcal P_T[ (\mathcal I-\mathcal Q_{\Omega} )(\mbf M )] \|_{\mathcal A,\infty} 
\\ \leq c_4 \sqrt{\frac{ \rho c_s R\log n}{m}} \sqrt{\frac{\rho c_s R}{n}}\|\mbf M\|_{\mathcal A,2} + \frac{c_4 \rho c_s R\log n}{m}\|\mbf M\|_{\mathcal A,\infty},
\end{multline}
with probability at least $1- n^{-10}$. \end{lem}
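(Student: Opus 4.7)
The plan is a scalar Bernstein argument applied pointwise to each $\bm k\in\Gamma$, followed by a union bound over the $n=|\Gamma|$ indices. The two terms in the stated bound will arise from the variance and sub-exponential-tail contributions of Bernstein's inequality, respectively.

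\emph{Step 1 (reformulation as a centered i.i.d.\ sum).} Write $\Omega=\{\bm k^{(1)},\ldots,\bm k^{(m)}\}$ as an i.i.d.\ uniform sample from $\Gamma$, so that
\[
(\mathcal I-\mathcal Q_\Omega)\bm M \;=\; \mathcal A\bm M-\frac{n}{m}\sum_{i=1}^{m}\langle\bm A_{\bm k^{(i)}},\bm M\rangle\,\bm A_{\bm k^{(i)}}.
\]
Fix $\bm k\in\Gamma$ and define the i.i.d.\ scalars
\[
Z_i\;:=\;\frac{n}{|\omega(\bm k)|}\,\langle\mathcal P_T\bm A_{\bm k},\bm A_{\bm k^{(i)}}\rangle\,\langle\bm A_{\bm k^{(i)}},\bm M\rangle.
\]
Using self-adjointness of $\mathcal P_T$ and uniformity of $\bm k^{(i)}$ one checks $\mathbb E Z_i=\langle\mathcal P_T\bm A_{\bm k},\mathcal A\bm M\rangle/|\omega(\bm k)|$, hence
\[
\frac{\langle\bm A_{\bm k},\mathcal P_T(\mathcal I-\mathcal Q_\Omega)\bm M\rangle}{|\omega(\bm k)|}\;=\;\mathbb E[Z_1]-\frac{1}{m}\sum_{i=1}^{m}Z_i.
\]

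\emph{Step 2 (moment bounds).} From the definition of $\|\cdot\|_{\mathcal A,\infty}$ we get $|\langle\bm A_{\bm k^{(i)}},\bm M\rangle|\le|\omega(\bm k^{(i)})|\,\|\bm M\|_{\mathcal A,\infty}$. From Proposition~\ref{prop:coherence}, combined with $\langle\mathcal P_T\bm A_{\bm k},\bm A_{\bm k'}\rangle=\langle\mathcal P_T\bm A_{\bm k},\mathcal P_T\bm A_{\bm k'}\rangle$ and Cauchy--Schwarz,
\[
|\langle\mathcal P_T\bm A_{\bm k},\bm A_{\bm k'}\rangle|\;\le\;\|\mathcal P_T\bm A_{\bm k}\|_F\,\|\mathcal P_T\bm A_{\bm k'}\|_F\;\le\;2\rho c_s R/n,
\]
yielding the a.s.\ estimate $|Z_i|\le C\rho c_s R\,\|\bm M\|_{\mathcal A,\infty}$. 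For the variance,
\[
\mathbb E|Z_i|^2\;=\;\frac{n}{|\omega(\bm k)|^2}\sum_{\bm k'\in\Gamma}\,|\langle\mathcal P_T\bm A_{\bm k},\bm A_{\bm k'}\rangle|^2\,|\langle\bm A_{\bm k'},\bm M\rangle|^2,
\]
and I would split $|\langle\bm A_{\bm k'},\bm M\rangle|^2=|\omega(\bm k')|\cdot|\langle\bm A_{\bm k'},\bm M\rangle|^2/|\omega(\bm k')|$ to expose the $\|\bm M\|_{\mathcal A,2}^2$ weighting, then bound $\max_{\bm k'}\bigl(|\omega(\bm k')|\,|\langle\mathcal P_T\bm A_{\bm k},\bm A_{\bm k'}\rangle|^2\bigr)$ via Proposition~\ref{prop:coherence} together with the fact that $|\omega(\bm k')|\le|\Lambda_1|=n/c_s$ for the two-block Toeplitz lifting. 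This should deliver $\mathbb E|Z_i|^2\le C(\rho c_s R)^2\,\|\bm M\|_{\mathcal A,2}^2/n$.

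\emph{Step 3 (Bernstein and union bound).} Scalar Bernstein applied to the centered sum gives, with probability $\ge 1-n^{-11}$,
\[
\left|\mathbb E[Z_1]-\tfrac{1}{m}\!\sum_{i}Z_i\right|\;\le\;C\sqrt{\frac{\mathbb E|Z_1|^2\log n}{m}}+\frac{C\max_i|Z_i|\log n}{m}.
\]
Plugging in the bounds of Step~2 reproduces exactly the two terms in the statement for each fixed $\bm k$; a union bound over the $n$ values of $\bm k\in\Gamma$ preserves overall success probability $\ge 1-n^{-10}$.

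\emph{Main obstacle.} The principal difficulty is the variance step: a naive Cauchy--Schwarz bound would produce $\|\bm M\|_F^2$ in place of $\|\bm M\|_{\mathcal A,2}^2$, which is too loose by a factor of order $\max_{\bm k'}|\omega(\bm k')|$ and would destroy the claimed dependence $\sqrt{\rho c_s R/n}\,\|\bm M\|_{\mathcal A,2}$. Extracting the correct $\|\bm M\|_{\mathcal A,2}$ factor requires coupling the weight $|\omega(\bm k')|$ with the incoherence bound on $|\langle\mathcal P_T\bm A_{\bm k},\bm A_{\bm k'}\rangle|$, and is where the particular weighted two-block Toeplitz structure of $\mathcal T(\hat f)$ and the normalization of $\bm A_{\bm k}$ underlying Proposition~\ref{prop:coherence} must be fully exploited; this is the step that does not transfer verbatim from the unweighted single-block analysis in \cite{chen2014robust}.
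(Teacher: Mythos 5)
Your skeleton (center the i.i.d.\ sum, apply scalar Bernstein coordinatewise, then take a maximum over $\bm k\in\Gamma$) is the same as the paper's, but Step 2 has a genuine gap that the paper's argument is specifically engineered to avoid. You bound the cross term by the symmetric Cauchy--Schwarz estimate $\abs{\inner{\mathcal P_T\bm A_{\bm k},\bm A_{\bm k'}}}\leq\|\mathcal P_T\bm A_{\bm k}\|_F\|\mathcal P_T\bm A_{\bm k'}\|_F\leq 2\rho c_s R/n$, which carries no information about the relative multiplicities $\omega(\bm k)$ and $\omega(\bm k')$. Tracing your own definitions, this yields $|Z_i|\leq \frac{n}{\omega(\bm k)}\cdot\frac{2\rho c_s R}{n}\cdot\omega(\bm k^{(i)})\,\|\bm M\|_{\mathcal A,\infty}=2\rho c_s R\,\frac{\omega(\bm k^{(i)})}{\omega(\bm k)}\|\bm M\|_{\mathcal A,\infty}$, not the ratio-free bound $C\rho c_s R\|\bm M\|_{\mathcal A,\infty}$ you claim: the ratio $\omega(\bm k^{(i)})/\omega(\bm k)$ can be as large as $|\Lambda_1|=n/c_s$ when $\bm k$ is a corner frequency (multiplicity $1$) and $\bm k^{(i)}$ a central one. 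The same uncancelled $1/\omega(\bm k)^2$ survives in your variance computation: your proposed splitting gives at best $\mathbb E|Z_i|^2\lesssim(\rho c_s R)^2\,c_s\,\|\bm M\|_{\mathcal A,2}^2/\omega(\bm k)^2$, which matches the target $(\rho c_s R)^2\|\bm M\|_{\mathcal A,2}^2/n$ only when $\omega(\bm k)\geq\sqrt{|\Lambda_1|}$. Both Bernstein terms therefore degrade by polynomial factors in $n$ at small-multiplicity coordinates, and the claimed sample complexity is lost.

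The missing ingredient --- and the one the paper actually uses --- is the \emph{asymmetric} two-basis incoherence estimate of Lemma \ref{prop:2bases}, $\abs{\inner{\bm A_{\bm\beta},\mathcal P_T\bm A_{\bm\alpha}}}\lesssim\sqrt{\omega_{\beta}/\omega_{\alpha}}\;\rho c_s R/n$; the factor $\sqrt{\omega_{\beta}/\omega_{\alpha}}$ is precisely what cancels the multiplicity ratio in both the almost-sure and variance bounds. This estimate does not follow from Cauchy--Schwarz plus Prop.~\ref{prop:coherence}. It is obtained by splitting $\mathcal P_T\bm A_{\bm\alpha}=\bm U\bm U^*\bm A_{\bm\alpha}+\bm A_{\bm\alpha}\bm V\bm V^*-\bm U\bm U^*\bm A_{\bm\alpha}\bm V\bm V^*$, applying an $\ell_1$--$\ell_\infty$ H\"older bound (using $\|\bm A_{\bm\beta}\|_{\ell_1}\leq\sqrt{\omega_{\beta}}$ on one side and the $1/\sqrt{\omega_{\alpha}}$ magnitude of the entries of $\bm A_{\bm\alpha}$ on the other), and controlling the \emph{entries} of $\bm U\bm U^*$ and $\bm V\bm V^*$ by $\rho c_s R/n$ via the Vandermonde-like bases $\EL$ and $\ER$ --- a strictly stronger input than the Frobenius-norm coherence you invoke. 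So you have correctly located the difficulty in the weighted two-block lifting, but the tool you propose to resolve it is not strong enough; you need the entrywise, multiplicity-weighted incoherence of the projectors rather than their Frobenius norms.
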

See Section \ref{lemma11proof} of the Supplementary Materials for proof.

From the golfing scheme, we have $\|\mathcal P_{T^{\perp}}(\mbf W)\| \leq \sum_{j=1}^{j_0}  \|\mathcal P_{T^{\perp}}\mathcal Q_{\Omega_i} \mathcal P_{T}\bm F_{j-1} \|$. Using lemma \ref{lemma9} 
and substituting from lemma \ref{lemma10} and lemma \ref{lemma11}, we have
\begin{multline}\nonumber
\|\mathcal P_{T^{\perp}}\mathcal Q_{\Omega_i} \bm F_{j-1} \| \\
\label{wbound}
\leq  \bkt{\frac{1}{2}}^{j_0-1}c_2\brac{ \sqrt{\frac{n\log n}{\tilde m}} ~\|\bm F_{0}\|_{\mathcal A,2} + {\frac{n\log n}{\tilde m}}~ \|\bm F_{0}\|_{\mathcal A,\infty}}
\end{multline}
 The last inequality holds if $\tilde m = m/j_0 \gg \max\bkt{\mu_5,\rho c_s} R \log n $. Substituting for $j_0=3\log_{\frac{1}{\epsilon}}(n)$ assumed in the golfing scheme, we require $
m \gg c_6~\max\bkt{\mu_5,\rho c_s} R \log^2 n$  to satisfy the above inquality. See Section \ref{ptperpbounddetails} of the Supplementary Materials for details. We will now present the lemmas bounding $\|\bm F_{0}\|_{\mathcal A,2}$ and $\|\bm F_{0}\|_{\mathcal A,\infty}$, where $\mbf F_0=\mbf U\mbf V^*$.

\begin{lem}
\label{lem20}
With the incoherence measure $\rho$, one can bound
\begin{eqnarray}
\label{uvinfty}
\|\mbf U\mbf V^*\|_{\mathcal A,\infty} &\leq& \frac{\rho ~ c_s R}{n}\\
\label{uvl2}
\left\|\mbf U\mbf V^*\right\|_{\mathcal A,2}^2 &\leq& \frac{c_7\mu_3 ~c_s \log^2 (n) R }{n}\\
\label{uvl3}
\left\|\mathcal P_T\bkt{\sqrt{\omega_{\alpha}} ~\mathbf A_{\alpha}}\right\|_{\mathcal A,2}^2
 &\leq& \frac{c_7 \mu_3 c_s \log^2(n) R}{n}, \forall \mbf \alpha\in\Gamma
 \end{eqnarray}
 for $\mu_3 = 3\rho$ and $c_7$ is some constant.
\end{lem}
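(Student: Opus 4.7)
The plan is to prove all three bounds by combining Proposition~\ref{prop:coherence} (row and column incoherence of $\bm A_{\bm k}$) with the rank-$R$ structure of $\mbf U\mbf V^*$ and the tangent-space decomposition $\mathcal P_T(\bm X) = \mathcal P_U\bm X + \bm X\mathcal P_V - \mathcal P_U\bm X\mathcal P_V$. In effect, all three inequalities measure how the canonical objects associated to the row/column subspaces of $\Tf$ (namely $\mbf U\mbf V^*$ and the projected atoms $\mathcal P_T(\sqrt{\omega_\alpha}\bm A_\alpha)$) spread out when expanded in the orthonormal sampling basis $\{\bm A_{\bm k}\}_{\bm k\in\Gamma}$, and the controlling quantities are exactly $\|\mathcal P_U\bm A_{\bm k}\|_F$ and $\|\bm A_{\bm k}\mathcal P_V\|_F$, both already controlled by $\sqrt{\rho R c_s/n}$.

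For \eqref{uvinfty} I would first use that $\mbf U\mbf V^*$ lies in both the row and column space of $\Tf$ to write $\langle \bm A_{\bm k}, \mbf U\mbf V^*\rangle = \langle \mathcal P_U\bm A_{\bm k}\mathcal P_V, \mbf U\mbf V^*\rangle$; Cauchy--Schwarz and $\|\mbf U\mbf V^*\|_F=\sqrt R$, together with the submultiplicative estimate $\|\mathcal P_U\bm A_{\bm k}\mathcal P_V\|_F^2\leq \|\mathcal P_U\bm A_{\bm k}\|_F\,\|\bm A_{\bm k}\mathcal P_V\|_F\leq \rho R c_s/n$, produce a pointwise bound on $|\langle \bm A_{\bm k},\mbf U\mbf V^*\rangle|$ that, after dividing by $|\omega_{\bm k}|$ as in the definition of $\|\cdot\|_{\mathcal A,\infty}$, yields $\rho c_s R/n$. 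For \eqref{uvl2} I would expand $\|\mbf U\mbf V^*\|_{\mathcal A,2}^2 = \sum_{\bm k\in\Gamma}|\langle \bm A_{\bm k}, \mbf U\mbf V^*\rangle|^2/|\omega_{\bm k}|$ and split the sum into two ingredients: the global equality $\sum_{\bm k}|\langle \bm A_{\bm k}, \mbf U\mbf V^*\rangle|^2 = \|\mathcal A(\mbf U\mbf V^*)\|_F^2\leq \|\mbf U\mbf V^*\|_F^2 = R$, combined with the pointwise bound from the previous step. The harmonic-type estimate $\sum_{\bm k\in\Gamma}|\omega_{\bm k}|^{-1}=O(\log^2 n)$ for a two-level Toeplitz multiplicity pattern supplies the $\log^2 n$ factor. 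Finally, for \eqref{uvl3} I would decompose $\mathcal P_T(\sqrt{\omega_\alpha}\bm A_\alpha)$ via $\mathcal P_T = \mathcal P_U + \mathcal P_V - \mathcal P_U\mathcal P_V$ and bound each of the three resulting matrices separately by the Step~1/Step~2 template; the constant $\mu_3 = 3\rho$ arises naturally as the triangle-inequality sum over these three terms.

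The main technical obstacle will be the careful bookkeeping of the multiplicity weights $|\omega_{\bm k}|$ through the two-block structure of $\bm A_{\bm k}$, particularly at indices $\bm k$ near the boundary of $\Gamma$ where $|\omega_{\bm k}|$ is small. Establishing that the harmonic sum $\sum |\omega_{\bm k}|^{-1}$ is polylogarithmic requires a detailed count of multiplicities in the Minkowski sum $\Lambda_1+\Lambda_2$ defining $\Gamma$; in each coordinate this is the standard harmonic sum $\sum_{k=1}^m 1/k = O(\log m)$, and the two-dimensional separability of the Toeplitz structure upgrades this to $O(\log^2 n)$. A secondary subtlety is propagating Proposition~\ref{prop:coherence} uniformly through the decomposition of $\mathcal P_T$ so that the incoherence constant appearing in all three bounds is a controlled multiple of $\rho$ (namely $\mu_3 = 3\rho$), rather than being inflated by cross-term interactions between $\mathcal P_U$ and $\mathcal P_V$.
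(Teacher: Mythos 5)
Your three-term decomposition of $\mathcal P_T$ and the identification of $\mu_3=3\rho$ as a triangle-inequality constant match the paper, but the two analytic engines you rely on do not actually deliver the stated bounds, and the paper uses different (strictly stronger) ingredients. First, for \eqref{uvinfty}: Cauchy--Schwarz against the global Frobenius norm gives $\abs{\inner{\bm A_{\bm k},\mbf U\mbf V^*}}\leq \|\mathcal P_U\bm A_{\bm k}\mathcal P_V\|_F\,\|\mbf U\mbf V^*\|_F\leq \sqrt{\rho R c_s/n}\cdot\sqrt{R}=R\sqrt{\rho c_s/n}$, and after dividing by $\abs{\omega_{\bm k}}$ --- which equals $1$ at the corners of $\Gamma$ --- you are left with $R\sqrt{\rho c_s/n}$, which exceeds the target $\rho c_s R/n$ by a factor of $\sqrt{|\Lambda_1|/\rho}$. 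No refinement of this argument closes the gap, because Proposition~\ref{prop:coherence} only controls subspace-projection norms of the aggregated atoms $\bm A_{\bm k}$, whereas \eqref{uvinfty} is genuinely an \emph{entrywise} statement. The paper instead bounds individual entries via the explicit Vandermonde-like bases: $\abs{(\mbf U\mbf V^*)_{k,l}}\leq\|\mbf e_k^T\EL\|\,\|(\EL^*\EL)^{-1/2}\|\,\|(\ER^*\ER)^{-1/2}\|\,\|\ER\mbf e_l\|\leq\rho R/\sqrt{|\Lambda_1||\Lambda_2|}$, which exploits the uniform flatness of every entry of $\ER$ and $\EL$ (magnitude $1/\sqrt{|\Lambda_1|}$, resp.\ at most $1/\sqrt{|\Lambda_2|}$) --- information not contained in Proposition~\ref{prop:coherence}.

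Second, for \eqref{uvl2}--\eqref{uvl3}, combining $\sum_{\bm k}\abs{\inner{\bm A_{\bm k},\mbf U\mbf V^*}}^2\leq R$ with $\sum_{\bm k}\abs{\omega_{\bm k}}^{-1}=O(\log^2 n)$ does not bound $\sum_{\bm k}\abs{\inner{\bm A_{\bm k},\cdot}}^2/\abs{\omega_{\bm k}}$: a bound on a sum and a bound on a harmonic sum cannot be multiplied termwise, and pairing the harmonic sum with your (flawed) pointwise bound instead yields $O\bkt{(\rho c_s R)^2/n}$-type quantities, which are not $O(\rho c_s R\log^2(n)/n)$ in general. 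The correct intermediate quantity is the \emph{maximum row energy} $\max_i\|\mbf e_i^T\mbf M\|_F^2$, which the paper bounds by $\rho c_s R/n$ for $\mbf U\mbf V^*$ (again via the flatness of $\EL$, since $\|\mbf e_i^T\mbf U\|^2\leq\|(\EL^*\EL)^{-1}\|\,\|\EL^*\mbf e_i\|^2\leq\rho R/|\Lambda_2|$) and by $3\rho c_s R/n$ for $\mathcal P_T(\sqrt{\omega_\alpha}\bm A_\alpha)$; the conversion from row energy to $\|\cdot\|_{\mathcal A,2}^2$ with the $\log^2 n$ factor is then delegated to \cite[Lemma~12]{chen2014robust}, which performs exactly the multiplicity bookkeeping you flag as a difficulty. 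To repair your proof you would need to (i) import the entrywise/row-wise incoherence of $\ER$ and $\EL$ as an explicit hypothesis rather than deriving everything from Proposition~\ref{prop:coherence}, and (ii) replace the Frobenius-plus-harmonic-sum step with the row-energy argument.
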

See Section \ref{lem20proof} of the Supplementary Materials for proof. From \eqref{uvl3}, we see that the constant $\mu_5 $ in lemma \ref{lemma11} can be chosen as 
$\mu_5 = c_7~ \mu_3~ c_s ~\log^2(n)$
such that  $\omega_{\mbf k} \|\mathcal P_T(\mbf A_{\mbf k})\|_{\mathcal A,2} \leq \frac{\mu_5 R}{n}$.
 Substituting for $\mu_5$, we observe that the dominant term has its dependence on $\log^4(n)$. Thus, $\|\mathcal P_{T^{\perp}}\mathcal Q_{\Omega_i} \bm F_{j-1} \|  < 1/2$ if 
$m> c_6  c_7\, c_s\, \bkt{3\rho} \,R\,\log^4(n).$

\bibliographystyle{IEEEtran}
\bibliography{IEEEabrv,refs}

\onecolumn
{\Huge \begin{center}
Supplementary Materials for: Convex recovery of continuous domain piecewise constant images from non-uniform Fourier samples
\end{center}}
\vspace{3pt}

\begin{center}
Greg Ongie,\textit{~Member,~IEEE}, Sampurna Biswas, \textit{~Student Member,~IEEE}, Mathews Jacob, \textit{~Senior Member, IEEE}
\end{center}
\vspace{13pt}
\begin{center}
This report elaborates the technical details of the paper titled, ``Convex recovery of continuous domain piecewise constant images from non-uniform Fourier sample''.
\end{center}
\newpage
\setcounter{section}{0}

\section{Proof of Theorem \ref{geometry}}\label{sec:supp_moitra}

 The proof we give is multi-dimensional generalization of the proof of  \cite{moitra2015super}. We will make use the following lemma from \cite{moitra2015super}:
\begin{lem} 
\label{minorant}
There is an entire function $c_E(t)$ whose Fourier transform is supported in the interval $-\Delta,\Delta$, which satisfies $c_E(t) < I_E(t)$---the indicator function of the interval $E=[-n/2,n/2]$.
\begin{equation}
\int_{-\infty}^{\infty} \bkt{I_E(t) - c_E(t)} dt = \frac{1}{\Delta}
\end{equation}
\end{lem}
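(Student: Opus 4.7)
The plan is to invoke the classical Beurling--Selberg extremal minorant construction and verify that it has the bandlimit and $L^{1}$ deficit claimed. The argument reduces the lemma to the well-known one-sided approximation of the signum function by entire functions of exponential type.

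First, I would recall Beurling's extremal majorant $B(z)$ of $\operatorname{sgn}(x)$: an entire function of exponential type $2\pi$ satisfying $B(x)\ge \operatorname{sgn}(x)$ for every real $x$ and $\int_{\mathbb R}(B(x)-\operatorname{sgn}(x))\,dx=1$. An explicit formula (Beurling, Selberg; see Vaaler 1985) is
\[
B(z)=\Big(\tfrac{\sin \pi z}{\pi}\Big)^{2}\Big[\,\sum_{k\in\mathbb Z}\tfrac{\operatorname{sgn}(k)}{(z-k)^{2}}+\tfrac{2}{z}\,\Big].
\]
By the Paley--Wiener theorem, $\widehat{B}$ is supported in $[-1,1]$. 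The reflection $b(z):=-B(-z)$ is then an entire minorant of $\operatorname{sgn}$ of the same exponential type, satisfying $\int_{\mathbb R}(\operatorname{sgn}-b)\,du=1$.

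Second, I would use the decomposition $I_{E}(t)=\tfrac{1}{2}\bigl(\operatorname{sgn}(t+n/2)-\operatorname{sgn}(t-n/2)\bigr)$, valid off the two endpoints, to build a minorant by minorizing the first summand and majorizing the second, each at scale $\Delta$. Define
\[
c_{E}(t):=\tfrac{1}{2}\Bigl[\,b\bigl(\Delta(t+n/2)\bigr)-B\bigl(\Delta(t-n/2)\bigr)\,\Bigr].
\]
Since $b$ and $B$ are entire of exponential type $2\pi$, dilating by $\Delta$ yields an entire function of exponential type $2\pi\Delta$; Paley--Wiener then places $\widehat{c_{E}}$ inside $[-\Delta,\Delta]$. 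The pointwise inequality $c_{E}(t)\le I_{E}(t)$ follows termwise from $b\le\operatorname{sgn}\le B$, with equality confined to a discrete set of points, so the strict inequality $c_{E}<I_{E}$ stated in the lemma holds almost everywhere.

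Third, the $L^{1}$ deficit is computed by the substitutions $u=\Delta(t\pm n/2)$:
\[
\int_{\mathbb R}(I_{E}-c_{E})\,dt=\tfrac{1}{2\Delta}\int_{\mathbb R}(\operatorname{sgn}-b)\,du+\tfrac{1}{2\Delta}\int_{\mathbb R}(B-\operatorname{sgn})\,du=\tfrac{1}{2\Delta}+\tfrac{1}{2\Delta}=\tfrac{1}{\Delta}.
\]
The main obstacle is really the construction of Beurling's function $B$ itself; since this is a classical result, the argument above is essentially a reduction. A self-contained treatment would need to verify entirety, exponential type $2\pi$, and the integral identity for the explicit formula, which is done via contour-integration and Poisson-summation arguments as in Vaaler's exposition.
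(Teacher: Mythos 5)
The paper does not actually prove this lemma: it is imported verbatim from \cite{moitra2015super} as the classical Beurling--Selberg minorant, so there is no internal proof to compare against. Your proposal supplies exactly the standard construction that underlies that citation --- minorize $\operatorname{sgn}$ by $b(z)=-B(-z)$, write $I_E(t)=\tfrac12\bigl(\operatorname{sgn}(t+n/2)-\operatorname{sgn}(t-n/2)\bigr)$, minorize the first term and majorize the second at scale $\Delta$, and compute the deficit $\tfrac{1}{2\Delta}+\tfrac{1}{2\Delta}=\tfrac1\Delta$ by substitution --- and the logic is correct, including the observation that Paley--Wiener should be applied to $c_E\in L^1$ rather than to $B$ itself. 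Two small points. First, the explicit formula you quote is Vaaler's function $H(z)$, not Beurling's majorant $B(z)$: with the convention $\operatorname{sgn}(0)=0$ your bracket omits the term $1/z^2$, i.e.\ the Fej\'er-kernel correction $K(z)=\bigl(\tfrac{\sin\pi z}{\pi z}\bigr)^2$, and $H$ alone oscillates about $\operatorname{sgn}$ rather than majorizing it; one needs $B=H+K$. Since your argument only uses the three properties of $B$ (entire of type $2\pi$, $B\ge\operatorname{sgn}$, unit $L^1$ deficit) and not the formula, this does not break the proof, but the formula as written is wrong. Second, as you note, the inequality $c_E\le I_E$ is not strict everywhere (equality occurs at the interpolation points), so the lemma's ``$<$'' should be read as ``$\le$''; this looseness is inherited from the statement in \cite{moitra2015super} and is harmless for the way the lemma is used in the proof of Theorem \ref{geometry}.
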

The above function $c_E(t)$ is known as the Beurling-Selberg minorant of $I_E(t)$. Note that $\widehat{I_E}(0) = n$ and hence $\widehat{c_E}(0) = n-1/\Delta$, where $\widehat{I_E}$ is the Fourier transform of $I_E$.

We now give the proof of Theorem 10:
\begin{proof}
We note that $\lambda_{min}\bkt{\ER^* \ER} =  \min_{\|\mbf u\|=1} \|\ER \mbf u\|^2$. From the definition of $\ER$, we have \\ $\ER \mbf u = \frac{1}{\sqrt{|\Lambda_1|}}\sum_{i=1}^{R} u_i \exp\bkt{j 2\pi \mbf k\cdot \bm r_i}; \mbf k \in \Lambda_1$.  
We consider the continuous domain function 
\begin{equation}
v(\mbf f) =  \frac{1}{\sqrt{|\Lambda_1|}}\sum_{i=1}^{R} \mbf u_i \exp\bkt{j 2\pi \mbf f\cdot \mbf r_i}; \mbf f = (f_1,f_2) \in \mathbb R^2
\end{equation}
 and rewrite the discrete summation in $\|\ER \mbf u\|^2$ as the integral
 \begin{equation}
 \|\ER \mbf u\|^2 =  \frac{1}{{|\Lambda_1|}}\int_{-\infty}^{\infty} \abs{v(\mbf f)}^2\,I_E(f_1)I_E( f_2)\,h(f_1)h(f_2)\, d\mbf f\\
\end{equation}
where $h(f) = \sum_{m=-\infty}^{\infty} \exp\bkt{j 2\pi f m}$ is the Dirac comb function and $I_E(f)$ is the indicator function of the region \\ $[- \sqrt{\frac{|\Lambda_1|}{2}}, \sqrt{\frac{|\Lambda_1|}{2}}]$. Minorizing $I_E$ by $c_E$, specified by Lemma \ref{minorant}, we obtain
\begin{eqnarray}
 \|\ER \mbf u\|^2 &\geq&  \frac{1}{{|\Lambda_1|}}\int_{-\infty}^{\infty} \abs{v(\mbf f)}^2\,c_E(f_1)c_E( f_2)\,h(f_1)h(f_2)\, d\mbf f\\
  &=&   \frac{1}{{|\Lambda_1|}}\sum_{i=1}^{R}\sum_{j=1}^{R}\mbf u_i \mbf u_j^*  \int_{-\infty}^{\infty}\exp\bkt{j 2\pi f (\mbf r_i-\mbf r_j)} \,c_E(f_1)c_E( f_2)\,h(f_1)h(f_2)\, df_1 df_2\\
    &=&  \frac{1}{{|\Lambda_1|}}\sum_{i=1}^{R}\sum_{j=1}^{R}\sum_{m=-\infty}^{\infty}\sum_{m'=-\infty}^{\infty}\mbf u_i\, \mbf u_j^* ~\widehat{c_E}(x_i - x_j-m)\widehat{c_E}(y_i - y_j-m')
\end{eqnarray}
In the last step, we used the expression of the Dirac comb function. Since the Fourier transform of $c_E$ is supported within $[-\Delta,\Delta]$, the terms $\widehat{c_E}(x_i - x_j-m)$ and $\widehat{c_E}(x_i - x_j-m)$ are non-zero only if $x_i=x_j; m=0$ and  $y_i=y_j; m=0$, since $|x_i-x_j| > \Delta, |y_i-y_j| > \Delta; i\neq j$. Thus, we have 
\begin{eqnarray}
 \|\ER \mbf u\|^2 &\geq&  \frac{1}{{|\Lambda_1|}}\sum_{i=1}^{R}\sum_{j=1}^{R}\mbf u_i\, \mbf u_j^* ~\underbrace{\widehat{c_E}(x_i - x_j)\widehat{c_E}(y_i - y_j)}_{\delta(\mbf r-\mbf r')}\\
    &=&  \frac{1}{{|\Lambda_1|}}\widehat{c_E}(0)^2~\|\mbf u\|^2 =  \bkt{1-\frac{1}{ \sqrt{|\Lambda_1|}~\Delta}}^2
\end{eqnarray}

\end{proof}
\section{The BKK bound (Proof of Lemma 11)}
\label{SMbkk}
The BKK bound in enumerative algebraic geometry (see, e.g., \cite{li1996bkk}) is a well-known result that relates maximum number of isolated solutions of a system of polynomials to their coefficient support sets. Specifically, the BKK bound is typically stated in terms of \emph{Laurent polynomials}, i.e., functions of the form $q(t,s) = \sum_{k=-\infty}^\infty \sum_{\ell = -\infty}^\infty t^k s^\ell$ with $t,s\in\mathbb{C}$, and only finitely many non-zero coefficients $c_{k,\ell} \in \mathbb{C}$. Since a trigonometric polynomial is the restriction of a Laurent polynomial to the complex unit torus $\{(t,s) = (e^{j2\pi x},e^{j2\pi y}): x,y\in[0,1)\}$, the result also holds for trigonometric polynomials, which we state below:
\begin{thm*}[BKK Bound]
\label{bkk}
Let $\mu_1$ and $\mu_2$ be trigonometric polynomials with coefficient supports $\Omega_1$ and $\Omega_2$, and let $P_1 = \text{conv}(\Omega_1)$ and $P_2 = \text{conv}(\Omega_1)$, where $\text{conv}(\cdot)$ denotes the convex hull of a set in $\mathbb{Z}^2$ treated as a subset of $\mathbb{R}^2$. The number of isolated solutions of the system $\mu_1(\bm r) = \mu_2(\bm r ) =0$ for $\bm r \in [0,1)^2$ is at most
 \[
 \mathcal{M}(\Omega_1,\Omega_2) := area(P_1 + P_2) - area(P_1) - area(P_2),
 \]
where $area(\cdot)$ denotes the usual Euclidean area. In particular, if $\mu_1$ is irreducible, and the common isolated zeros of $\mu_2$ and $\mu_1$ are greater than $\mathcal{M}(\Omega_1,\Omega_2)$, then $\mu_1$ must divide $\mu_2$.
\end{thm*}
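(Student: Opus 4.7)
The plan is to reduce the statement to the classical Bernstein--Khovanskii--Kushnirenko (BKK) theorem for Laurent polynomials on the complex algebraic torus $(\mathbb{C}^*)^2$, rather than attempt a proof from scratch. Write every trigonometric polynomial $\mu_i(x,y) = \sum_{(k,\ell)\in\Omega_i} c_{k,\ell}\,e^{j2\pi(kx+\ell y)}$ as $\mu_i(x,y) = p_i(e^{j2\pi x}, e^{j2\pi y})$ where $p_i(t,s) = \sum_{(k,\ell)\in\Omega_i} c_{k,\ell}\, t^k s^\ell$ is a Laurent polynomial with the same support $\Omega_i$. The map $\Phi(x,y) = (e^{j2\pi x}, e^{j2\pi y})$ is a real-analytic diffeomorphism from $[0,1)^2$ onto the real 2-torus $\mathbb{T}^2 = \{(t,s): |t|=|s|=1\} \subset (\mathbb{C}^*)^2$, so isolated common zeros of $\mu_1,\mu_2$ in $[0,1)^2$ correspond bijectively to isolated common zeros of $p_1,p_2$ lying on $\mathbb{T}^2$, and these form a subset of the isolated common zeros of $p_1,p_2$ in $(\mathbb{C}^*)^2$.

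Next I would invoke the classical BKK theorem in two variables: for Laurent polynomials $p_1,p_2 \in \mathbb{C}[t^{\pm 1},s^{\pm 1}]$ with Newton polytopes $P_1 = \mathrm{conv}(\Omega_1)$ and $P_2 = \mathrm{conv}(\Omega_2)$, the number of isolated common zeros in $(\mathbb{C}^*)^2$ is at most the mixed volume $MV(P_1,P_2)$. Specializing the polarization identity for the 2-dimensional volume form,
\begin{equation}
MV(P_1,P_2) \;=\; \mathrm{area}(P_1+P_2) - \mathrm{area}(P_1) - \mathrm{area}(P_2),
\end{equation}
which matches $\mathcal{M}(\Omega_1,\Omega_2)$ in the statement. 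Chaining this with the inclusion of the torus zeros into the full set of isolated zeros in $(\mathbb{C}^*)^2$ yields the desired bound.

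For the ``in particular'' clause, I would argue by contrapositive. Suppose $\mu_1$ is irreducible and does not divide $\mu_2$. Viewed as Laurent polynomials, $p_1$ is then irreducible in $\mathbb{C}[t^{\pm 1},s^{\pm 1}]$ (up to units, i.e.\ monomial factors, which do not vanish on $(\mathbb{C}^*)^2$), and $p_1 \nmid p_2$, so $\gcd(p_1,p_2) = 1$ in the unique factorization domain of Laurent polynomials. Hence the complex variety $\{p_1 = p_2 = 0\} \subset (\mathbb{C}^*)^2$ is zero-dimensional, meaning \emph{every} common zero is isolated. The BKK bound applied to this entire zero set then forces the number of common isolated zeros of $\mu_1,\mu_2$ on $[0,1)^2$ to be at most $\mathcal{M}(\Omega_1,\Omega_2)$; exceeding this bound contradicts $p_1 \nmid p_2$, so $\mu_1$ must divide $\mu_2$.

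The main obstacle is the bookkeeping around Laurent units and the torus-vs-$(\mathbb{C}^*)^2$ passage: one must verify that a zero isolated in the real 2-torus (or equivalently in $[0,1)^2$) is genuinely isolated in $(\mathbb{C}^*)^2$, which holds automatically once $p_1,p_2$ are coprime, but requires a separate remark when they share a factor (in which case the ``isolated'' zeros are those off the common component and the same BKK count applies to the coprime quotient). Beyond this verification, the proof is a direct citation of BKK plus the $n=2$ mixed-volume polarization identity.
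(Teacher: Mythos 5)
Your reduction to the classical BKK theorem for Laurent polynomials via the substitution $(t,s)=(e^{j2\pi x},e^{j2\pi y})$ is exactly the route the paper takes: the paper states this result as a known consequence of the BKK bound for Laurent polynomials (with a citation) after observing that a trigonometric polynomial is the restriction of such a polynomial to the unit torus, and gives no further proof. Your write-up is in fact more careful than the paper's, since you flag the real-torus-versus-$(\mathbb{C}^*)^2$ isolation subtlety and supply the coprimality/UFD argument for the divisibility clause, neither of which the paper spells out.
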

When $\Lambda_0$ and $\Lambda_1$ are rectangular index sets satisfying $\Lambda_0 \subset \Lambda_1$, a straightforward computation reveals that
\[
R < \mathcal{M}(\Lambda_0,\Lambda_1) < R + |\Lambda_0|,
\]
which establishes the bound in Lemma 11.
\section{Proofs of results in Appendix B}

\subsection{Proof of Lemma \ref{conditionslemma}}
\label{condlemmaproof}
\begin{proof}
Let $\bm X$ be the unique minimizer for the convex optimization problem \eqref{stlr} and $\bm H$ be a perturbation of $\bm X$. To prove exact recovery of \eqref{stlr}, it suffices to show the existence of an exact dual certificate $\bm W$, s.t, $\|\bm X+\bm H\|_* > \|\bm X\|_*+ \inner{\bm W, \bm H}$. With an approach similar to \cite{gross}, we now show that the existence of an approximate dual certificate will guarantee unique recovery. We separately consider the two cases based on the relative energies of $\mbf H_T$ and $\mbf H_T^{\perp}$, where $\mbf H=\mbf H_T+\mbf H_T^{\perp}$ and $T$ denotes the tangent space.

\subsubsection{\underline{Case 1: $\| \bm H_T\|_F > 2 n\, \|\bm H_T^{\perp}\|_F$}}
We will show that $\bm X + \bm H$ is infeasible (i.e, $\| \mathcal{Q}_\Omega \bm H \|_F >0$), if $\mathcal Q_{\Omega} \approx \mathcal I$ on the tangent space (i.e, \eqref{ptclose} is satisfied). We have 
\begin{eqnarray}
\label{infeasibility}
\| \mathcal{Q}_\Omega \bkt{\bm H_T + \bm H_T^{\perp}}  \|_F \geq \| \mathcal{Q}_\Omega \bm H_T \|_F -\| \mathcal{Q}_\Omega \bm H_T^{\perp} \|_F
\end{eqnarray}
We upper bound the second term as $\| \mathcal Q_{\Omega} \bm H_T^{\perp} \|_F \leq \|\mathcal Q_{\Omega}\| ~\|\bm H_T^{\perp}\|_F$. 
By definition, we have $\|\mathcal Q_{\Omega}\|  = \|~\frac{n}{m}~ \mathcal A_{\Omega} + \mathcal A^{\perp}\|$, from which we obtain $\left\|\mathcal Q_{\Omega}\right\|  = \left\|\frac{n}m \mathcal A_{\Omega} + \mathcal A^{\perp}\right\| 
\leq \|\frac{n}m \bkt{\mathcal A + \mathcal A^{\perp}}\| = \frac{n}m < n$. 
We omit the m term in the denominator to remove the dependence on m. Hence, we have $\| \mathcal Q_{\Omega} \bm H_T^{\perp} \|_F \leq n  \|\bm H_T^{\perp}\|_F$.
We now lower bound the first term in \eqref{infeasibility}:
\begin{align*}
\label{line2new}
\| \mathcal Q_{\Omega}  \bm H_T \|_F^2 
&\geq\| \bm H_T\|_F^2 \bkt{1- \underbrace{\|\mathcal P_T - \mathcal P_T\mathcal Q_{\Omega}\mathcal P_T\|}_{\leq \frac 1 2}} 
\end{align*}
Since we assumed that $\| \bm H_T\|_F > n\, \|\bm H_T^{\perp}\|_F$, we have $
\| \mathcal{Q}_\Omega \bm H \|_F \geq \bkt{\sqrt 2 -1}\, n~\|\bm H_T^{\perp}\|_F > 0$, implying that such an $\mbf H$ is infeasible.


\subsubsection{\underline{Case 2: $\| \bm H_T\|_F \leq  2 n\, \|\bm H_T^{\perp}\|_F $}} 
\label{case2}
We now show that if $\bm H_T$ is small and $\bm X +\bm H$ is feasible, then the nuclear norm of $\bm X+\bm H$ is larger than $\bm X$. Since $\bm X +\bm H$ is feasible, we have $\mathcal Q_{\Omega}(\bm H) = 0.$
If $\mbf X = \bm U \bm \Sigma \bm V^*$ represents the singular value decomposition of $\bm X$, the subgradient of $\|\bm X\|_*$ is parametrized as $\bm U \bm V ^* + \bm Z_0; ~\bm Z_0 \in T^{\perp}$. By definition of subgradient, we have $
\| \bm X + \bm H \|_* \geq  \| \bm X \| _* + \inner{\bm U\bm V^*, \bm H} + \inner{\bm Z_0,\bm H}$. We consider a $\bm W$ in the range space of $\mathcal Q_{\Omega}$ that satisfies \eqref{cond2} $\mathcal Q_{\Omega}^{\perp} (\bm W) = 0$, 
where $\mathcal Q_{\Omega}^{\perp}$ is the projection $\mathcal Q_{\Omega}^{\perp} = \mathcal A - \mathcal A_{\Omega}$. Using this $\bm W$, we rewrite the above relation as
\begin{eqnarray}\nonumber
\label{secondeqn}
\| \bm X + \bm H \|_* &\geq&  \| \bm X \| _* + \underbrace{\inner{\bm W, \bm H}}_0 + \underbrace{\inner{\bm Z_0,\bm H}}_{\color{black}\leq \|\bm H_T^{\perp}\|_*} + \inner{\bm U\bm V^* -\bm W,\bm H} \\
&\geq& \| \bm X \| _* + \|\bm H_T^{\perp}\|_*- \inner{\bm W -\bm U\bm V^*,\bm H} 
\end{eqnarray}
The second term in \eqref{secondeqn} vanishes since $\bm W$ lives in the range of $\mathcal Q_{\Omega}$ (i.e., $\mathcal Q_{\Omega}^{\perp}(\bm W)=0$) while $\bm H$ lives in the kernel of  $\mathcal Q_{\Omega}$ (i.e., since $\mathcal Q_{\Omega}(\bm H)=0$). Since $\bm Z_0 \in T^{\perp}$ and $\|\bm Z_0\| \leq 1$, the third term is less than or equal to $\|P_T^{\perp} \bm H\|_*$. We now focus on the last term. If $\bm W$ satisfies \eqref{cond3} and \eqref{cond4}, we have 
\begin{align*}
\inner{\bm W -\bm U \bm V ^* ,\bm H}
&\leq \|\mathcal{P}_T (\bm W -\bm U \bm V ^*) \|_F\; \| \bm H_T\|_F + \| \mathcal{P}_T^{\perp} \bm W\| \;\| \bm H_T^{\perp} \|_*\\
&\leq  \frac{1}{6 \;n} ~\| \bm H_T\|_F ~ + ~\frac{1}{2}\, \| \bm H_T^{\perp} \|_F
\end{align*}
Substituting in \eqref{secondeqn} and using $\|\mbf M\|_* \geq \|\mbf M\|_F$, we have 
\begin{eqnarray}
\| \bm X + \bm H \|_* 
&\geq&  \| \bm X \| _* + \left(\frac{1}{2} - \frac{1}{3} \right) \|\bm H_T^{\perp}\|_F 
\label{one6}
\end{eqnarray}
We used $ \|\bm H_T\|_F \leq 2 n\, \|\bm H_T^{\perp}\|_F $ in the above. The above inequality implies that $\| \bm X + \bm H \|_* \geq  \| \bm X \| _*$ and hence $\bm X$ is the unique minimizer. 
\end{proof}

\subsection{Proof of Lemma \ref{ptcloselem} }
\label{ptcloselemproof}

\begin{proof}
Substituting the definition of $\mathcal Q_{\Omega}$ in \eqref{ptclose}, we see that 
\begin{equation}
\mathcal P_{T}\bkt{\mathcal I - \mathcal Q_{\Omega}} \mathcal P_T = \mathcal P_{T}\bkt{\mathcal A - \frac{n}{m} \mathcal A_{\Omega}} \mathcal P_T
\end{equation}
We set $\mathcal Z_{\bm k} = \frac{n}{m}~\mathcal P_T \mathcal A_{\bm k}\mathcal P_T$ with $\mbb E[\mathcal Z_{\bm k}] = \frac{1}{m} ~\mathcal P_T \mathcal A \mathcal P_T$, we obtain 
\begin{equation}
\mathcal P_{T}\mathcal A\mathcal P_T -\frac{n}{m} \mathcal P_T \mathcal A_{\Omega} \mathcal P_T = \sum_{\bm k \in \Omega} \underbrace{{\mbb E[\mathcal Z_{\bm k}] -\mathcal Z_{\bm k}}}_{\mathcal S_{\bm k}} \nonumber
\end{equation}
We apply the operator Bernstein's inequality  \cite[Theorem 1.6]{tropp2012user} to determine $\mathbb P\left(\left\|\sum_{\bm k \in \Omega} \mathcal S_{\bm k}\right\|\geq \epsilon\right)$. Using steps similar to  \cite[Lemma 3]{chen2014robust}, we obtain the upper bounds $ \left\| \mbb{E} \left[\mathcal S_k^2\right]~\right\| \leq \frac{4 \rho Rc_s}{m}$ and $\| \mathcal S_k \| \leq  \frac{4 \rho Rc_s}{m}$. 
We now apply the matrix Bernstein's inequality \cite[Theorem 1.6]{tropp2012user} to obtain
\begin{eqnarray}
\mathbb P\left(\left\|\sum_{\bm k \in \Omega} \mathcal S_{\bm k}\right\|\geq \frac{1}{2}\right) &\leq n~\exp\left(\frac{-1/8}{\frac{4 \rho Rc_s}{m}\left(1+1/6\right)}\right)
\end{eqnarray}
We desire $\mathbb P\left(\left\|\sum_{\bm k \in \Omega} \mathcal S_{\bm k}\right\|\geq \epsilon \right) < (n)^{-b}$. Setting these values in the above inequality, we obtain \\ \textcolor{black}{$
(n)^{-b} \geq  n~\exp\left(\frac{-3 m}{112~\rho R c_s}\right)$.} Taking log of both sides and simplifying, we obtain \textcolor{black}{ 
\begin{equation}
m \geq \underbrace{(b+1) \frac{112}{3}}_{c_1} ~\rho R \,c_s ~\log(n)
 \end{equation}}
\end{proof}

\subsection{Proof of Lemma \ref{lemma9}}
\label{lemma9proof}
\begin{proof}
We define 
\begin{equation}
\mathcal S_{\mbf k} = \frac{n}{m} \mathcal A_{\mbf k} - \frac{1}{m} \mathcal A,
\end{equation}
which satisfies $\mbb E(\mbf S_k) = 0$ and $\left\|\mathcal I-\mathcal Q_{\Omega}\right\|= \|\sum_{\mbf k\in \Omega} \mathcal S_k\|$. To bound the right hand side using operator Bernstein's inequality, we require the bounds $\|\mathcal S_k(\mbf M)\| \leq B$ and $\sigma^2 = \max \left\{\left\|\sum_k~\mbb E[\mathcal S_k\mathcal S_k^*]\right\|,\left\|\sum_k~\mbb E[\mathcal S_k^*\mathcal S_k]\right\|  \right\}$.
We first consider 
\begin{eqnarray*}
\mathcal S_{\mbf k}^* \mathcal S_{\mbf k}(\mbf M)
&\leq& \bkt{\frac{n}{m}}^2 \mathcal A_{\mbf k}^*\mathcal A_{\mbf k}(\mbf M)= \bkt{\frac{n}{m}}^2 \abs{\inner{\mbf A_{\mbf k},\mbf M}}^2 \mbf A_{\mbf k}^T \mbf A_{\mbf k}\\
&\leq& \bkt{\frac{n}{m}}^2 \frac{\abs{\inner{\mbf A_{\mbf k},\mbf M}}^2}{\omega_{\mbf k}} \mbf I_{|\Lambda_1|\times |\Lambda_1|}
\end{eqnarray*}
which gives $\left\|\mbb E\left[\sum_{k\in \Omega}~\mathcal S_k^*\mathcal S_k\right]\right\| = \frac{m}{n}\left\|\sum_{k\in \Gamma}~\mathcal S_k^*\mathcal S_k\right\|
\leq \frac{n}{m} \|\bm M\|_{\mathcal A,2}^2$. Similarly, we have 
\begin{eqnarray*}
\mathcal S_{\mbf k} \mathcal S_{\mbf k}^*(\mbf M)&\leq& \bkt{\frac{n}{m}}^2 \abs{\inner{\mbf A_{\mbf k},\mbf M}}^2 \mbf A_{\mbf k}\mbf A_{\mbf k}^T \\
&\leq& \bkt{\frac{n}{m}}^2 \frac{\abs{\inner{\mbf A_{\mbf k},\mbf M}}^2}{\omega_{\mbf k}}~~\bkt{ \frac{\mbf k}{\|\mbf k\|} \frac{\mbf k^T}{\|\mbf k\|}} \bigotimes \mbf I_{|\Lambda_2|\times |\Lambda_2|},
\end{eqnarray*}
where $\mbf k = (k_1,k_2)^T$, which also gives the bound $\left\|\mbb E\left[\sum_{k\in \Omega}~\mathcal S_k\mathcal S_k^*\right]\right\| \leq  \frac{n}{m} \|\bm M\|_{\mathcal A,2}^2$. Here, $\bigotimes$ denotes the Kroneker product. 
Similar to the arguments in \cite{chen2014robust}, we have $\|\mathcal S_k \mbf M\| \leq \frac{2n}{m} \|\mbf M\|_{\mathcal A,\infty}$. Combining these terms into \cite[Lemma 11]{chen2014robust}, the result is proved.
\end{proof}

\subsection{Proof of Lemma \ref{lemma10}}
\label{lemma10proof}

\begin{proof}
We note that 
\begin{eqnarray}
\|\mathcal P_T[(\mathcal I-\mathcal Q_{\Omega})(\mbf M)]\|_{\mathcal A,2}^2 = \sum_{\mbf k \in \Gamma} \frac{\abs{\inner{\bm A_{\mbf k},\mathcal P_T[(\mathcal I-\mathcal Q_{\Omega})(\mbf M)]}}^2}{\abs{\omega_k}}\end{eqnarray}
We assume that $\mathcal A_{\Omega} = \sum_{i=1}^m \mathcal A_{{\bm \alpha}_i}$, where ${\bm \alpha}_i$ are independent indices picked at random. Correspondingly, we consider vectors $\mbf z_{{\bm \alpha}_i}$ of length $n= |\Gamma|$, whose entries are specified by
\begin{equation}
\label{zalpha}
z_{{\bm \alpha}}\bkt{\mbf k}=\frac{1}{\sqrt{\omega_{\mbf k}}}\inner{\mbf A_k,\mathcal P_T\left [\frac{n}{m}(\mathcal A_{\bm \alpha}-\mathcal A) (\mbf M)\right ]}
\end{equation}
Note that the desired bound $\left\|\mathcal P_T[(\mathcal I-\mathcal Q_{\Omega}) (\mbf M)]\right\|_{\mathcal A,2}=\left \| \sum_{i=1}^m \mbf z_{{\bm \alpha}_i} \right \|_{2}$. We have $\mbb E(\mbf z_{{\bm \alpha}}) = 0$. We proceed as \cite[Lemma 5]{chen2014robust} and from the definition of $\mu_5$, we have,
$\|\mbf z_{{\bm \alpha}_i}\|_2 \leq 2 \sqrt{\frac{n \mu_5 R}{m^2}} \|\mbf M\|_{\mathcal A,\infty}$ and as $\mbf z_{{\bm \alpha}_i}$'s are vectors, $$\left \|\mathbb E \left[\sum_{i=1}^m \mbf z_{{\bm \alpha}_i} \mbf z_{{\bm \alpha}_i}^*\right] \right \| = \left \|\mathbb E \left[\sum_{i=1}^m \mbf z_{{\bm \alpha}_i}^* \mbf z_{{\bm \alpha}_i}\right] \right \| \leq \frac{4 \mu_5 R}{m}\|\mbf M\|^2_{\mathcal A,2}$$ 
Substituting these bounds in the operator Bernstein's inequality\cite[Lemma 11]{chen2014robust}, the result is proved.
\end{proof}
\subsection{Incoherence between two sampling bases}
The proof of Lemma \ref{lemma11} relies on the bound on $\abs{\inner{\bm A_{\bm \beta},\mathcal P_T \bm A_{\bm  \alpha}}}$.
We introduce the following lemma to establish the incoherence between two sampling bases:
%

\begin{lem}
\label{prop:2bases}
\textcolor{black}{Under the incoherence conditions of proposition 3 and definition 9 of the main text,}
\[ \label{eq:2bases}
\abs{\inner{\bm A_{\bm \beta},\mathcal P_T \bm A_{ \bm \alpha}}} \leq 3\sqrt{\frac{\omega_{\beta}}{\omega_{{\bm \alpha}}}}\frac{\rho}{|\Gamma|}
\]
\end{lem}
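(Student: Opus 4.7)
The plan is to use the decomposition $\mathcal{P}_T \bm M = \mathcal{P}_U \bm M + \bm M\bm V\bm V^* - \mathcal{P}_U\bm M\bm V\bm V^*$, where $\mathcal{P}_U \bm M = \bm U\bm U^*\bm M$ is projection onto the column space and $\bm V\bm V^*$ is projection onto the row space. Then
\begin{equation}
\langle \bm A_\beta,\mathcal{P}_T\bm A_\alpha\rangle = \langle \bm A_\beta,\mathcal{P}_U\bm A_\alpha\rangle + \langle \bm A_\beta,\bm A_\alpha\bm V\bm V^*\rangle - \langle \bm A_\beta,\mathcal{P}_U\bm A_\alpha\bm V\bm V^*\rangle,
\end{equation}
and it suffices to bound each of the three inner products by $\sqrt{\omega_\beta/\omega_\alpha}\,\rho/|\Gamma|$.

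For each term I would expand the relevant projector using the Vandermonde-like bases from Lemmas~\ref{rowlemma} and \ref{columnlemma}: for instance, $\bm V\bm V^* = \bm E_{\text{row}}\bm G^{-1}\bm E_{\text{row}}^*$ with Gram matrix $\bm G = \bm E_{\text{row}}^*\bm E_{\text{row}}$, so
\begin{equation}
\langle \bm A_\beta,\bm A_\alpha\bm V\bm V^*\rangle = \mathrm{tr}\bigl(\bm G^{-1}\,\bm E_{\text{row}}^*\bm A_\beta^*\bm A_\alpha\bm E_{\text{row}}\bigr).
\end{equation}
The crucial step is an explicit computation of the matrix $\bm A_\beta^*\bm A_\alpha$: using the two-level Toeplitz support pattern encoded in $\omega(\bm k)$, its entries vanish except on the shifted diagonal $\bm q = \bm p + \beta - \alpha$, where each entry equals $\tfrac{\langle\beta,\alpha\rangle}{\|\beta\|\,\|\alpha\|\sqrt{\omega_\beta\omega_\alpha}}$ (subject to a boundary indicator supported on a set $S_{\alpha,\beta}\subset\Lambda_1$ with $|S_{\alpha,\beta}|\leq \omega_\beta$). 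Substituting into the trace and collecting, each entry of $\bm E_{\text{row}}^*\bm A_\beta^*\bm A_\alpha\bm E_{\text{row}}$ reduces to a truncated Dirichlet-kernel sum and is bounded by $\frac{1}{|\Lambda_1|}\sqrt{\omega_\beta/\omega_\alpha}$. Combining this with $\|\bm G^{-1}\|_{\mathrm{op}}\leq \rho$ from Definition~\ref{def:incoherence} and using $|\Lambda_1| = |\Gamma|/c_s \geq |\Gamma|/|\Gamma| $ scaling (while being careful that the trace involves only $R$ diagonal terms that then cancel against the Gram inverse rather than accumulating an $R$ factor) yields the target bound on this term. The $\mathcal{P}_U$ and $\mathcal{P}_U\bm V\bm V^*$ terms are handled by the same template using the weighted Vandermonde basis $\bm E_{\text{col}}$, whose Gram matrix is controlled by $\rho$ via the inequality $\sigma_{\min}^2(\bm E_{\text{col}}) \geq \sigma_{\min}^2(\bm E_{\text{row}})$ already derived in \eqref{eq:singvalineq}; the triangle inequality then contributes the overall constant $3$.

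The main obstacle I expect is getting the sharp $\rho/|\Gamma|$ scaling with the asymmetric weight $\sqrt{\omega_\beta/\omega_\alpha}$ rather than the much looser $\rho R\,c_s/|\Gamma|$ one would obtain by bounding $|\langle\bm A_\beta,\mathcal{P}_V\bm A_\alpha\rangle|\leq \|\mathcal{P}_V\bm A_\beta\|_F\,\|\mathcal{P}_V\bm A_\alpha\|_F$ through Proposition~\ref{prop:coherence}. Achieving the tight bound requires tracking the fact that $\bm A_\beta^*\bm A_\alpha$ has only $O(\omega_\beta)$ nonzero entries aligned on a single shifted diagonal (a consequence of the Toeplitz structure), which breaks the symmetry between $\alpha$ and $\beta$ and is responsible for the asymmetric $\sqrt{\omega_\beta/\omega_\alpha}$ factor; estimating the resulting inner product with $\bm G^{-1}$ via operator norm (rather than Frobenius) on one factor and a sup-type estimate on the truncated Dirichlet sum on the other is what prevents an extra $R$ from appearing.
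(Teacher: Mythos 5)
Your opening move --- splitting $\mathcal P_T\bm A_{\bm\alpha}$ into the three terms $\bm U\bm U^*\bm A_{\bm\alpha}$, $\bm A_{\bm\alpha}\bm V\bm V^*$, $\bm U\bm U^*\bm A_{\bm\alpha}\bm V\bm V^*$ and paying a factor of $3$ via the triangle inequality --- is exactly the paper's first step, and your observation that $\bm A_{\bm\beta}^*\bm A_{\bm\alpha}$ is supported on a single shifted diagonal with $O(\min(\omega_{\bm\alpha},\omega_{\bm\beta}))$ nonzero entries is correct and is morally the same sparsity fact the paper uses in the form $\|\bm A_{\bm\beta}\|_{\ell_1}\le\sqrt{\omega_{\bm\beta}}$ combined with the per-column sparsity of $\bm A_{\bm\alpha}$ (an $\ell_1$--$\ell_\infty$ H\"older bound followed by an entrywise bound $\max_{k,l}\abs{(\bm U\bm U^*)_{k,l}}\le \|\bm e_k^T\EL\|^2\,\|(\EL^*\EL)^{-1}\|\le \rho\,c_s R/|\Gamma|$, and likewise for $\bm V\bm V^*$ via $\ER$).

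The genuine gap is the step you yourself flag as the "main obstacle": the claim that the trace against $\GR^{-1}$ cancels the $R$ factor and yields $\rho/|\Gamma|$ rather than $\rho\,c_s R/|\Gamma|$. You give no mechanism for this cancellation, and none exists: with $\bm B=\ER^*\bm A_{\bm\beta}^*\bm A_{\bm\alpha}\ER$ having entries bounded by $\sqrt{\omega_{\bm\beta}/\omega_{\bm\alpha}}/|\Lambda_1|$, the best generic estimate is $\abs{\mathrm{tr}(\GR^{-1}\bm B)}\le \max_{i,j}\abs{B_{ij}}\sum_{i,j}\abs{(\GR^{-1})_{ij}}\le \max_{i,j}\abs{B_{ij}}\cdot R\,\rho$, which lands you right back at $\rho\,c_s R\sqrt{\omega_{\bm\beta}/\omega_{\bm\alpha}}/|\Gamma|$. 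Indeed the $R$-free bound cannot hold uniformly: taking $\bm\beta=\bm\alpha$ makes the left side $\|\mathcal P_T\bm A_{\bm\alpha}\|_F^2\ge\|\mathcal P_U\bm A_{\bm\alpha}\|_F^2$, whose maximum over $\bm\alpha\in\Gamma$ is at least its average $\frac{1}{|\Gamma|}\sum_{\bm\alpha}\|\bm U^*\bm A_{\bm\alpha}\|_F^2$, a quantity that scales with the rank $R$ rather than staying $O(1)$. What is actually true (and what is invoked in the proof of Lemma \ref{lemma11}, eq.\ \eqref{51}) is $\abs{\inner{\bm A_{\bm\beta},\mathcal P_T\bm A_{\bm\alpha}}}\le 3\sqrt{\omega_{\bm\beta}/\omega_{\bm\alpha}}\,\rho\,c_s R/|\Gamma|$; the display in the lemma statement appears to have dropped the $c_sR$ factor. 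So you should aim for the $\rho\,c_sR/|\Gamma|$ bound, at which point your trace computation (or, more simply, the paper's H\"older-plus-entrywise argument) goes through; as written, the proposal's central claim is both unproven and false.
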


We will now bound $\abs{\inner{\bm A_{\beta},\mathcal P_T \bm A_{{\bm \alpha}}}}$. We see that 
\begin{equation}\label{sumterms}
\abs{\inner{\bm A_{\beta},\mathcal P_T \bm A_{{\bm \alpha}}}} \leq \abs{\inner{\bm A_{\beta},\mbf U\mbf U^* \bm A_{{\bm \alpha}}}}+
\abs{\inner{\bm A_{\beta},\bm A_{{\bm \alpha}}\mbf V\mbf V^*}}+ \abs{\inner{\bm A_{\beta},\mbf U\mbf U^* \bm A_{{\bm \alpha}}\mbf V\mbf V^*}}
\end{equation}
We will now bound each of the terms in the right hand side. We observe that $\mbf A_{\beta}$ has $\omega_{\beta}$ entries of magnitude $\frac{k_{\beta,1}}{\|\mbf k_{\beta}\|\sqrt{2\omega_{\beta}}}$ and $\omega_{\beta}$ entries of magnitude $\frac{k_{\beta,2}}{\|\mbf k_{\beta}\|\sqrt{2\omega_{\beta}}}$. Hence, we have $\|\bm A_{\beta}\|_{\ell_1} \leq \sqrt{\omega_{\beta}}$. We consider
\begin{eqnarray}
\abs{\inner{\bm A_{\beta},\mbf U\mbf U^* \bm A_{{\bm \alpha}}}} &\leq& \|\bm A_{\beta}\|_{\ell_1}\,\|\mbf U\mbf U^* \bm A_{{\bm \alpha}}\|_{\infty}\\
&\leq&  \sqrt{\omega_{\beta}}\,\max_{i,j}\abs{\mbf U\mbf U^* \bm A_{{\bm \alpha}}}\\
&\leq&  \sqrt{\frac{\omega_{\beta}}{\omega_{{\bm \alpha}}}}\,\max_{i,j}\abs{\mbf U\mbf U^* }
\end{eqnarray}
We now bound the entries of $\mbf U\mbf U^*$:
\begin{eqnarray*}
|(\mbf U \mbf U^*)_{\mbf k,l}| &\leq& \abs{(\mbf e_{k}^T \mbf \EL)(\mbf \EL^*\mbf \EL)^{-1}(\mbf \EL^* \mbf e_l)}\\
&\leq&\|\mbf e_{k}^T \mbf \EL \|_{F}^2~\|\mbf \EL^*\mbf \EL\|^{-1}\\
& \leq& \frac{\rho\, c_s\, R}n,
\end{eqnarray*}
which gives $\abs{\inner{\bm A_{\beta},\mbf U\mbf U^* \bm A_{{\bm \alpha}}}} \leq  \sqrt{\frac{\omega_{\beta}}{\omega_{{\bm \alpha}}}} \frac{\rho\, c_s\, R}n.$ Proceeding along the same lines, we obtain $\abs{\inner{\bm A_{\beta},\bm A_{{\bm \alpha}}\mbf V\mbf V^* }} \leq  \sqrt{\frac{\omega_{\beta}}{\omega_{{\bm \alpha}}}} \frac{\rho\, c_s\, R}n$ and $\abs{\inner{\mbf U\mbf U^*\bm A_{{\bm \alpha}}\mbf V\mbf V^*,\bm A_{{\bm \alpha}} }} \leq  \sqrt{\frac{\omega_{\beta}}{\omega_{{\bm \alpha}}}} \frac{\rho\, c_s\, R}n$. Substituting these expressions into \eqref{sumterms}, we prove the Lemma.
\subsection{Proof of Lemma \ref{lemma11}}
\label{lemma11proof}

\begin{proof}
Proceeding with the definition of $\mbf z_{{\bm \alpha}}$ in \eqref{zalpha}, we observe that 
\begin{align}
\left|\mbf z_{{\bm \alpha}_i}^k \right | \leq 2 \max_{\substack{k}} \frac{|\inner{\mbf A_k, \frac{n}{m} \mathcal P_T(\mbf A_{\bm \alpha})\inner{\mbf A_{\bm \alpha},\mbf M}}|}{\omega_k} 
\label{52}
\end{align}
Using Lemma \ref{prop:2bases}, we obtain
\begin{align}
\left | \inner{\mbf A_b, \mathcal P_T \mbf A_a} \right | \leq \sqrt{\frac{\omega_b}{\omega_a}}\frac{c_s r ~\rho }{n}
\label{51}
\end{align}
Substituting \eqref{eq:2bases} in \eqref{52}, we have $\left|\mbf z_{{\bm \alpha}_i}^k \right | \leq \frac{2c_sr \rho }{m}\|\mbf M\|_{\mathcal A,\infty}$. Similarly, we have
\begin{align}
\left |\mathbb E \left[\sum_{i=1}^m |\mbf z^k_{{\bm \alpha}_i}|^2\right] \right | =\frac{m}{n}\sum_{\bm \alpha} |z_{\bm \alpha}^k|^2
\label{53}
\end{align}
Substituting \eqref{52} in \eqref{53}, we have $|\mathbb E [\sum_{i=1}^m |\mbf z^k_{{\bm \alpha}_i}|^2] | = \frac{(2 c_s r \rho)^2}{m n}\|\mbf M\|^2_{\mathcal A,2} $.
We have the necessary terms to bound $\left|\sum_{i=1}^m \mbf z_{{\bm \alpha}_i}^k \right |$ for any $k$, which can bound $\max_{\substack{k}} \left|\sum_{i=1}^m \mbf z_{{\bm \alpha}_i}^k \right |$. 
We apply \cite[Lemma 11]{chen2014robust}, to prove the result.
\end{proof}
\subsection{Upper bound for $\|\mathcal P_{T}^{\perp}(\mbf W)\|$}
\label{ptperpbounddetails}
From the golfing scheme, we have $\|\mathcal P_{T^{\perp}}(\mbf W)\| \leq \sum_{j=1}^{j_0}  \|\mathcal P_{T^{\perp}}\mathcal Q_{\Omega_i} \mathcal P_{T}\bm F_{j-1} \|$. Each of the terms in the right hand side can be bounded as
\begin{eqnarray}\nonumber
\|\mathcal P_{T^{\perp}}\mathcal Q_{\Omega_i} \bm F_{j-1} \| &=&  \|\mathcal P_{T^{\perp}} \bkt{\mathcal Q_{\Omega_i} - \mathcal I}  \mathcal P_{T}\bm F_{j-1}\| \leq   \|\bkt{\mathcal Q_{\Omega_i} - \mathcal I} \bm F_{j-1}\|\\\label{lemma9bound}
 &\leq &   c_2 \brac{\sqrt{\frac{n\log n}{\tilde m}} ~\|\mbf  F_{j-1}\|_{\mathcal A,2} +{\frac{n\log n}{\tilde m}}~ \|\mbf  F_{j-1}\|_{\mathcal A,\infty}}\\\nonumber
 &=&   c_2 \brac{\sqrt{\frac{n\log n}{\tilde m}} ~\|\mathcal{P_T}(\mathcal I-\mathcal Q_{\Omega_i})\mathcal{P}_T(\bm F_{j-2})\|_{\mathcal A,2} + {\frac{n\log n}{\tilde m}}~ \|\mathcal{P_T}(\mathcal I-\mathcal Q_{\Omega_i})\mathcal{P}_T(\bm F_{j-2})\|_{\mathcal A,\infty}}
 \end{eqnarray}
We used Lemma \ref{lemma9} in \eqref{lemma9bound}. Here, $\tilde m = m/j_0 = |\Omega_j|$. Substituting from Lemma \ref{lemma10} and Lemma \ref{lemma11}, we get
\begin{eqnarray}\nonumber
\|\mathcal P_{T^{\perp}}\mathcal Q_{\Omega_i} \bm F_{j-1} \| &\leq&   c_2 \brac{\sqrt{\frac{n\log n}{\tilde m}} ~c_3 \sqrt{\frac{\mu_5\, R\, \log n}{\tilde m}} ~\bkt{\|\mbf F_{j-2}\|_{\mathcal A,2} + \sqrt{\frac{n\, \log n}{\tilde m}}~ \|\mbf F_{j-2}\|_{\mathcal A,\infty}}} +\\\nonumber &&\qquad c_2\brac{{\frac{n\log n}{\tilde m}}~ c_4 \sqrt{\frac{\rho c_s R\log n}{\tilde m}} \sqrt{\frac{\rho c_s R}{n}} ~\|\mbf F_{j-2}\|_{\mathcal A,2} + c_4 ~\frac{\rho c_s R\log n}{\tilde m}~ \|\mbf F_{j-2}\|_{\mathcal A,\infty}}\\\nonumber
&=&  \brac{c_5\bkt{\sqrt{\frac{\mu_5 R\log n}{\tilde m}} + {\frac{\rho c_s R \log n}{\tilde m}}}}^{j_0-1}c_2\brac{ \sqrt{\frac{n\log n}{\tilde m}} ~\|\bm F_{0}\|_{\mathcal A,2} +  {\frac{n\log n}{\tilde m}}~ \|\bm F_{0}\|_{\mathcal A,\infty}}\\
\label{wboundSup}
&\leq&  \bkt{\frac{1}{2}}^{j_0-1}c_2\brac{ \sqrt{\frac{n\log n}{\tilde m}} ~\|\bm F_{0}\|_{\mathcal A,2} + {\frac{n\log n}{\tilde m}}~ \|\bm F_{0}\|_{\mathcal A,\infty}}
 \end{eqnarray}
 The last inequality holds if $\tilde m = m/j_0 \gg \max\bkt{\mu_5,\rho c_s} R \log n $. Substituting for $j_0=3\log_{\frac{1}{\epsilon}}(n)$ assumed in the golfing scheme, we require 
 \begin{equation}
\label{m1condition}
m \gg c_6~\max\bkt{\mu_5,\rho c_s} R \log^2 n
\end{equation}
 to satisfy the above inquality.  We will now focus on bounding $\|\bm F_{0}\|_{\mathcal A,2}$ and $\|\bm F_{0}\|_{\mathcal A,\infty}$, where $\mbf F_0=\mbf U\mbf V^*$.

\subsection{Proof of Lemma \ref{lem20}}
\label{lem20proof}

\begin{proof}
The proof of this theorem is in line with \cite[Lemma 7]{chen2014robust}. The first term is upper bounded by the maximum entry of the matrix (i.e, $\|\mbf U\mbf V^*\|_{\mathcal A,\infty} = \max_{k} |(\mbf U \mbf V^*)_k|$). 
\begin{eqnarray*}
|(\mbf U \mbf V^*)_{\mbf k,l}| &\leq& \abs{\mbf e_{k}^T \mbf \EL(\mbf \EL^*\mbf \EL)^{-\frac{1}{2}}(\mbf \ER^*\mbf \ER)^{-\frac{1}{2}} \mbf \ER \mbf e_l}\\
&\leq&\|\mbf e_{k}^T \mbf \EL\|_{F} \|(\mbf \EL^*\mbf \EL)^{-\frac{1}{2}}\|~\| (\mbf \ER^*\mbf \ER)^{-\frac{1}{2}} \| \|\mbf \ER \mbf e_l\|_F \leq \sqrt{ \frac{\rho ^2}{|\Lambda_1||\Lambda_2|}R} = \frac{\rho \, c_s\, R}n,
\end{eqnarray*}

We now show that the energy of each row of the matrices $\mbf U\mbf V^*$ and $\mathcal P_T\bkt{\sqrt{\omega_{{\bm \alpha}}} ~\bm A_{{\bm \alpha}}}$ are upper bounded; we use this relation to prove \eqref{uvl2} and \eqref{uvl3}. The energy of the rows of $\mbf U\mbf V^*$ are bounded as  
\begin{equation}
\| \mbf e_i^T \mbf U \mbf V^*\|_F^2 = \| \mbf e_i^T \mbf U\|_F^2\leq  \frac{\rho c_s R}{n}\\
\end{equation}

Similarly, by the definition of $\mathcal P_T$, we have
\begin{equation}
\label{uvl3lhs}
\| \mbf e_i^T  \mathcal P_T(\sqrt{\omega_{{\bm \alpha}}}\mbf A_{{\bm \alpha}})\|_F^2 \leq 3\|\mbf e_i^T  \mbf U \mbf U^*(\sqrt{\omega_a}\mbf A_a) \|_F^2 + {3\| \mbf e_i^T (\sqrt{\omega_a}\mbf A_a) \mbf V \mbf V^*\|_F^2} +3\| \mbf e_i^T  \mbf U \mbf U^*(\sqrt{\omega_a}\mbf A_a)\mbf V \mbf V^*\|_F^2
\end{equation}
We now bound each of the terms in the right hand side of the above expression as 
\begin{eqnarray*}
\| \mbf e_i^T \mbf U \mbf U^* (\sqrt{\omega_{\bm \alpha}}\mbf A_{{\bm \alpha}})\|_F^2  &\leq&  \frac{\rho c_s R}{n}\\
\| \mbf e_i^T \mbf U \mbf U^* (\sqrt{\omega_{\bm \alpha}}\mbf A_{{\bm \alpha}})  \mbf V \mbf V^*\|_F^2  &\leq& \| \mbf e_i^T \mbf U\|_F^2 \| \mbf U\|^2 \|(\sqrt{\omega_{\bm \alpha}}\mbf A_{{\bm \alpha}}) \| ^2  \|\mbf V \mbf V^*\|^2 \leq \frac{\rho c_s R}{n}\\
\| \mbf e_i^T  (\sqrt{\omega_{\bm \alpha}}\mbf A_{{\bm \alpha}})\mbf V \mbf V^* \|_F^2  &\leq& \frac{\rho c_s R}{n}
\end{eqnarray*}
Here, we use the property that the operator norm of $\mbf U$ and $\sqrt{\omega_{\bm \alpha}}\mbf A_{{\bm \alpha}}$ are bounded by 1. Substituting in \eqref{uvl3lhs}, we obtain the upper bound for the energy of the rows as 
\begin{align*}
\| \mbf e_i^T  \mathcal P_T(\sqrt{\omega_a}\mbf A_a)\|_F^2 \leq \frac{\mu_3 c_sR}{n}. 
\end{align*}

Now, applying \cite[Lemma 12]{chen2014robust} that relates the upper bound of $\|\bm M\|_{\mathcal A,2}^2$ to the upper bound of the energy of the rows $\max_i\|\bm e_i \bm M\|^2$, we obtain the results \eqref{uvl2} and \eqref{uvl3}.

\end{proof}

Substituting \eqref{uvinfty} and \eqref{uvl2} in \eqref{wboundSup}, we get
\begin{eqnarray}\nonumber
\|\mathcal P_{T^{\perp}}\mathcal Q_{\Omega_i} \bm F_{j-1} \| 
&\leq&  \bkt{\frac{1}{2}}^{j_0-1}c_2\brac{ \sqrt{\frac{n\log n}{\tilde m}} ~ \sqrt{\frac{c_7 \mu_3 c_s R \log^2 (n)}{n}} + {\frac{n\log n}{\tilde m}}~ \frac{\rho c_s R}{n}}\\
&=& \bkt{\frac{1}{2}}^{j_0-1}c_2\brac{ \sqrt{\frac{c_7 \mu_3 c_s R \log^3 (n)}{\tilde m}} +  {\frac{\rho c_s R\log n}{\tilde m}}}
 \end{eqnarray}
Similar to the argument before, if $\tilde m = m/j_0 \gg c_7 \mu_3 c_s R \log^3 n$ and $\tilde m = m/j_0 \gg \rho c_s R \log n$, or equivalently $m \gg {\rho c_s} r \log^2 n$, we have $\|\mathcal P_{T^{\perp}}\mathcal Q_{\Omega_i} \bm F_{j-1} \| \leq \left(\frac{1}{2}\right)^{j_0}$. Combining these conditions with \eqref{m1condition}, we need 
\begin{equation}
m \gg ~\max\bkt{c_6\mu_5\,R \log^2 n, c_6 \rho \,c_s\,R \log^2 n,c_7 \mu_3 c_s R \log^3 n,\ {\rho c_s} r \log^2 n}.
\end{equation}
 From \eqref{uvl3}, we see that the constant $\mu_5 $ in Lemma \ref{lemma11} can be chosen as 
\begin{equation}
\label{mu5}
\mu_5 = c_7~ \mu_3~ c_s ~\log^2(n)
\end{equation}
such that  $\omega_{\mbf k} \|\mathcal P_T(\mbf A_{\mbf k})\|_{\mathcal A,2} \leq \frac{\mu_5 R}{n}$. Substituting for $\mu_5$ from \eqref{mu5}, we observe that the dominant term is the first one due to its dependence on $\log^4(n)$. Thus, $\|\mathcal P_{T^{\perp}}\mathcal Q_{\Omega_i} \bm F_{j-1} \|  < 1/2$ if 
\begin{equation}
m> c_6  c_7\, c_s\, \bkt{3\rho} \,R\,\log^4(n).
\end{equation}
\section{Proof of Theorem 5}
\label{noisythmproof}
The proof is similar to that in Appendix L of \cite{chen2014robust}.
The optimization problem \eqref{eq:noisystlr} can be reformulated in the lifted matrix domain as:
\begin{equation}
\label{noisystlr}
\mbox{minimize}_{\bm X}~ \left\|\bm X\right\|_* ~\mbox{subject to}~
\left\| \mathcal A_{\Theta}(\bm X) - \mathcal  A_{\Theta}(\Tf)\right \|_F \leq \delta; ~{\mathcal A}^{\perp}(\mbf X)=\mbf 0.
\end{equation}
Let $\widehat{\mbf X} = \Tf + \mbf H$ be the solution to \eqref{noisystlr}, where $\mbf H = \mathcal{A}_\Omega (\mbf H) + \mathcal{A}_{\Omega^\perp} (\mbf H)$. Using \eqref{noisystlr}, 
\begin{align}
\| \mbf A_\Omega \mbf H\|_F \leq \| \mbf A_\Omega(\widehat{\mbf X} -\Tfn ) \|_F + \| \mbf A_\Omega(\Tf -\Tfn ) \|_F \leq 2 \sqrt{n} \delta  \nonumber
\end{align}
From the argument in \eqref{infeasibility}, we only consider the case,
\begin{align}
\| \mathcal{P}_T \mathcal{A}_{\Omega^\perp} (\mbf H)\|_F \leq 2n \| \mathcal{P}_{T{^\perp}} \mathcal{A}_{\Omega^{\perp}} (\mbf H)\|_F 
\label{case2eqn}
\end{align}  
defined in \ref{case2} Case 2. 
Now, $\|\Tf\|_*  \geq \|\widehat{\mbf X}\|_* \geq \|\Tf +\mathcal{A}_{\Omega^\perp} (\mbf H)\|_* - \|\mathcal{A}_\Omega (\mbf H) \|_* $. From \eqref{one6} and the treatement in Appendix L, \cite{chen2014robust}, we have 
$\|\Tf\|_*  \geq \|\Tf\|_* + \frac{1}{6} \|\mathcal{P}_T \mathcal{A}_{\Omega^\perp} (\mbf H) \|_F - \|\mathcal{A}_{\Omega} (\mbf H) \|_*$, which gives $\|\mathcal{P}_T \mathcal{A}_{\Omega^\perp} (\mbf H) \|_F \leq 6\|\mathcal{A}_{\Omega} (\mbf H) \|_* \leq 6 \sqrt{n} \|\mathcal{A}_{\Omega} (\mbf H) \|_F \leq 12n \delta $. Finally using \eqref{case2eqn}, 
\begin{align}
\| \mbf H \|_F \leq \|\mathcal{A}_{\Omega} (\mbf H) \|_F + \| \mathcal{P}_T \mathcal{A}_{\Omega^\perp} (\mbf H)\|_F + \| \mathcal{P}_{T{^\perp}} \mathcal{A}_{\Omega^{\perp}} (\mbf H)\|_F \leq 2 \sqrt{n} \delta + 12n \delta +24 n^2 \delta \leq c n^2 \delta \nonumber
\end{align}
which concludes the proof.

\end{document}